\pdfsuppresswarningpagegroup=1
\documentclass[lettersize,journal]{IEEEtran}

\usepackage[T1]{fontenc}
\ifCLASSINFOpdf
\else
\fi
\interdisplaylinepenalty=2500
\usepackage[cmintegrals]{newtxmath}
\usepackage{url}
\hyphenation{op-tical net-works semi-conduc-tor}
\usepackage{cite}
\usepackage{amsmath}

\usepackage{amsthm}
\usepackage{mathtools}
\usepackage{algorithmic}
\usepackage[ruled,linesnumbered]{algorithm2e}
\usepackage{graphicx}
\usepackage{textcomp}
\usepackage{xcolor}
\usepackage{bm}
\usepackage{enumitem}
\usepackage{subfigure}
\usepackage{url}
\usepackage{hyperref}
\usepackage{ulem}

\newtheorem{assumption}{Assumption}
\newtheorem{proposition}{Proposition}
\newtheorem{lemma}{Lemma}
\newtheorem{theorem}{Theorem}

\newcommand{\ie}{\textit{i}.\textit{e}.}

\def\prox{\mbox{prox}}
\newcommand{\cmmnt}[1]{}
\newcommand{\tr}[1]{}
\newcommand{\fp}{}

\allowdisplaybreaks[4]


\setlength{\dbltextfloatsep}{5pt plus 2.0pt minus 2.0pt}
\setlength{\dblfloatsep}{5pt plus 2.0pt minus 2.0pt}
\setlength{\floatsep}{5pt plus 2.0pt minus 2.0pt}
\setlength{\intextsep}{5pt plus 2.0pt minus 2.0pt}
\setlength{\abovecaptionskip}{1.5pt plus 2.0pt minus 2.0pt}
\setlength{\belowcaptionskip}{1.5pt plus 2.0pt minus 2.0pt}

\newboolean{showcomments}
\setboolean{showcomments}{true}
\newcommand{\comment}[1]{\ifthenelse{\boolean{showcomments}}
	{\textcolor{red}{(Comment: #1)}}
	{}
}
\newcommand{\red}[1]{\ifthenelse{\boolean{showcomments}}
	{\textcolor{red}{#1}}
	{}
}

\newboolean{showanswers}
\setboolean{showanswers}{true}
\newcommand{\answer}[1]{\ifthenelse{\boolean{showanswers}}
	{\textcolor{blue}{(Answer: #1)}}
	{}
}

\SetKwRepeat{Do}{do}{while}

\begin{document}

\title{PoPeC: PAoI-Centric Task Offloading with Priority over Unreliable Channels\vspace{-0pt}}
\author{
        Nan~Qiao,~\IEEEmembership{Student Member,~IEEE},
        Sheng~Yue,~\IEEEmembership{Member,~IEEE},
        Yongmin~Zhang,~\IEEEmembership{Senior Member,~IEEE}, and Ju Ren,~\IEEEmembership{Senior Member,~IEEE}
    \IEEEcompsocitemizethanks{\IEEEcompsocthanksitem Nan Qiao and Yongmin Zhang are with the School of Computer and Engineering, Central South University, Changsha, Hunan, 410083 China. E-mails: \texttt{\{nan.qiao,zhangyongmin\}@csu.edu.cn}.
    \IEEEcompsocthanksitem Sheng Yue and Ju Ren are with the Department of Computer Science and Technology, BNRist, Tsinghua University, Beijing, 100084 China. E-mails: \texttt{\{shengyue,renju\}@tsinghua.edu.cn}.
    \IEEEcompsocthanksitem Corresponding author: Ju Ren.
    }  
}
        
\maketitle

\begin{abstract}
Freshness-aware computation offloading has garnered increasing attention recently in the realm of edge computing, driven by the need to promptly obtain up-to-date information and mitigate the transmission of outdated data. However, most of the existing works assume that channels are reliable, neglecting the intrinsic fluctuations and uncertainty in wireless communication. More importantly, offloading tasks typically have diverse freshness requirements. Accommodation of various task priorities in the context of freshness-aware task scheduling and resource allocation remains an open and unresolved problem. To overcome these limitations, we cast the freshness-aware task offloading problem as a multi-priority optimization problem, considering the unreliability of wireless channels, prioritized users, and the heterogeneity of edge servers. Building upon the nonlinear fractional programming and the ADMM-Consensus method, we introduce a joint resource allocation and task offloading algorithm to solve the original problem iteratively. In addition, we devise a distributed asynchronous variant for the proposed algorithm to further enhance its communication efficiency. We rigorously analyze the performance and convergence of our approaches and conduct extensive simulations to corroborate their efficacy and superiority over the existing baselines.
\end{abstract}
\begin{IEEEkeywords}
    Distributed Task Offloading, Edge Computing, Channel Allocation, System Freshness.
\end{IEEEkeywords}

\section{Introduction}
\label{sec:introduction}


Edge computing is an attractive computing paradigm in the era of the Artificial Internet of Things (AIoT)~\cite{liu2019survey,zhang2017survey}. By enabling end devices to offload computation-intensive tasks to nearby edge nodes, it is envisioned to provide real-time computing services, thereby facilitating the deployment of a wide range of intelligent applications (e.g., smart homes, smart cities, and autonomous vehicles)~\cite{tran2018joint}. In these applications, it is of paramount importance to promptly access up-to-date information while mitigating the transmission of outdated and worthless data~\cite{zou2021minimizing}. 
To this end, a great number of offloading solutions have been proposed to ensure timely status updates and rapid delivery of tasks from information sources, with the aim of enhancing the overall \textit{information freshness}~\cite{guo2021scheduling,sun2021age,li2021task,bedewy2020optimizing,pan2021minimizing}.



{Recently, the Age of Information (AoI) and Peak Age of Information (PAoI) 
have been recognized as important metrics for evaluating the freshness of information, which characterize the elapsed time since the reception of a user's most recent data packet~\cite{yates2021age}. Based on these metrics, several recent efforts have focused on AoI- or PAoI-centric computation offloading and resource allocation for efficient and concurrent transmission of freshness-sensitive information to the edge servers~\cite{li2021task,liu2021aion,li2021scheduling,moon2015minimax,pan2021minimizing,zou2021minimizing,sun2021age,guo2021scheduling,bedewy2020optimizing,lv2021strategy,yates2018status,bedewy2016optimizing,li2020waiting}.
}



{Unfortunately, there remain several challenges that need to be surmounted to achieve effective freshness-aware computation offloading in practice. First, many existing works assume channel homogeneity~\cite{sun2019closed,bedewy2020optimizing} or perfect knowledge of channel states~\cite{pan2021minimizing,sombabu2020age}, overlooking the dynamics and stochasticity of the limited wireless channels. As a result, such methods easily suffer from package loss or failure due to unreliable communication~\cite{moon2015minimax}. Second, computing resources on edge servers are typically constrained and heterogeneous, necessitating appropriate assignment of heterogeneous computing units to offloading tasks.
More importantly, designing a \textit{prioritized offloading strategy} is crucial because users may have diverse freshness requirements. 
For example, devices with safety-sensitive functions, such as temperature sensors in the Industrial Internet of Things and Automatic Emergency Braking (AEB) systems in autopilots, require prompt offloading and processing to meet their stringent freshness demands. Whereas, most of the existing methods struggle with measuring and handling the situations where offloading tasks possess different priorities.
In light of these considerations, this work seeks to answer the key question: 
\textit{``How to design an efficient task offloading algorithm that can optimize the overall information freshness while effectively handling prioritized users, unreliable channels, and heterogeneous edge servers?''} }

To this end, we cast the freshness-aware task offloading problem as a multi-priority optimization problem, considering unreliability of wireless channels, heterogeneity of edge servers, and interdependence of multiple users with differing priorities. Given the high complexity of directly optimizing this problem, we first examine two special cases from the original problem and exploit nonlinear fractional programming to transform the problems into tractable forms, subsequently developing ADMM-Consensus-based solutions for both cases. Built upon these solutions, an iterative algorithm is devised to resolve the original problem effectively. We further discuss a distributed asynchronous variant of the proposed algorithm, capable of alleviating the overhead caused by unreliable iterations during the offloading policy acquisition process.
Theoretical analysis is carried out to establish the convergence property of the proposed algorithm and demonstrate the improvement in performance brought by the multi-priority mechanism.

In a nutshell, our main contributions are summarized below.
\begin{itemize}
    \item 
    We consider an M/G/1 offloading system and derive the precise Peak Age of Information (PAoI) expression for each user to characterize their information freshness. Then, we formulate the freshness-aware multi-priority task offloading problem under heterogeneous edge servers and unreliable channels.
    \item Based on nonlinear fractional programming and the ADMM-Consensus method, we propose a joint resource allocation, service migration, and task offloading algorithm to solve the original problem effectively. We further devise a distributed asynchronous variant for the proposed algorithm to enhance its communication efficiency.
    \item 
    We establish theoretical guarantees for the proposed algorithms, in terms of performance and convergence. We conduct extensive simulations, and the results show that our algorithm can significantly improve the performance over the existing methods.
\end{itemize}

The remainder of the paper is organized as follows: Section \ref{sec:RelatedWork} briefly reviews the related work. Section \ref{sec:system_model} introduces the system model, including relevant definitions and models. In Section \ref{sec:PoPeC}, we describe some special cases and propose algorithms to tackle the PoPeC problem. Section \ref{sec:Discussion} proposes an asynchronous parallel algorithm to improve communication efficiency and discusses the benefits of the multi-class priority mechanism. 
Section \ref{sec:simulation} presents the simulation results, followed by a conclusion drawn in Section \ref{sec:conclusion}.

\section{Related Work}\label{sec:RelatedWork}

{In the realm of edge computing, a multitude of research efforts have emerged to mitigate response delays by means of task offloading strategies~\cite{zhang2021joint,halder2022dynamic,saleem2020mobility,al2020task,wang2020multi,xujie2023mobility,chen2023qos,bai2023towards,chen2023joint,chen2023TECS,liu2023joint}. 
On the one hand, exploring the characteristics of channels is integral to this field. 
The reliability of channels has been investigated in scenarios such as real-time monitoring systems~\cite{abd2020reinforcement,hsu2019scheduling}. However, these approaches might not adequately address the challenges posed by unstable channel conditions stemming from factors like antenna beamforming and fading~\cite{karabulut2018spatial,zeng2015optimized}.
Moreover, while earlier studies assumed either homogeneous channels or the offloading of two separate channels with random arrivals~\cite{sun2019closed,pan2021minimizing,bedewy2020optimizing}, such assumptions fall short when dealing with the complexities of heterogeneous unreliable channels. 
On the other hand, one notable departure in our work is the consideration of the performance of synchronous parallel iterative algorithms in the context of unreliable channels. While many studies advocate for distributed methods to enhance efficiency, they often neglect the substantial communication costs of synchronous parallel algorithms on unreliable channels~\cite{sun2021age,zou2021minimizing,guo2021scheduling}. 
In contrast, we explore the potential of an asynchronous parallel algorithm to mitigate communication overheads and achieve the same performance.}

{Nonetheless, the sole emphasis on delay reduction may not ensure the necessary freshness of information for users, as highlighted in the works of Kosta et al.~\cite{kosta2017age} and Yates et al.~\cite{yates2021age}. This has led to the development of freshness-aware methodologies, leveraging metrics such as AoI and Peak Age of Information (PAoI)~\cite{liu2021aion,li2021scheduling,zou2021optimizing,zou2021minimizing,guo2021scheduling,sun2021age,li2021task,bedewy2020optimizing,pan2021minimizing}.
Therefore, recent advancements have delved into the customization of computation offloading strategies to accommodate distinct user types and preferences~\cite{pan2021minimizing,guo2021scheduling,zou2021minimizing,sun2021age,abd2022age,huang2015optimizing,liu2021anti,maatouk2020status,maatouk2019age,kaul2018age,xu2020peak}. 
Zou et al.~\cite{zou2021minimizing} introduce a novel partial-index approach that accurately characterizes indexing issues in heterogeneous multi-user multi-channel systems. Their SWIM framework optimizes resource allocation using maximum weights. Sun et al.~\cite{sun2021age} propose an age-aware scheduling strategy rooted in Lyapunov optimization to cater to diverse users, providing bounds on age that comply with throughput constraints. 
{
However, many of these contributions overlook the importance of user priorities, which is vital for practical prioritized systems~\cite{pan2021minimizing,guo2021scheduling,zou2021minimizing,sun2021age,abd2022age}.}}

{In contrast to prior works, which may focus on specific aspects such as channel types, reliability, or algorithmic choices, our research amalgamates these elements to address the intricate interplay of AoI optimization, edge computing, and heterogeneous channels. By investigating the challenges unique to these intersections, we contribute to a more comprehensive understanding of real-time computation offloading in dynamic environments.}

\section{System Model}\label{sec:system_model}

In this section, we introduce the system model, including the Mobile Edge Computing (MEC) architecture, the freshness model, and the problem formulation.

\begin{figure}[ht!]
    \vspace{-0cm} 
    \setlength{\abovecaptionskip}{-0cm} 
    \setlength{\belowcaptionskip}{-0cm} 
    \centering
	\includegraphics[width=0.5\textwidth]{./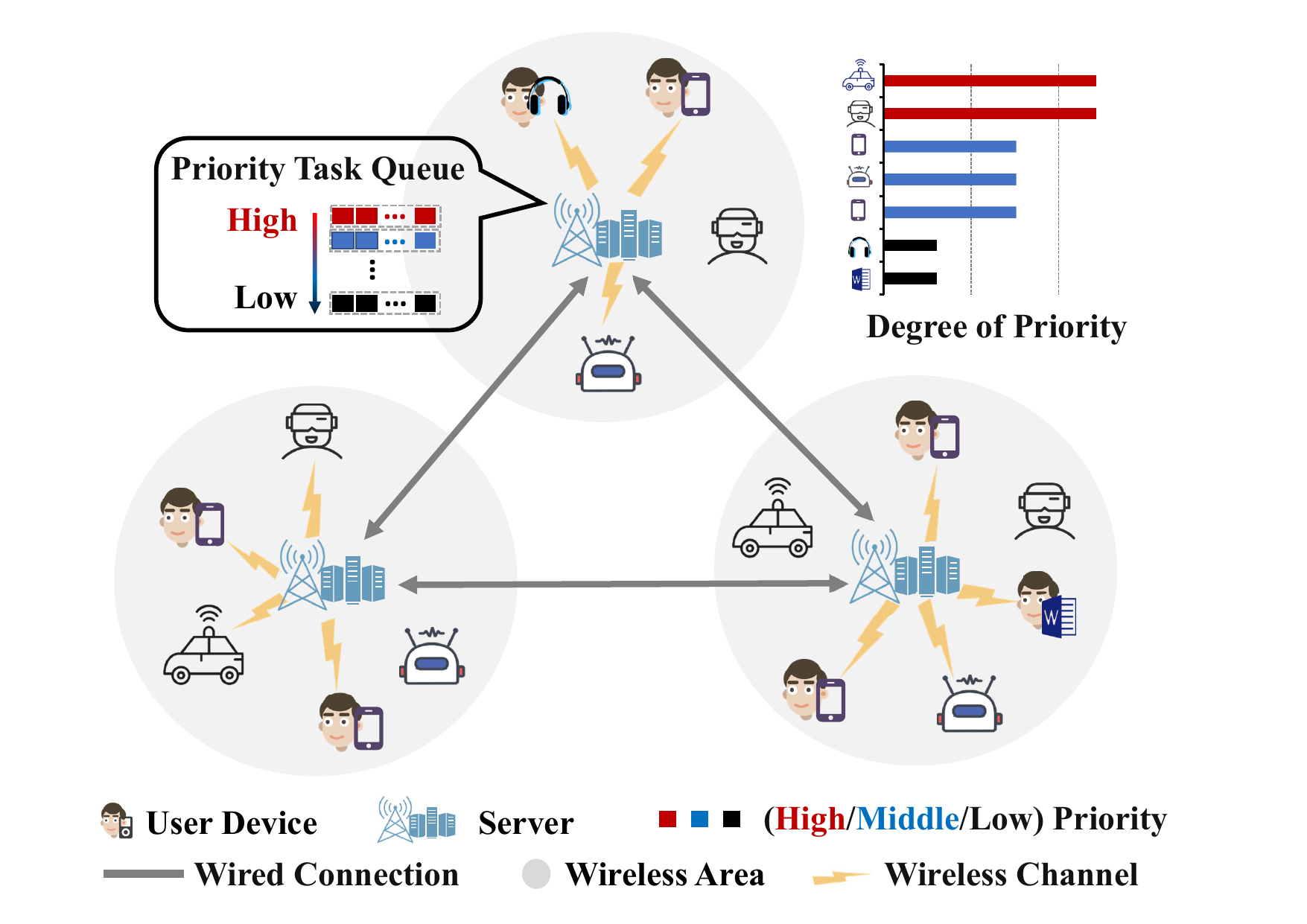}
    \vspace{-0pt}
	\caption{MEC Architecture with Unreliable Channles and Priority Users.}
	\label{fig:system-model}
\end{figure}
\subsection{MEC Architecture}
{As shown in Fig.\ref{fig:system-model}, we consider a wireless network system consisting of a set of $M$ mobile edge servers (denoted by $\mathcal{M}$). Each server $m\in\mathcal{M}$ serves a set of $N_m$ users ($\mathcal{N}_m$), and each user $n\in\mathcal{N}$ can offload their tasks to the corresponding server $m_n$ via a limited number of wireless channels (denoted by $\mathcal{C} \doteq \{1,\dots, C\}$). Considering the effects of frequency-selective fading~\cite{zou2021minimizing}, we define $p_{n,c}$ as the probability of a successful transmission from user $n$ through channel $c$ to server $m_n$. In this context, users have (potentially) varying offloading priorities, represented as $\Delta\doteq\{1,2,\dots, \delta_\mathrm{max}\}$. Let $\mathcal{N}^{\delta}_m$ denote the set of users that are prioritized as $\delta\in\Delta$ and served by server $m$, which satisfies $\bigcup_{m\in\mathcal{M}}\bigcup_{\delta\in\Delta} \mathcal{N}^{\delta}_m = \bigcup_{m\in\mathcal{M}} \mathcal{N}_m = \mathcal{N}$. We represent $\mathcal{N}^{\delta} \doteq \bigcup_{m\in\mathcal{M}} \mathcal{N}^{\delta}_m$ as the set of users with priority $\delta$ across all servers. To simplify notation, we denote the priority of user $n$ as $\delta(n)$ and the set of users with the same or higher priority as $\Delta(\delta(n))$. 
Particularly, we distinguish user-side and server-side variables using the superscripts `u' and `s', respectively. For clarity, we summarize the key notations in Table \ref{table:notations}.
\begin{table}[ht]
	\caption{Key Notations}
	\label{table:notations}
	\vspace{-0.5em}
	\centering
	\renewcommand\arraystretch{1.25}
        \resizebox{\columnwidth}{!}{
		\begin{tabular}{l|l}
			\hline
			\textbf{Notation}                          & \textbf{Definition} \\ \hline
			$n$, $c$, $m$, $\delta$                    & indexes of users, channels, servers, and priority\\
			$u$, $s$                                   & superscripts of user-side and server-side variables\\
			$\lambda_{n}$, $\mu_{n}$                   & \begin{tabular}[c]{@{}l@{}}expected arrival and service rates of $n$'s offloaded tasks  \end{tabular} \\
			$p_{n,c}$                                  & \begin{tabular}[c]{@{}l@{}}probability of a successful update from user $n$ to channel $c$ \end{tabular} \\
			$y_m$                                      & optimal migration decision of server $m$ \\
			$\eta^u_{n,c}$                             & \begin{tabular}[c]{@{}l@{}}probability of tasks offloading from user $n$ to channel $c$\end{tabular} \\
			$\eta^s_{m,m'}$                            & proportion of tasks delivered from server $m$ to server $m'$ \\
			$p^\mathrm{syn}$, $p^\mathrm{asyn}$                    & \begin{tabular}[c]{@{}l@{}}probability of a successful (a)synchronous update\end{tabular}  \\
			$\Gamma^\mathrm{syn}$, $\Gamma^\mathrm{asyn}$          & \begin{tabular}[c]{@{}l@{}}number of iterations of (a)synchronous algorithms\end{tabular}   \\	
			$T_{n}$, $I_{n}$, $W_{n}$, $Y_{n}$         & \begin{tabular}[c]{@{}l@{}}user $n$'s transmission time, arrival
interval, waiting time,\\ and processing time\end{tabular}  \\
            $A_{n}$                                    & PAoI of user $n$'s message  \\
			$\epsilon^{\mathrm{ck}}$, $\epsilon^{\mathrm{ac}}$           & \begin{tabular}[c]{@{}l@{}}stop criteria for iterations  CheckPointing Algorithm (NFP) \\and ADMM-Consensus Algorithm (NAC/ANAC)\end{tabular}   \\ \hline
		\end{tabular}
    }
\end{table}
}




\subsubsection{Task Offloading} 

The computational tasks of each user arrive according to a Poisson process with an expected arrival rate of $\lambda_{n}$. We denote the probability of user $n$ accessing channel $c$ as $\eta^u_{n,c}$, which satisfies
\begin{equation}
\label{eq:c1}
    0 \le \eta^u_{n,c} \le 1 , \quad \forall n \in \mathcal{N},~c \in \mathcal{C}. 
\end{equation}
{Clearly, the number of tasks offloaded by user $n$ through channel $c$ also adheres to a Poisson progress with an expected value of $\eta^u_{n,c}\lambda_{n}$. Given the fact that the number of offloaded tasks by user $n$ can not exceed that generated by the same user, we have
\begin{equation}
\label{eq:c2}
    \sum_{c\in\mathcal{C}}\eta^u_{n,c} \le 1 , \quad \forall n \in \mathcal{N}.
\end{equation}
The inequality in Eq.~\eqref{eq:c2} holds from the fact that users may discard outdated tasks due to congestion in the channel, or a need to prioritize more urgent tasks~\cite{yue2021todg}.
}
Edge servers are connected to each other via a wired network and can collaborate to execute offloaded tasks by assigning a portion of tasks from one server to another. We represent $\eta^s_{m, m'}$ as the proportion of tasks delivered from server $m$ to server $m'$, which satisfies
\begin{align}
    \label{eq:c3a}
    0 \leq \eta^s_{m,m'} \leq 1, &\quad \forall m, m' \in \mathcal{M},\\
    \label{eq:c5a}
    \sum_{m' \in \mathcal{M}} \eta^s_{m,m'} = 1, &\quad \forall m \in \mathcal{M}.
\end{align}
Accordingly, the number of computational tasks with priority $\delta$ delivered from server $m$ to $m'$ can be expressed as $\eta^s_{m,m'} \lambda^{s}_{\delta,m}$, where 
\begin{equation}
\label{eq:lambda-s}
\lambda^{s}_{\delta,m}\doteq\sum_{n\in\mathcal{N}^{\delta}_m}\sum_{c\in\mathcal{C}} p_{n,c} \lambda_{n,c}.
\end{equation}
$\lambda^{s}_{\delta,m}$ denotes the total number of received tasks prioritized as $\delta$ at server $m$. Since the total number of tasks arrived at each server cannot exceed its maximum capacity (denoted by $\lambda^{s,\mathrm{max}}_m$), we have
\begin{equation}
\label{eq:c4a}
    \sum_{\delta \in \Delta}\sum_{m' \in \mathcal{M}} \eta^s_{m',m} \lambda^{s}_{\delta,m'} \leq \lambda^{s,\mathrm{max}}_m, \quad \forall m \in \mathcal{M}. 
\end{equation}

\subsubsection{Transmission Model}
{{
We assume that the transmission process of the multiple access channel follows the M/M/1 queuing model, as in~\cite{wang2020latency,ren2022efficient}.\footnote{Task arrivals follow a Poisson process, and channel transmission times are exponential. It's like an M/M/1 queue, where tasks arrive randomly, wait in a queue, and are served one at a time, with exponential service times.}$^,$\footnote{The M/M/1 wireless transmission queue model can be simplified to the M/G/1 model when their second moments are equal, making them equivalent.}
We denote the task arrival rate of channel $c$ as $\lambda_c\doteq\sum_{n\in\mathcal{N}}\eta^u_{n,c}\lambda_{n}$, \ie, the sum of the offloading arrival rates of all users accessing this channel. 
Meanwhile, $r_c$ stands for the communication for transmission rate through channel $c$, and $S$ denotes the package size.
} As the currently achievable wireless communication rate approaches the Shannon limit~\cite{ghanem2020resource}, the transmission rate can be expressed as $r_c = B_c \log \bigl(R_c + 1 \bigr)$, where $B_c$ and $R_c$ are the bandwidth and the signal-to-noise ratio of channel $c$, respectively. Thus, if user $n$ accesses channel $c$, the expected transmission time can be expressed by
\begin{equation}  
\label{eq:TransmissionDelay-Channel}
    T^\mathrm{tr}_{n,c} = \left\{
    \begin{array}{ll}
     \frac{1}{r_c/S - \lambda_c} + t_{n,c}, &\quad \frac{r_c}{S} - \lambda_c > 0,\\
     \infty , &\quad \mathrm{otherwise},
    \end{array}
    \right.	
\end{equation}
where $\frac{1}{r_c/S - \lambda_c}$ is the channel contention time, and {$t_{n,c}$ is the constant end-to-end propagation delay of the offloading tasks.\footnote{Since $d_{n,c}$ does not change significantly relative to the speed of light $v_{c}$, $t_{n,c} = \frac{d_{n,c}}{v_{c}}$ can be regarded as a very small quantity or a constant.}
The propagation delay, in a short time slice, is calculated by dividing the distance between the user and the server ($d_{n,c}$) by the speed at which the wireless signal propagates through the air ($v_{c}$), \ie, $t_{n,c} = \frac{d_{n,c}}{v_{c}}$ \cite{callegati1999packet}.
As in~\cite{mao2017survey}, since MEC servers are linked via wired core networks, we assume that the end-to-end propagation delay between servers $m$ and $m'$ is constant, denoted by $t^\mathrm{tr}_{m, m'}$. 
}
}

 

\subsection{Limited Capacity Model with Confidence Evaluation\label{subsec:capacity_model}}
{To meet the requirements of real-time response, we consider the limited channel capacity/computation models with confidence, which enable real-time estimation and the control of system stability~\cite{guo2021scheduling,zhou2018channel}. 
}
\subsubsection{Channel Capacity}

Denote the capacity of channel $c$ as $M_c^\mathrm{max}$. 
Based on the properties of the cumulative distribution function of the Poisson distribution~\cite{patil2012comparison}, we introduce the following constraint to ensure that each channel is conflict-free with confidence level $1-\alpha$~\cite{zhou2018channel,guo2021scheduling}:
\begin{equation}
    \frac{z^2_1}{2} +  z_2\cdot\left(\sum_{n\in\mathcal{N}} \eta^u_{n,c}\lambda_{n} + \frac{z^2_2}{4}\right)^\frac{1}{2} + \sum_{n\in\mathcal{N}} \eta^u_{n,c}\lambda_{n}  \le M_c^\mathrm{max}.
    \label{eq:c7a}
\end{equation}
Here, $z_1 \doteq (\alpha/2 - \lambda^{\mathrm{chl}}_c)/\sqrt{\lambda^{\mathrm{chl}}_c}$ and $z_2 \doteq (1-\alpha/2 - \lambda^{\mathrm{chl}}_c)/\sqrt{\lambda^{\mathrm{chl}}_c}$ are two statistics standardized from a normal distribution,
where $\lambda^{\mathrm{chl}}_c \doteq \sum_{n\in\mathcal{N}} \eta^u_{n,c}\lambda_{n}$ represents the total number of tasks transmitted through channel $c$. 

\begin{figure*}[t]
\begin{align}
    \mathbb{E}[W^{p}_{n}] =\,& \frac{\frac{1}{2}\sum_{\delta\in\Delta}\sum_{n'\in\mathcal{N}^{\delta}}\sum_{c\in\mathcal{C}}p_{n',c}\eta^u_{n',c}\lambda_{n'}\nu_{n'}}{\left(1-\sum_{\delta\in\Delta(\delta(n))}\sum_{n'\in\mathcal{N}^{\delta}}\sum_{c\in\mathcal{C}}p_{n',c}\frac{\eta^u_{n',c}\lambda_{n'}}{\mu_{n'}}\right)\left(1-\sum_{\delta\in\Delta(\delta(n)-1)}\sum_{n'\in\mathcal{N}^{\delta}}\sum_{c\in\mathcal{C}}p_{n',c}\frac{\lambda_{n',c}}{\mu_{n'}}\right)}\label{eq:paoi-priority-waitingtime-expectation}\\\nonumber\\
    \pi_{n,m}(\bm\eta^s) \doteq\,& \frac{\sum_{\delta\in\Delta} \eta^s_{m_n,m} \lambda^{s}_{\delta,m_n}}{\sum_{\delta\in\Delta} \sum_{m'\in\mathcal{M}} \eta^s_{m_n,m'} \lambda^{s}_{\delta,m_n}} \cdot
    \left(
    t^\mathrm{tr}_{m_n,m} + \frac{1}{\mu_{n,m}}+\frac{1}{1-\sum_{\delta\in{\Delta(\delta(n))}}\sum_{m'\in\mathcal{M}} \eta^s_{m',m} \lambda^{s}_{\delta,m'} \frac{1}{\mu_{n,m}}}\right.\nonumber\\
    &\left.\cdot\frac{\frac{1}{2}\sum_{\delta\in\Delta} \sum_{m'\in\mathcal{M}} \eta^s_{m',m} \lambda^{s}_{\delta,m'} \nu_{n,m} }{1-\sum_{\delta\in{\Delta(\delta(n)-1)}}\sum_{m'\in\mathcal{M}} \eta^s_{m',m} \lambda^{s}_{\delta,m'} \frac{1}{\mu_{n,m}}}\right)
\label{eq:multi-server-DataMigration}
\end{align}
\noindent\makebox[\linewidth]{\rule{\textwidth}{0.5pt}}
\end{figure*}

\subsubsection{Computation Capacity} 
{Due to the fluctuation in servers' available computing capacities, we assume that server $m$'s processing time for user $n$'s tasks follows a general distribution with mean $1/\mu_{n,m}$ and second moment $\nu_{n,m}$~\cite{huang2015optimizing}.
\footnote{With task arrivals following a Poisson distribution, we consider an M/G/1 queueing system for server computations that adhere to an FCFS manner.}
To simplify notations, we use $1/\mu_n$ and $\nu_n$ to represent $1/\mu_{n,m_n}$ and $\nu_{n,m_n}$, respectively.\footnote{$1/\mu_{n,m_n}$ and $\nu_{n,m_n}$ represent the mean and second moment, respectively, of the processing time for user $n$'s tasks on its local server $m_n$.}}
To ensure each task is executed with a confidence level of $1 - \beta$, we impose the following constraint:
\begin{equation}
    \label{eq:c8a}
\begin{split}
     \frac{z^2_{3}}{2} &+ z_{4}\cdot\left(\sum_{n\in\mathcal{N}} \sum_{c\in\mathcal{C}} p_{n,c}\cdot \frac{\eta^u_{n,c}\lambda_{n}}{\mu_n} +\frac{ z^2_{4}}{4}\right)^\frac{1}{2} \\    &+\sum_{n\in\mathcal{N}} \sum_{c\in\mathcal{C}} p_{n,c}\cdot \frac{\eta^u_{n,c}\lambda_{n}}{\mu_n}  <1,
\end{split}
\end{equation}
\noindent where $z_3 \doteq (\beta/2 - \lambda^{\mathrm{comp}}_c)/\sqrt{\lambda^{\mathrm{comp}}_c}$
and 
$z_4 \doteq (1-\beta/2 - \lambda^{\mathrm{comp}}_c)/\sqrt{\lambda^{\mathrm{comp}}_c}$ are two statistics standardized from a normal distribution, with $\lambda^{\mathrm{comp}} \doteq \sum_{n\in\mathcal{N}} \sum_{c\in\mathcal{C}} p_{n,c} \eta^u_{n,c}\lambda_{n}/\mu_n$ the total number of tasks arriving at server $m(n)$. 



\subsection{Freshness Model and Problem Formulation}
We characterize the freshness of information using the PAoI metric. Specifically, the PAoI of user $n$'s message, denoted by $A_{n}$, is determined by four key factors: transmission time $T_{n}$, arrival interval $I_{n}$, waiting time $W_{n}$, and processing time $Y_{n}$~\cite{zou2021minimizing}, \ie, 
\begin{align}
    \label{eq:paoi-expectation2}
    \mathbb{E}[A_{n}] &= \mathbb{E}[T_{n}]+\mathbb{E}[I_{n}]+\mathbb{E}[W_{n}]+\mathbb{E}[Y_{n}],
\end{align}
with $\mathbb{E}[I_{n}] = 1/\sum_{c\in\mathcal{C}}p_{n,c}\eta^u_{n,c}\lambda_{n}$ and $\mathbb{E}[T_{n}] = \sum_{c\in\mathcal{C}} \eta^u_{n,c} T^\mathrm{tr}_{n,c}$. 
The expressions of $\mathbb{E}[W_{n}]$ and $\mathbb{E}[Y_{n}]$ are contingent upon the way in which user $n$'s tasks are executed. To derive the precise expressions, we use binary variable $y_{m}$ to denote the migration decision of server $m$, \ie, 
\begin{equation}
\label{eq:c-y}
    y_m \in \{0,1\}, \quad \forall m \in \mathcal{M}.
\end{equation}
{
If $y_m$ comes to 0, server $m$ executes its tasks locally; 
otherwise, it resorts to the other servers for collaboration.
In a slightly abusive notation, we use the superscript `p' to represent the case when $y_m=0$ and the superscript `s' to represent the case when $y_m=1$.
Thus, when $y_{m_n}=0$, $\mathbb{E}[Y^{p}_{n}]=1/\mu_n$, and $\mathbb{E}[W^{p}_{n}]$ can be expressed in Eq.~\eqref{eq:paoi-priority-waitingtime-expectation} from the M/G/1 queuing model with FCFS \tr{(see Appendix \ref{app:pro:paoi_mg1_pri} of technical report~\cite{qiao2023popec} for more details).}\fp{(see Appendix A of technical report~\cite{qiao2023popec} for more details).}
When $y_{m_n}=1$, according to Little's Law\footnote{Little's law in queuing theory states that the average number of customers in a stationary system equals the product of arrival rate and waiting time~\cite{kim2013statistical}.}, we can obtain
\begin{equation}
    \mathbb{E}[W^s_{n}]+\mathbb{E}[Y^s_{n}]=\sum_{m\in\mathcal{M}} \pi_{n,m}(\bm\eta^s),
\end{equation}
where $\bm\eta^s\doteq\{\eta^s_{m, m'}\}_{m, m'\in \mathcal{M}}$, and $\pi_{n,m}(\bm\eta^s)$ is defined in Eq.~\eqref{eq:multi-server-DataMigration}. Based on this, the expected PAoI of user $n$ can be specified as follows:
\begin{align}
\label{eq:pomis}
    \mathbb{E}[A_{n}]
    =  (1 - y_{m_n})\cdot\mathbb{E}[A_{n}|y_{m_n}=0] + y_{m_n}\cdot\mathbb{E}[A_{n}|y_{m_n}=1].
\end{align}
}

Given the aforementioned constraints, our objective is to find the optimal offloading decision for users and the collaboration decision for servers to minimize the average expected PAoI across all users. 
{
We cast the problem as the \textit{\underline{P}AoI-Centric Task \underline{O}ffloading with \underline{P}riority over Unr\underline{e}liable \underline{C}hannels (PoPeC)}:
\begin{equation}
\label{eq:P}
\begin{split}
    &\textbf{(PoPeC)}~\mathop{\min}_{\bm\eta^u, \bm\eta^s, \bm y}
    \frac{1}{N}\sum_{n\in\mathcal{N}} \mathbb{E}[A_n], \\
    &\qquad\quad\text{s.t.}~ \eqref{eq:c1}\textrm{--}\eqref{eq:c4a},
    \eqref{eq:c7a},\eqref{eq:c8a},\eqref{eq:c-y},
\end{split}
\end{equation}
with $\bm\eta^u\doteq\{\eta^u_{n,c}\}_{n\in \mathcal{N}, c\in \mathcal{C}}$ the offloading decision, and $\bm\eta^s\doteq\{\eta^s_{m, m'}\}_{m, m'\in \mathcal{M}}$ along with $\bm{y}\doteq\{y_m\}_{m\in\mathcal{M}}$ the collaboration decision.
}




\section{PoPeC: PAoI-Centric Task Offloading with Priority over Unreliable Channels}\label{sec:PoPeC}

In light of the challenges in directly addressing the offloading problem, this section begins by examining two special cases -- priority-free and multi-priority task scheduling with no server collaboration. Building on the insights gained from these solutions, we subsequently devise an effective and efficient algorithm for optimizing the original problem.

\subsection{Priority-Free Task Scheduling
}{\label{subsec:POSIP}}
\subsubsection{Problem Transformation}

We first focus on a special case in which all tasks are of the same type and do not possess any priority distinctions, \ie, the problem of \textit{offloading to the local servers}. To solve this problem, the first step is to derive the expression of the expected PAoI for each user in this context.
Recall that each user's task arrivals follow a Poisson distribution, and task execution times follow a general distribution. 
{
We re-evaluate the waiting time based on Eq.~\eqref{eq:paoi-priority-waitingtime-expectation} as follows: 
\begin{equation}
    \mathbb{E}[W_{n}]=\frac{\sum_{n\in\mathcal{N}}\sum_{c\in\mathcal{C}}p_{n,c}\eta^u_{n,c}\lambda_{n}\nu_n}{2\cdot\left(1-\sum_{n\in\mathcal{N}}\sum_{c\in\mathcal{C}}p_{n,c}\cdot\frac{\eta^u_{n,c}\lambda_{n}}{\mu_n}\right)}.
\end{equation}
}
We introduce an auxiliary variable, $\hat t^\mathrm{tr}_n>0$, to represent the upper bound of user $n$'s transmission time through all the available channels, \ie
\begin{equation}
\label{eq:c6}
    \hat t^\mathrm{tr}_n \ge  T^\mathrm{tr}_{n,c} , \quad \forall c \in \mathcal{C}. 
\end{equation}

Akin to~\cite{he2016optimal,du2017computation}, we transform the original problem into the following tractable form, \ie, optimizing a tight upper bound of user $n$'s expected PAoI: 
\begin{equation}
\label{eq:P1}
\begin{aligned}
    \textbf{(P1)}~~&\mathop{\min}_{\hat t^\mathrm{tr}_n,\bm\eta^u} \frac{1}{N}\sum_{n\in\mathcal{N}} f_n(\hat t^\mathrm{tr}_n,\bm\eta^u), \\
    \text{s.t.}&~ \eqref{eq:c1},\eqref{eq:c2},\eqref{eq:c7a},\eqref{eq:c8a},\eqref{eq:c6}.
\end{aligned}
\end{equation}
$f_n(\hat t^\mathrm{tr}_n,\bm\eta^u)$ is defined as:
\begin{align}
\label{eq:paoi-expectation1.1}
    f_n(\hat t^\mathrm{tr}_n,\bm\eta^u)
    \doteq\,&\hat t^\mathrm{tr} + \frac{1}{\Psi_n(\bm\eta^u)}
	    + \frac{\upsilon_n(\bm\eta^u)}{\phi_n(\bm\eta^u)}
	    + \frac{1}{\mu},
\end{align}
where we represent $\upsilon_n(\bm\eta^u) \doteq \frac{1}{2}\sum_{n'\in\mathcal{N}}\sum_{c\in\mathcal{C}}p_{n',c}\cdot\eta^u_{n',c}\lambda_{n'}\nu$, $\phi_n(\bm\eta^u) \doteq 1-\sum_{n'\in\mathcal{N}}\sum_{c\in\mathcal{C}}p_{n',c}\cdot\eta^u_{n',c}\lambda_{n'}/\mu$, 
    and $\Psi_n(\bm\eta^u)\doteq\sum_{c\in\mathcal{C}}p_{n,c}\eta^u_{n,c}\lambda_{n}$.

In line with the principles of min-max optimization, we can enhance the overall communication efficiency across all channels by focusing our efforts on minimizing $\hat{t}^\mathrm{tr}_n$ rather than $\mathbb{E}[T_n]$.  This strategic shift not only aligns with our objective of improving performance but also bolsters the resilience and robustness of our proposed solution.

Eq.~\eqref{eq:c7a} and Eq.~\eqref{eq:c8a} suggest that \textbf{P1} is a non-convex problem. To identify well-structured solutions, we transform these constraints into equivalent convex ones.

\begin{lemma}
\label{lemma:P1-1_convex_problem}
In Problem \textbf{P1}, Constraints \eqref{eq:c7a} and \eqref{eq:c8a} are equivalent to following Constraints \eqref{eq:c7b} and \eqref{eq:c8b} respectively:
\begin{align}
    &\sum_{n\in\mathcal{N}} \eta^u_{n,c}\lambda_{n} \le M_c^\mathrm{max} + \frac{z^2_{2}}{2} - \frac{z^2_{1}}{2} - z_{2}\cdot\left(M_c^\mathrm{max} + \frac{z^2_{2}}{2} - \frac{z^2_{1}}{2}\right)^\frac{1}{2},
    \label{eq:c7b}\\
    &\sum_{n\in\mathcal{N}} \sum_{c\in\mathcal{C}} p_{n,c} \frac{\eta^u_{n,c}\lambda_{n}}{\mu_n} \leq 1 + \frac{z^2_{4}}{2} - \frac{z^2_{3}}{2} - z_{4}\cdot\left(1 + \frac{z^2_{4}}{2} - \frac{z^2_{3}}{2}\right).
    \label{eq:c8b}
\end{align}
\end{lemma}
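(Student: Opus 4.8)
The plan is to handle \eqref{eq:c7a}~$\Leftrightarrow$~\eqref{eq:c7b} in full and then obtain \eqref{eq:c8a}~$\Leftrightarrow$~\eqref{eq:c8b} by the relabeling $M_c^\mathit{max}\mapsto 1$, $\lambda^\mathit{chl}_c\mapsto\lambda^\mathit{comp}$, $(z_1,z_2)\mapsto(z_3,z_4)$, since the two constraints share the same algebraic shape. Throughout I treat $z_1,z_2$ (resp.\ $z_3,z_4$) as the fixed coefficients they denote and write $x\doteq\sum_{n\in\mathcal{N}}\eta^u_{n,c}\lambda_{n}=\lambda^\mathit{chl}_c\ge 0$. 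Setting $K\doteq M_c^\mathit{max}-\tfrac{z_1^2}{2}$ and $L\doteq K+\tfrac{z_2^2}{2}=M_c^\mathit{max}+\tfrac{z_2^2}{2}-\tfrac{z_1^2}{2}$, constraint \eqref{eq:c7a} reads exactly $z_2\sqrt{x+z_2^2/4}\le K-x$, and \eqref{eq:c7b} reads $x\le L-z_2\sqrt{L}$.

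First I would isolate the radical and record the sign bookkeeping needed to make squaring reversible. In the operating regime of interest the standardized statistic $z_2$ is positive, so the left-hand side is nonnegative; \eqref{eq:c7a} then forces the necessary condition $x\le K$, hence $x<L$ and $L>0$. Under $x\le K$ both sides are nonnegative, so squaring is an equivalence: \eqref{eq:c7a} $\Leftrightarrow$ $z_2^2\!\left(x+\tfrac{z_2^2}{4}\right)\le(K-x)^2$, which upon expansion and substitution of $K=L-\tfrac{z_2^2}{2}$ collapses to the clean quadratic inequality $(x-L)^2\ge z_2^2 L$.

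The decisive step is selecting the correct branch of this quadratic. Since $x<L$ we have $x-L<0$, so $(x-L)^2\ge z_2^2L$ is equivalent to $L-x\ge z_2\sqrt{L}$, i.e.\ to $x\le L-z_2\sqrt{L}$, which is precisely \eqref{eq:c7b}; the spurious root $x\ge L+z_2\sqrt{L}$ is ruled out exactly by the condition $x\le K<L$ obtained above. For the converse one reverses the chain: \eqref{eq:c7b} gives $0\le L-x$ and $(L-x)^2\ge z_2^2L$, which rearranges back to $z_2^2\!\left(x+\tfrac{z_2^2}{4}\right)\le(K-x)^2$, and a square root may be taken because $K-x=(L-x)-\tfrac{z_2^2}{2}\ge z_2\bigl(\sqrt{L}-\tfrac{z_2}{2}\bigr)\ge 0$, a mild inequality guaranteed once the channel budget makes $L\ge z_2^2/4$. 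Replaying the whole argument verbatim with $(M_c^\mathit{max},\lambda^\mathit{chl}_c,z_1,z_2)$ replaced by $(1,\lambda^\mathit{comp},z_3,z_4)$ yields \eqref{eq:c8a}~$\Leftrightarrow$~\eqref{eq:c8b}, the strict inequality in \eqref{eq:c8a} being immaterial up to the boundary of the feasible region.

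I expect the main obstacle to be this sign/branch analysis rather than the algebra itself: one must pin down that $z_2>0$ and that \eqref{eq:c7a} implies $x\le K$ in order both to square legitimately in each direction and to discard the extraneous root, so that \eqref{eq:c7a} and \eqref{eq:c7b} cut out the \emph{same} feasible set rather than one merely contained in the other. A secondary point to check carefully is the mild budget condition $L\ge z_2^2/4$ (and its analogue for \eqref{eq:c8b}) that licenses the reverse square root.
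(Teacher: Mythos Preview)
Your proposal is correct and follows essentially the same route as the paper: isolate the radical in \eqref{eq:c7a}, square (the paper does this via the perfect-square identity $(z_2/2+\sqrt{x+z_2^2/4})^2$, you via direct squaring of $z_2\sqrt{x+z_2^2/4}\le K-x$), and rearrange to \eqref{eq:c7b}, then repeat verbatim for \eqref{eq:c8a}/\eqref{eq:c8b}. Your explicit sign and branch bookkeeping ($z_2>0$, $x\le K$, discarding the spurious root $x\ge L+z_2\sqrt{L}$) is in fact more careful than the paper's appendix, which simply chains the algebraic identities without discussing why squaring is reversible or which square root to take.
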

\begin{proof}
  \tr{The detailed proof can be found in Appendix \ref{app:lemma:P1-1_convex_problem} of technical report~\cite{qiao2023popec}.}\fp{The detailed proof can be found in Appendix B of technical report~\cite{qiao2023popec}.}
\end{proof}

{According to Lemma \ref{lemma:P1-1_convex_problem}, Constraints \eqref{eq:c7b} and \eqref{eq:c8b} restrict the decision variables to convex sets, and hence, Problem \textbf{P1} can be rewritten as:
\begin{equation}
\label{eq:P1-1}
\begin{aligned}
    \textbf{(P1-1)}~~&\mathop{\min}_{\hat t^\mathrm{tr}_n,\bm\eta^u} \frac{1}{N}\sum_{n\in\mathcal{N}} f_n(\hat t^\mathrm{tr}_n,\bm\eta^u), \\
    \text{s.t.}&~ \eqref{eq:c1},\eqref{eq:c2},\eqref{eq:c6},\eqref{eq:c7b},\eqref{eq:c8b}.
\end{aligned}
\end{equation}
}

Building on the above transformation, we can obtain the following property.
\begin{theorem}
\label{thm:admm-consensus}
    Problem \textbf{P1-1} is a convex problem. 
\end{theorem}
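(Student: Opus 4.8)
The plan is to verify that the objective of \textbf{P1-1} is convex and that the feasible set carved out by \eqref{eq:c1}, \eqref{eq:c2}, \eqref{eq:c6}, \eqref{eq:c7b}, \eqref{eq:c8b} is convex, so that \textbf{P1-1} is a convex program. The feasible set is the easy part: \eqref{eq:c1}, \eqref{eq:c2}, \eqref{eq:c7b}, \eqref{eq:c8b} are all linear inequalities in $\bm\eta^u$ (the right-hand sides of \eqref{eq:c7b} and \eqref{eq:c8b} are constants determined by the confidence levels), and \eqref{eq:c6} is linear in $(\hat t^\mathit{tr}_n,\bm\eta^u)$ once one notes that $T^\mathit{tr}_{n,c}$ enters only through the auxiliary bound; hence their intersection is a polyhedron. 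By Lemma \ref{lemma:P1-1_convex_problem} these are equivalent to the original constraints, so nothing is lost.

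For the objective, it suffices to show each summand $\hat t^\mathit{tr}_n + 1/\Psi_n(\bm\eta^u) + \upsilon_n(\bm\eta^u)/\phi_n(\bm\eta^u) + 1/\mu$ is convex on the feasible region, since a sum (and nonnegative scaling by $1/N$) of convex functions is convex. The term $\hat t^\mathit{tr}_n$ is linear and $1/\mu$ is constant. For $1/\Psi_n(\bm\eta^u)$: $\Psi_n(\bm\eta^u)=\sum_{c\in\mathcal{C}}p_{n,c}\eta^u_{n,c}\lambda_{n}$ is a positive linear function of $\bm\eta^u$ on the feasible set (positivity follows because $\mathbb{E}[I_n]$ is finite, equivalently the arrival rate is strictly positive; this can be enforced or assumed), and $g(x)=1/x$ is convex and decreasing on $(0,\infty)$, so the composition with an affine map is convex. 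The main work is the ratio $\upsilon_n(\bm\eta^u)/\phi_n(\bm\eta^u)$: here $\upsilon_n$ is a nonnegative affine function of $\bm\eta^u$, and $\phi_n(\bm\eta^u)=1-\sum_{n',c}p_{n',c}\eta^u_{n',c}\lambda_{n'}/\mu$ is affine and \emph{strictly positive} on the feasible set — indeed \eqref{eq:c8b} (equivalently \eqref{eq:c8a}, which reads ``$<1$'') guarantees $\phi_n(\bm\eta^u)>0$, i.e. the M/G/1 stability condition. A ratio $\text{(affine}\ge 0)/\text{(affine}>0)$ is a linear-fractional function composed with $x/(1-x)$-type structure; the cleanest argument is that $h(u,v)=u/v$ is convex on $\{v>0\}$ when restricted appropriately — more precisely, one writes $\upsilon_n/\phi_n$ and checks the $2\times 2$ Hessian of $(u,v)\mapsto u/v$ is positive semidefinite for $v>0$, then composes with the affine map $\bm\eta^u\mapsto(\upsilon_n(\bm\eta^u),\phi_n(\bm\eta^u))$; composition of a convex function with an affine map preserves convexity. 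Alternatively, since $\upsilon_n$ and $\phi_n$ share (up to sign and constants) the same linear form in $\bm\eta^u$, one can substitute $s\doteq\sum_{n',c}p_{n',c}\eta^u_{n',c}\lambda_{n'}/\mu$ and observe the term is, after an affine reparametrization, proportional to $s/(1-s)$ plus a constant, and $s\mapsto s/(1-s)$ is convex and increasing on $[0,1)$, so composing with the linear map $\bm\eta^u\mapsto s$ again yields convexity. (The constant $\nu$ and $\mu$ factors are positive and do not affect convexity.)

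The main obstacle I anticipate is the ratio term $\upsilon_n/\phi_n$: one must be careful that $\upsilon_n$ and $\phi_n$ are not independent — they are both affine in the \emph{same} aggregate load variable — so the generic ``ratio of affine functions need not be convex'' caveat does not bite here; making this precise (either via the Hessian of $u/v$ restricted to the one-dimensional image, or via the explicit reparametrization to $s/(1-s)$) is the crux. A secondary subtlety is domain: convexity of $1/\Psi_n$ and of $\upsilon_n/\phi_n$ requires $\Psi_n>0$ and $\phi_n>0$ throughout the feasible set; the former needs an implicit assumption (or constraint) that each active user has a strictly positive effective arrival rate, and the latter is exactly the stability constraint \eqref{eq:c8a}/\eqref{eq:c8b}, which I would invoke explicitly. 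Once these positivity facts are in hand, convexity of the objective follows from standard composition rules, and combined with the polyhedral feasible set this establishes that \textbf{P1-1} is convex.
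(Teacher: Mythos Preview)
Your proposal is correct in spirit and reaches the same conclusion, but by a different route than the paper. The paper establishes convexity of each objective term by computing Hessian matrices explicitly: for $1/\Psi_n$ the Hessian is the rank-one matrix $\boldsymbol p_n^{\mathsf T}\boldsymbol p_n/\Psi_n(\bm\eta^u)^3\succeq 0$, and for $\upsilon_n/\phi_n$ it is $\boldsymbol p^{\mathsf T}\boldsymbol p\cdot(\nu/\mu)/\phi_n(\bm\eta^u)^3\succeq 0$, with the constraints dismissed in one line as affine. Your composition-rule argument is lighter and more structural; in particular the reparametrization to $s\mapsto s/(1-s)$ is the right way to handle the waiting-time ratio, and it makes transparent why the stability condition $\phi_n>0$ (enforced by \eqref{eq:c8b}) is exactly what is needed. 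One caveat: your first alternative, ``check that the $2\times 2$ Hessian of $(u,v)\mapsto u/v$ is positive semidefinite for $v>0$,'' is false as written (that Hessian has determinant $-1/v^4<0$), so drop it and keep only the one-dimensional reparametrization, which works precisely because $\upsilon_n$ and $\phi_n$ are affine functions of the \emph{same} linear form in $\bm\eta^u$. A second subtlety, which the paper also glosses over: constraint \eqref{eq:c6} is not literally affine in $(\hat t^{\mathit{tr}}_n,\bm\eta^u)$, since $T^{\mathit{tr}}_{n,c}$ depends on $\bm\eta^u$ through $\lambda_c=\sum_{n'}\eta^u_{n',c}\lambda_{n'}$ via the term $1/(r_c/S-\lambda_c)$; it is, however, a convex constraint (a convex function of an affine argument bounded above by a linear variable), which is all that is required.
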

\begin{proof}
    It is easy to see that the constraints of \textbf{P1-1} are convex because they are affine sets. \tr{The convexity of all sub-functions is proven in Appendix \ref{app:lemma:paoi_mg1} of technical report~\cite{qiao2023popec}.}\fp{The convexity of all sub-functions is proven in Appendix C of technical report~\cite{qiao2023popec}.}
\end{proof}
Based on the convexity of Problem \textbf{P1-1}, we can employ the ADMM technique to solve this problem, detailed in the following section.

    
\subsubsection{Problem Decomposition and Solving}\label{subsubsec:POSIP2}
We define an auxiliary function, $g_n(\bm x)$, as follows:
\begin{align}
\label{eq:feasible_solution1}
    g_n(\bm x) \doteq \left\{\begin{array}{ll}
    \hat t^\mathrm{tr} + \frac{1}{\Psi_n(\bm\eta^u)}
	    + \frac{\upsilon_n(\bm\eta^u)}{\phi_n(\bm\eta^u)}
	    + \frac{1}{\mu}, & \bm x \in \Omega, \\
    \infty, & \text {otherwise.}
    \end{array}\right.
\end{align}
where $\bm x \doteq \{\hat t^\mathrm{tr}, \bm\eta^u\}$, and $\Omega\doteq\{\bm x|\eqref{eq:c1},\eqref{eq:c2},\eqref{eq:c6},\eqref{eq:c7b},\eqref{eq:c8b}\}$ is the feasible set of Problem \textbf{P1-1}. 

\textbf{P1-1} is equivalent to the following consensus problem:
\begin{align}
    \label{PB:P1-2}
    \textbf{(P1-2)}~\mathop{\min}_{\{\bm x_n\}}~~&\sum_{n\in\mathcal{N}} g_n(\bm x_n), \\
    \label{eq:P1-2-consensus}
    \text{s.t.}&~\bm x_n = \bm x_{o}.
\end{align}

According to Theorem \ref{thm:admm-consensus}, it is clear that the well-known ADMM-Consensus algorithm can be used to obtain the optimal solution of Problem \textbf{P1-2}\cite{boyd2011distributed} \tr{(For more details, please refer to Appendix \ref{app:det:admm-consensus} of technical report~\cite{qiao2023popec})}\fp{(For more details, please refer to Appendix D of technical report~\cite{qiao2023popec})}.

\subsection{Multi-Priority Task Scheduling\label{subsec:POMIP}}
\subsubsection{Problem Transformation with Nonlinear Fractional Programming}
\label{subsubsec:NFP}
{Different from priority-free task scheduling in the previous section, we delve deeper into the multi-priority task scheduling problem in this section.
In this case, according to the priorities of received tasks, servers will execute the tasks with higher priorities more promptly. Since the multi-priority task scheduling, in this case, is NP-hard
\tr{(please refer to Appendix \ref{app:NP-hard} of technical report~\cite{qiao2023popec} for the proof)}\fp{(please refer to Appendix M of technical report~\cite{qiao2023popec} for the proof)}, 
similar to Problem \textbf{P1}, we minimize a tight upper bound for the average expected PAoI of multi-priority users: }
\begin{equation}
\label{eq:P2}
\begin{aligned}
    \textbf{(P2)}~~&\mathop{\min}_{\bm x} \frac{1}{N}\sum_{n\in\mathcal{N}} f^p_n(\bm x), \\
    \text{s.t.}&~ \eqref{eq:c1},\eqref{eq:c2},\eqref{eq:c7a},\eqref{eq:c8a},\eqref{eq:c6},
\end{aligned}
\end{equation}
where 
$f^{p}_n(\bm x) = \hat t^\mathrm{tr} 
+\frac{1}{\mu_n} 
+ \frac{1}{\Psi_n(\bm x)}
+ \frac{\Upsilon(\bm x)}
{\Phi_{\delta(n)}(\bm x)
\Phi_{\delta(n)-1}(\bm x)}$ is the upper bound of $\mathbb{E}[A_n]$,
$\bm x = \{\hat t^\mathrm{tr},\bm\eta^u\}$ denotes the decision variables,
$\Phi_{\delta(n)}(\bm x) = 1-\sum_{\delta\in\Delta(\delta(n))}\sum_{n'\in\mathcal{N}^{\delta}}\sum_{c\in\mathcal{C}}p_{n',c}\eta^u_{n',c}\lambda_{n'}/\mu_{n'}$,
$\Upsilon(\bm x) = \frac{1}{2}\sum_{\delta\in\Delta}\sum_{n'\in\mathcal{N}^{\delta}}\sum_{c\in\mathcal{C}}p_{n',c}\eta^u_{n',c}\lambda_{n'}\nu_{n'}$, and
$\Psi_n(\bm x)=\sum_{c\in\mathcal{C}}p_{n,c}\eta^u_{n,c}\lambda_{n}$.
 To obtain the effective offloading decision in this case, we next transform the original problem and 
decouple users' offloading decisions.

First, we define $\theta_n \doteq f^{p,u}_n(\bm x_n)/f^{p,l}_n(\bm x_n) =  f^{p}_n(\bm x_n) - 1/\mu_n$ and write
\begin{align} 
    \mathop{\min}_{\boldsymbol{\theta}}\sum_{n\in\mathcal{N}}\theta_n
    = \mathop{\min}_{\bm x}\sum_{n\in\mathcal{N}}\frac{f^{p,u}_n(\bm x_n)}{f^{p,l}_n(\bm x_n)}
   = \sum_{n\in\mathcal{N}}\frac{f^{p,u}_n(\bm x_n^*)}{f^{p,l}_n(\bm x_n^*)},
\label{eq:NFP-theta}
\end{align}
where $\bm x_n^*$ is the optimal solution and $f^{p,u}_n,f^{p,l}_n$ are denoted by
\begin{align}
\label{eq:NFP-upperfunction}
    f^{p,u}_n(\bm x_n)   =\; & \Phi_{\delta(n)}(\bm x_n)\Phi_{\delta(n)-1}(\bm x_n) + \Upsilon(\bm x_n)\Psi_n(\bm x_n)\nonumber\\
    &+\hat t^\mathrm{tr}\Psi_n(\bm x_n)\Phi_{\delta(n)}(\bm x_n)\Phi_{\delta(n)-1}(\bm x_n),\\
\label{eq:NFP-lowerfunction}
    f^{p,l}_n(\bm x_n)   =\; & \Psi_n(\bm x_n)\Phi_{\delta(n)}(\bm x_n)\Phi_{\delta(n)-1}(\bm x_n).
\end{align}


\begin{proposition}
\label{pro:npl-mp}
Problem \textbf{P2} can be recast into an equivalent problem as follows:
    \begin{equation}
    \begin{split}
    \label{eq:P2-1}
        \textbf{\textrm{(P2-1)}}~&\mathop{\min}_{\{\bm x_n\}} \sum_{n\in\mathcal{N}} f^{p,u}_n(\bm x_n) - \theta^{*}_n f^{p,l}_n(\bm x_n), \\
        \text{s.t.}&~~\eqref{eq:c1},\eqref{eq:c2},\eqref{eq:c6},\eqref{eq:c7b},\eqref{eq:c8b},
    \end{split} 
\end{equation}
where $\theta^{*}_n$ is the minimum value of $f^{p,u}_n(\bm x_n)/f^{p,l}_n(\bm x_n)$ with regard to $\bm x_n$.
\end{proposition}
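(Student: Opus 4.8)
The plan is to handle Problem \textbf{P2} with the standard machinery of nonlinear fractional programming (in the spirit of Dinkelbach's parametrization), applied separately to each user's ratio. First I would reduce the objective of \textbf{P2} to the one that actually matters: from \eqref{eq:NFP-upperfunction}--\eqref{eq:NFP-lowerfunction} and the definition $\theta_n = f^{p}_n(\bm x_n)-1/\mu_n$ one gets $f^{p}_n(\bm x_n)=f^{p,u}_n(\bm x_n)/f^{p,l}_n(\bm x_n)+1/\mu_n$, so the constants $1/\mu_n$ and the overall factor $1/N$ do not affect the minimizer and $\textbf{P2}$ is equivalent to minimizing $\sum_{n\in\mathcal{N}} f^{p,u}_n(\bm x_n)/f^{p,l}_n(\bm x_n)$ over the feasible set $\Omega$ defined by \eqref{eq:c1},\eqref{eq:c2},\eqref{eq:c6},\eqref{eq:c7b},\eqref{eq:c8b} (with the per-user copies $\bm x_n$, mirroring the consensus reformulation already used for \textbf{P1}); this is exactly \eqref{eq:NFP-theta}.

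Next I would nail down the only nontrivial hypothesis the fractional transformation requires, namely strict positivity of every denominator $f^{p,l}_n(\bm x_n)=\Psi_n(\bm x_n)\Phi_{\delta(n)}(\bm x_n)\Phi_{\delta(n)-1}(\bm x_n)$ on $\Omega$. Positivity of $\Phi_{\delta(n)}$ and $\Phi_{\delta(n)-1}$ is precisely the queue-stability enforced by the reformulated computation-capacity constraint \eqref{eq:c8b}: since $\Delta(\delta(n)-1)\subseteq\Delta(\delta(n))\subseteq\Delta$, the partial loads appearing in $\Phi_{\delta(n)}$ and $\Phi_{\delta(n)-1}$ are dominated by the total load $\sum_{n\in\mathcal{N}}\sum_{c\in\mathcal{C}}p_{n,c}\eta^u_{n,c}\lambda_n/\mu_n$, which \eqref{eq:c8b} keeps below $1$; positivity of $\Psi_n=\sum_{c\in\mathcal{C}}p_{n,c}\eta^u_{n,c}\lambda_n$ is forced by finiteness of $\mathbb{E}[I_n]=1/\Psi_n$ in \eqref{eq:paoi-expectation2}, \ie every served user offloads at a strictly positive rate. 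Since $f^{p,u}_n$ and $f^{p,l}_n$ are polynomials in $(\bm\eta^u,\hat t^\mathit{tr})$, hence continuous, and $\Omega$ is compact, the quantity $\theta^{*}_n\doteq\min_{\bm x_n\in\Omega}f^{p,u}_n(\bm x_n)/f^{p,l}_n(\bm x_n)$ is well-defined and attained at some $\bm x_n^{*}$.

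The final step is the per-user fractional-programming lemma. Fix $n$ and define $h_n(\theta)\doteq\min_{\bm x_n\in\Omega}\bigl\{f^{p,u}_n(\bm x_n)-\theta\,f^{p,l}_n(\bm x_n)\bigr\}$. Because $f^{p,l}_n>0$ on the compact set $\Omega$, $h_n$ is finite, continuous, and strictly decreasing in $\theta$; writing $f^{p,u}_n(\bm x_n)-\theta^{*}_n f^{p,l}_n(\bm x_n)=f^{p,l}_n(\bm x_n)\bigl(f^{p,u}_n(\bm x_n)/f^{p,l}_n(\bm x_n)-\theta^{*}_n\bigr)$ shows every summand is nonnegative and vanishes exactly at the ratio-minimizers, so $h_n(\theta^{*}_n)=0$, $\theta^{*}_n$ is its unique root, and $\arg\min_{\bm x_n}\{f^{p,u}_n-\theta^{*}_n f^{p,l}_n\}=\arg\min_{\bm x_n}\{f^{p,u}_n/f^{p,l}_n\}$. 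Plugging this into the decoupled objective $\sum_{n}\theta_n$ of \eqref{eq:NFP-theta} term by term turns it into $\sum_{n\in\mathcal{N}}\bigl(f^{p,u}_n(\bm x_n)-\theta^{*}_n f^{p,l}_n(\bm x_n)\bigr)$ with the same constraint set, \ie Problem \textbf{P2-1}, whose optimal value is $0$ and is attained at the same $\{\bm x_n^{*}\}$; this yields the claimed equivalence. I expect the genuinely delicate part to be the positivity/attainment argument of the second step --- in particular justifying $\Psi_n>0$ and $\Phi_{\delta(n)-1}>0$ \emph{uniformly over $\Omega$} rather than only at an optimum --- since the whole reduction, including the sign argument that makes each Dinkelbach surrogate term nonnegative, stands or falls on it.
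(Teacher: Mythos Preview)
Your proposal is correct and follows the same Dinkelbach-style nonlinear fractional programming argument the paper invokes; the paper's own proof is essentially a one-line citation to \cite{dinkelbach1967nonlinear} together with the observation that $\theta_n^*$ is attained iff $\min_{\bm x_n}\{f^{p,u}_n-\theta_n^* f^{p,l}_n\}=0$. Your write-up is considerably more careful than the paper's, in particular by checking the positivity of $f^{p,l}_n$ via \eqref{eq:c8b} and the finiteness of $\mathbb{E}[I_n]$, which the paper simply takes for granted.
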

\begin{proof}
\tr{The detailed proof can be found in Appendix \ref{app:pro:npl-mp} of technical report~\cite{qiao2023popec}.}\fp{The detailed proof can be found in Appendix E of technical report~\cite{qiao2023popec}.}
\end{proof}

Proposition \ref{pro:npl-mp} demonstrates that problem transformation can be employed to deal with the nonlinear fractional programming problem \textbf{P2-1}. Hence, we can use the iterative Dinkelbach techniques~\cite{dinkelbach1967nonlinear}, outlined in Algorithm \ref{alg:NFP}, to solve the transformed problem.
Algorithm \ref{alg:NFP} updates the value of $\tilde{\theta}_n$ in each iteration based on the current $\tilde{\bm x}^k_n$, \ie,
\begin{equation}
    \tilde{\theta}^{k+1}_n = \frac{f^{p,u}_n(\tilde{\bm x}^{k+1}_n)}{f^{p,l}_n(\tilde{\bm x}^{k+1}_n))}.
\end{equation}
This process continues until:
\begin{equation}
     f^{p,u}_n(\tilde{\bm x}^{k+1}_n) - \tilde{\theta}^{k}_n f^{p,l}_n(\tilde{\bm x}^{k+1}_n)) > \epsilon^{\mathrm{ck}},
\end{equation}
where $\epsilon^{\mathrm{ck}}$ represents the stop criteria for iterations.

\begin{algorithm}[t]
	\caption{Nonlinear fractional programming based on ADMM-Consensus (NFPA)}
	\label{alg:NFP}
	\LinesNumbered
	\tcp{CheckPointing Algorithm - NFP}
	\KwIn{$\epsilon^{\mathrm{ck}}$, $\{\tilde{\theta}^0_n\}$, $k$}
	\For{$n = 1$ \KwTo $N$}{
	{Convergence = False}\\
    \While{Convergence = False}{
        {Apply Algorithm \ref{alg:nonconvexadmm} or \ref{alg:AsynchronousADMM} with $\tilde{\bm x}^k_n$ and $\tilde{\theta}^k_n$ to obtain the solution $\tilde{\bm x}^{k+1}_n$}\\
        \eIf{$f^{p,u}_n(\tilde{\bm x}^{k+1}_n) - \tilde{\theta}^{k}_n f^{p,l}_n(\tilde{\bm x}^{k+1}_n)) > \epsilon^{\mathrm{ck}}$}{
        {Select $\tilde{\theta}^{k+1}_n = \frac{f^{p,u}_n(\tilde{\bm x}^{k+1}_n)}{f^{p,l}_n(\tilde{\bm x}^{k+1}_n))}$}\\
        {Convergence = True}}
	    {{Convergence = False}\\}
	    {Update $s = s+1$}\\
	}
	{Select $\bm x^{p,*}_n = \tilde{\bm x}^{k}_n$}\\
	{Select $\theta^{p,*}_n = \tilde{\theta}^{k}_n$}\\
	{Calculate \textbf{PAoI} according to Eq.~\eqref{eq:P2}}\\
	\KwOut{$\bm x^{p,*}_n$ and \textbf{PAoI}}
}	
\end{algorithm}

\subsubsection{ADMM-Consensus Based Solution}
\label{subsubsec:nonconvex-admm}
The procedure mentioned above can be considered as a checkpointing algorithm. We employ non-convex ADMM-Consensus methods to determine the new value of $\{\bm x_n\}$. Analogous to section \ref{subsec:POSIP}, the consensus problem for \textbf{P2-1} can be expressed as:
\begin{equation}
\label{eq:P2-2}
    \begin{split}
    \textbf{(P2-2)}~\mathop{\min}_{\{\bm x_n\}}~~&\sum_{n\in\mathcal{N}} g^{p}_n(\bm x_n), \\
    \text{s.t.}&~\bm x_n = \bm x_{o},
    \end{split}
\end{equation}
where
\begin{align}
\label{eq:feasible_solution2}
    g^{p}_n(\bm x_n) = \left\{\begin{array}{ll}
    f^{p,u}_n(\bm x_n) - \theta^{*}_nf^{p,l}_n(\bm x_n), &\bm x\in \Omega, \\
    \infty, &\text{otherwise.}
    \end{array}\right.
\end{align}
The augmented Lagrangian for Problem \textbf{P2-2} can be expressed as:
\begin{align}
\label{eq:PLadmm-lagrange}
    L^{p}(\{\bm x_n\}, \bm x_{o}, \{\bm\sigma_n\})
   = \sum_{n\in\mathcal{N}}L^{p}_{n}(\bm x_n, \bm x_{o}, \bm\sigma_n)\nonumber\\
   = \sum_{n\in\mathcal{N}}\left(g^{p}_n(\bm x_n) + \langle\bm\sigma_n, \bm x_n - \bm x_{o} \rangle
    + \frac{\rho_n}{2}\|\bm x_n - \bm x_{o}\|^2_2\right),
\end{align}
where $\rho_n$ is a positive penalty parameter with respect to Problem \textbf{P2-2}.
Based on the non-convex ADMM-Consensus algorithm, we update the variables in each iteration $t$ as follows:
\begin{align}
    \label{eq:PLadmm2-relax-i} 
    \bm x^{t+1}_n & = \arg\min_{\bm x_n}L^p_n(\{\bm x_n\}, \bm x^t_{o}, \{\bm\sigma^t_n\}),
    \\
    \label{eq:PLadmm2-relax-o}
    \bm x^{t+1}_{o} & = \frac{\sum_{n\in\mathcal{N}}(\rho_n \bm x^{t+1}_n + \sigma^{t}_n)}{\sum_{n\in\mathcal{N}}\rho_n}, \\
    \label{eq:PLadmm2-relax-sigma}
    \sigma^{t+1}_n & = \sigma^{t}_n + \rho_n(\bm x^{t+1}_n - \bm x^{t+1}_{o}).
 \end{align}
The premise that the iterative update can converge is that the function $g^{p}_n$ is Lipschitz continuous.

\begin{proposition}
\label{theorem:NFP_Problem}
The first-order derivative of $g^{p}_n$ is Lipschitz continuous with constant $\ell_n$, which is defined as
\begin{align}
    \ell_n=\frac{\lambda^2_{max}}{\mu^2_\mathrm{min}}\Big(
	        \sum_{n_1\in \mathcal{N}} \frac{\nu_{n}}{2\mu_{n}\mu_{n_1}}
	        + (1+\theta^{*}_n \frac{\nu_{n}}{2\mu_{n}})|\mathcal{N}_1|
	        + \theta^{*}_n \frac{\nu_{n}}{2\mu_{n}}|\mathcal{N}_2|\Big),
\end{align}
where $\mu_\mathrm{min} = \mathop{\min}_{n} \{\mu_n\}$,
$\lambda_{max} = \mathop{\max}_n \lambda_{n}$,
$\mathcal{N}_1 = \{\mathcal{N}^{\delta}|\delta<\delta(n)\}\bigcup \{n\}$, $\mathcal{N}_2 = \{\mathcal{N}^{\delta}|\delta\leq\delta(n)\}$.

\end{proposition}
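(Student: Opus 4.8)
Proof proposal for Proposition~\ref{theorem:NFP_Problem}.

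The plan is to prove this by bounding, uniformly over the convex feasible set $\Omega$, the operator norm of the Hessian of the smooth part $h^{p}_n(\bm x_n)\doteq f^{p,u}_n(\bm x_n)-\theta^{*}_n f^{p,l}_n(\bm x_n)$ of $g^{p}_n$ (see \eqref{eq:feasible_solution2}); for a twice-differentiable function on a convex set, any such bound is a Lipschitz constant for its gradient, by integrating $\nabla^{2}h^{p}_n$ along segments that stay in $\Omega$. First I would record what feasibility buys: by \eqref{eq:c1} and \eqref{eq:c2} every coordinate $\eta^u_{n',c}$ lies in $[0,1]$ with $\sum_{c}\eta^u_{n,c}\le 1$, so the $\bm\eta^u$-part of $\Omega$ is a compact box; $\hat t^\mathit{tr}$ enters $h^{p}_n$ only through the single term $\hat t^\mathit{tr}\Psi_n\Phi_{\delta(n)}\Phi_{\delta(n)-1}$, affinely, and may be taken bounded on $\Omega$ (enlarging it merely slackens \eqref{eq:c6} and worsens the objective); and $\theta^{*}_n$ is a fixed scalar by Proposition~\ref{pro:npl-mp}. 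Hence $h^{p}_n$ is a polynomial, $\nabla^{2}h^{p}_n$ is continuous, and it suffices to bound $\max_{\bm x\in\Omega}\|\nabla^{2}h^{p}_n(\bm x)\|_{2}$.

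The key structural fact, read off \eqref{eq:NFP-upperfunction}--\eqref{eq:NFP-lowerfunction}, is that each of the four building blocks $\Phi_{\delta(n)}$, $\Phi_{\delta(n)-1}$, $\Upsilon$, $\Psi_n$ is \emph{affine} in $\bm\eta^u$ and free of $\hat t^\mathit{tr}$, so $h^{p}_n$ is a sum of products of at most three such affine factors (with at most one extra linear factor $\hat t^\mathit{tr}$). For affine $a,b$, $\nabla^{2}(ab)=\nabla a\,(\nabla b)^{\top}+\nabla b\,(\nabla a)^{\top}$, and for $abc$ the Hessian is $a[\nabla b(\nabla c)^{\top}+\nabla c(\nabla b)^{\top}]+b[\cdots]+c[\cdots]$, so each contribution to $\nabla^{2}h^{p}_n$ is rank $\le 2$ and, via $\|uv^{\top}+vu^{\top}\|_{2}\le 2\|u\|_{2}\|v\|_{2}$, is controlled by (i) the value on $\Omega$ of whichever affine factor plays the scalar role --- with $0<\Phi_{\delta(n)},\Phi_{\delta(n)-1}\le 1$, $0\le\Psi_n\le\lambda_{max}$, and $\hat t^\mathit{tr}$ bounded --- and (ii) two gradient norms, built from the coordinatewise bounds $|\partial\Phi_{\cdot}/\partial\eta^u_{n',c}|=p_{n',c}\lambda_{n'}/\mu_{n'}\le\lambda_{max}/\mu_{min}$, $|\partial\Psi_n/\partial\eta^u_{n,c}|=p_{n,c}\lambda_n\le\lambda_{max}$, and $|\partial\Upsilon/\partial\eta^u_{n',c}|=\tfrac12 p_{n',c}\lambda_{n'}\nu_{n'}$.

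Summing the per-term estimates then reduces to counting active $\eta$-coordinates, since the squared Euclidean norm of each affine block's gradient is a sum over exactly the coordinates on which that block depends: $\Phi_{\delta(n)}$ is supported on $\mathcal{N}_2=\{\mathcal{N}^{\delta}\mid\delta\le\delta(n)\}$, $\Phi_{\delta(n)-1}$ on $\{\mathcal{N}^{\delta}\mid\delta<\delta(n)\}$, $\Psi_n$ on $\{n\}$ (so $\Psi_n\Phi_{\delta(n)-1}$ is supported on $\mathcal{N}_1$), and $\Upsilon$ on all of $\mathcal{N}$, which is exactly why $|\mathcal{N}_1|$, $|\mathcal{N}_2|$ and the inner sum $\sum_{n_1\in\mathcal{N}}$ appear. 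Attributing the coefficients: the $\Upsilon\Psi_n$ piece yields the $\sum_{n_1\in\mathcal{N}}\tfrac{\nu_n}{2\mu_n\mu_{n_1}}$ term, the $\Phi_{\delta(n)}\Phi_{\delta(n)-1}$ and $(\hat t^\mathit{tr}-\theta^{*}_n)\Psi_n\Phi_{\delta(n)}\Phi_{\delta(n)-1}$ pieces yield the $(1+\theta^{*}_n\tfrac{\nu_n}{2\mu_n})|\mathcal{N}_1|$ and $\theta^{*}_n\tfrac{\nu_n}{2\mu_n}|\mathcal{N}_2|$ terms, and pulling out the common factor $\lambda^{2}_{max}/\mu^{2}_{min}$ leaves precisely the stated $\ell_n$.

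The main obstacle will be this last, bookkeeping-heavy step: the crude sum-of-rank-two-norms bound has to be organized monomial by monomial so that $\Phi_{\delta(n)},\Phi_{\delta(n)-1}\le 1$ and $\Psi_n\le\lambda_{max}$ are spent in the right slots and the answer collapses to the tight $\ell_n$ rather than a larger constant; a related subtlety is pinning down the a priori bound on $\hat t^\mathit{tr}$ over $\Omega$ (or arguing that the $\hat t^\mathit{tr}$-dependent cubic term's Hessian is dominated by the terms already counted), so that $\ell_n$ comes out genuinely $\hat t^\mathit{tr}$-free as stated. Everything else is routine polynomial differentiation on a compact convex domain.
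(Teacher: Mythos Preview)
Your structural approach---recognizing $h^{p}_n$ as a sum of products of affine blocks and bounding the Hessian via $\|uv^{\top}+vu^{\top}\|_{2}\le 2\|u\|_{2}\|v\|_{2}$---is a perfectly valid route to \emph{some} Lipschitz constant, and is conceptually cleaner than what the paper does. But it is a genuinely different argument from the paper's, and more importantly it will not produce the specific $\ell_n$ stated in the proposition.

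The paper's proof is brute force: it writes out every second-order partial $\partial^{2}g^{p}_n/\partial\eta_{n_1,c_1}\partial\eta_{n_2,c_2}$ explicitly, splitting into sixteen cases according to how $\delta(n_1),\delta(n_2)$ sit relative to $\delta(n)$ (and whether $n_1,n_2$ equal $n$). It then bounds the spectral norm of the Hessian not by a rank-two decomposition but by the maximum absolute column sum (a Gershgorin/$\ell_1$-induced-norm argument: $|\omega|\le\max_j\sum_i|A_{ij}|$). That is what manufactures the \emph{linear} dependence on $|\mathcal{N}_1|$, $|\mathcal{N}_2|$ and the raw sum $\sum_{n_1\in\mathcal{N}}$: each column sum is literally a sum of bounded entries over the users in the relevant priority class.

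Your approach, by contrast, computes $\|\nabla\Phi_{\delta(n)}\|_2$, $\|\nabla\Psi_n\|_2$, etc., and these scale like $\sqrt{C|\mathcal{N}_2|}\,\lambda_{max}/\mu_{min}$, not $|\mathcal{N}_2|$. Multiplying two such factors gives $\sqrt{|\mathcal{N}_1|\cdot|\mathcal{N}_2|}$-type expressions, not the additive $|\mathcal{N}_1|+|\mathcal{N}_2|$ form of the stated $\ell_n$. So the final paragraph of your plan, where you say the bound ``collapses to precisely the stated $\ell_n$,'' will not go through as written: you will get a valid Lipschitz constant, but a different one. If you want the exact $\ell_n$ in the proposition, switch the last step to a column-sum bound on the Hessian rather than a spectral-norm-of-rank-two-pieces bound; alternatively, keep your method and accept a constant of a different shape.

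On the $\hat t^{\mathit{tr}}$ issue you flag: the paper simply drops this variable, asserting $\nabla_{\hat t^{\mathit{tr}}} g^{p}_n=0$ and computing second derivatives only in the $\eta$-coordinates, so the cubic term $\hat t^{\mathit{tr}}\Psi_n\Phi_{\delta(n)}\Phi_{\delta(n)-1}$ never enters their Hessian bookkeeping. Your instinct that this is a loose end is correct; the stated $\ell_n$ is indeed $\hat t^{\mathit{tr}}$-free only because that term has been silently set aside.
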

\begin{proof}
    \tr{The detailed proof can be found in Appendix \ref{app:theorem:NFP_Problem} of technical report~\cite{qiao2023popec}.}\fp{The detailed proof can be found in Appendix F of technical report~\cite{qiao2023popec}.}
\end{proof}

According to Proposition \ref{theorem:NFP_Problem}, we adopt the non-convex ADMM-Consensus method to solve Problem \textbf{P2-2}, with given $\left(\{\bm x^0_n\}, \{\theta^*_n\}\right)$ (as outlined in Algorithm \ref{alg:NFP}).

\begin{algorithm}[ht]
	\caption{Non-Convex ADMM-Consensus (NAC)}
	\label{alg:nonconvexadmm}
	\LinesNumbered
	\KwIn{$\epsilon^{\mathrm{ac}}$, $t$ and $\left(\{\bm x^0_n\}, \{\theta^*_n\}\right)$ from Algorithm \ref{alg:NFP}}
    \tcp{ADMM-Consensus Algorithm}
    \Do{$\eta\left(\bm x^t, \bm\sigma^t\right) \ge \epsilon^{\mathrm{ac}}$}{
        {Calculate $\bm x^{t+1}_{o}$ in \textbf{MEC}, according to Eq.~\eqref{eq:PLadmm2-relax-o}}\\
    	{Calculate $\bm x^{t+1}_n$ in \textbf{user}, simultaneously according to Eq.~\eqref{eq:PLadmm2-relax-i}}\\
        {Calculate $\sigma^{t+1}_n$ in \textbf{user}, simultaneously according to Eq.~\eqref{eq:PLadmm2-relax-sigma}}\\
        {Update $t = t + 1$}\\
	    }
	\KwOut{$\bm x^{t}_n$}
\end{algorithm}
It is worth mentioning that the value of $\rho_n$ and the convergence property of the Algorithm \ref{alg:nonconvexadmm} differ from those in the Convex ADMM-Consensus method. Due to the non-convexity of the objective function, the commonly used \textit{gap function} cannot be adapted to the analysis of Algorithm \ref{alg:nonconvexadmm}. Next, we design a special gap function capable of characterizing the convergence of NAC as follows:
\begin{equation}
\label{eq:gap-function}
    \eta\left(\bm x^t, \bm\sigma^t\right)=\left\|\tilde{\nabla} L^{p}\left(\left\{\bm x^t_{n}\right\}, \bm x^t_{o}, \bm\sigma^t\right)\right\|^{2}+\sum_{n\in\mathcal{N}}\left\|\bm x^t_{n}-\bm x^t_{o}\right\|^{2},
\end{equation}
where 
\begin{align*}
    \tilde{\nabla} L^{p}\left(\left\{\bm x_{n}\right\}, \bm x_{o}, \bm\sigma^t\right)
   =\left[\begin{array}{c}
    \nabla_{\bm x_{o}} L^{p}\left(\left\{\bm x_{n}\right\}, \bm x_{o}, \bm\sigma^t\right)  \\
    \nabla_{\bm x_{1}} L^{p}\left(\left\{\bm x_{n}\right\}, \bm x_{o}, \bm\sigma^t\right) \\
    \vdots \\
    \nabla_{\bm x_{N}} L^{p}\left(\left\{\bm x_{n}\right\}, \bm x_{o}, \bm\sigma^t\right)
    \end{array}\right].
\end{align*}
When $\eta\left(\bm x^t, \bm\sigma^t\right) < \epsilon^{\mathrm{ac}}$, Algorithm \ref{alg:nonconvexadmm} will find a stationary solution. 
The gap function that ADMM used to deal with convex functions is no longer suitable to judge the convergence of non-convex functions ADMM.
We further explain why the gap function Eq. \eqref{eq:gap-function} can be used as a stopping criterion
\tr{(A comprehensive explanation of this concept can be found in Appendix \ref{app:GapFunction-Convergence} of technical report~\cite{qiao2023popec}.)}\fp{(A comprehensive explanation of this concept can be found in Appendix L of technical report~\cite{qiao2023popec}.)}
In addition to the change of the gap function, the convergence of Algorithm 2 requires some parameters to satisfy special conditions.

\begin{theorem}
	\label{theorem:innerConvergence1}
    If $\rho_n>2\ell_n$, Algorithm \ref{alg:nonconvexadmm} converges to an $\epsilon^{\mathrm{ac}}$-stationary point within $O(1/(p^\mathrm{syn}\epsilon^\mathrm{ac}))$,
    where $\epsilon^{\mathrm{ac}}$ is a positive iteration factor, and $p^\mathrm{syn} = \Pi_{n\in\mathcal{N}}(\frac{1}{C} \sum_{c\in\mathcal{C}}p_{n,c})$ is the probability of successfully completing a synchronous update. 
\end{theorem}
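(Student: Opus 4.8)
The plan is to follow the now-standard template for convergence of non-convex ADMM (in the spirit of Hong–Luo–Razaviyayn), specialized to the consensus form \textbf{P2-2}, and then to insert the probabilistic success model of the synchronous round. Throughout I use that $\nabla g^{p}_n$ is $\ell_n$-Lipschitz (Proposition \ref{theorem:NFP_Problem}) and that the feasible set $\Omega$ defined by \eqref{eq:c1},\eqref{eq:c2},\eqref{eq:c6},\eqref{eq:c7b},\eqref{eq:c8b} is a bounded polytope, so $g^{p}_n$ is continuous and bounded on it.

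\emph{Step 1: sufficient decrease of $L^{p}$.} Because $\nabla g^{p}_n$ is $\ell_n$-Lipschitz and $\rho_n>2\ell_n$, the map $\bm x_n\mapsto L^{p}_n(\bm x_n,\bm x^t_{o},\bm\sigma^t_n)$ is strongly convex with modulus at least $\rho_n-\ell_n>\ell_n$, so the update \eqref{eq:PLadmm2-relax-i} is well defined and decreases $L^{p}$ by at least $\tfrac{\rho_n-\ell_n}{2}\|\bm x^{t+1}_n-\bm x^{t}_n\|^2$; the $\bm x_{o}$-update \eqref{eq:PLadmm2-relax-o} minimizes a strongly convex quadratic and decreases $L^{p}$ by $\tfrac{1}{2}(\sum_n\rho_n)\|\bm x^{t+1}_{o}-\bm x^{t}_{o}\|^2$. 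The dual step \eqref{eq:PLadmm2-relax-sigma} \emph{increases} $L^{p}$ by $\tfrac{1}{\rho_n}\|\bm\sigma^{t+1}_n-\bm\sigma^{t}_n\|^2$, but the first-order optimality condition of \eqref{eq:PLadmm2-relax-i}, $\nabla g^{p}_n(\bm x^{t+1}_n)+\bm\sigma^{t}_n+\rho_n(\bm x^{t+1}_n-\bm x^{t}_{o})=0$, subtracted across consecutive iterations together with $\bm\sigma^{t+1}_n-\bm\sigma^{t}_n=\rho_n(\bm x^{t+1}_n-\bm x^{t+1}_{o})$ yields $\|\bm\sigma^{t+1}_n-\bm\sigma^{t}_n\|\le\ell_n\|\bm x^{t+1}_n-\bm x^{t}_n\|$. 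Hence the net change is bounded above by $-\sum_{n}\big(\tfrac{\rho_n-\ell_n}{2}-\tfrac{\ell_n^2}{\rho_n}\big)\|\bm x^{t+1}_n-\bm x^{t}_n\|^2-\tfrac{1}{2}(\sum_n\rho_n)\|\bm x^{t+1}_{o}-\bm x^{t}_{o}\|^2$, and $\rho_n>2\ell_n$ makes each coefficient $c_n\doteq\tfrac{\rho_n-\ell_n}{2}-\tfrac{\ell_n^2}{\rho_n}$ strictly positive, giving $L^{p}(\bm x^{t+1},\bm x^{t+1}_{o},\bm\sigma^{t+1})\le L^{p}(\bm x^{t},\bm x^{t}_{o},\bm\sigma^{t})-\sum_{n}c_n\big(\|\bm x^{t+1}_n-\bm x^{t}_n\|^2+\|\bm x^{t+1}_{o}-\bm x^{t}_{o}\|^2\big)$.

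\emph{Step 2: lower bound and summability.} Using the same optimality condition to eliminate $\bm\sigma^t_n$ and the descent inequality for the $\ell_n$-smooth $g^{p}_n$, one checks $L^{p}_n(\bm x^t_n,\bm x^t_{o},\bm\sigma^t_n)\ge g^{p}_n(\bm x^t_{o})+\tfrac{\rho_n-\ell_n}{2}\|\bm x^t_n-\bm x^t_{o}\|^2\ge g^{p}_n(\bm x^t_{o})$, and since $\Omega$ is compact and $g^{p}_n$ continuous on it, $L^{p}$ is bounded below by some $\underline L$. Telescoping the Step 1 inequality over $t=0,\dots,T-1$ gives $\sum_{t=0}^{T-1}\sum_{n}\big(\|\bm x^{t+1}_n-\bm x^{t}_n\|^2+\|\bm x^{t+1}_{o}-\bm x^{t}_{o}\|^2\big)\le \big(L^{p}(\bm x^0,\bm x^0_{o},\bm\sigma^0)-\underline L\big)/\min_n c_n$, a finite bound independent of $T$.

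\emph{Step 3: gap bound and iteration count, then the probabilistic factor.} Differentiating $L^{p}$: by the $\bm x_{o}$-update, $\nabla_{\bm x_{o}}L^{p}$ at iterate $t$ reduces to an $O\big(\sum_n\|\bm\sigma^{t}_n-\bm\sigma^{t-1}_n\|\big)$ residual; by the optimality of \eqref{eq:PLadmm2-relax-i}, $\nabla_{\bm x_n}L^{p}$ reduces to $\rho_n(\bm x^{t}_{o}-\bm x^{t-1}_{o})$ plus an $\ell_n$-Lipschitz difference of gradients; and the consensus residual satisfies $\|\bm x^{t}_n-\bm x^{t}_{o}\|=\tfrac{1}{\rho_n}\|\bm\sigma^{t}_n-\bm\sigma^{t-1}_n\|\le\tfrac{\ell_n}{\rho_n}\|\bm x^{t}_n-\bm x^{t-1}_n\|$. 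Combining with $\|\bm\sigma^t_n-\bm\sigma^{t-1}_n\|\le\ell_n\|\bm x^t_n-\bm x^{t-1}_n\|$ yields $\eta(\bm x^{t},\bm\sigma^{t})\le\kappa\sum_{n}\big(\|\bm x^{t}_n-\bm x^{t-1}_n\|^2+\|\bm x^{t}_{o}-\bm x^{t-1}_{o}\|^2\big)$ for a constant $\kappa$ depending only on $\{\rho_n,\ell_n,N\}$. With Step 2 this gives $\min_{1\le t\le T}\eta(\bm x^{t},\bm\sigma^{t})\le\kappa\big(L^{p}(\bm x^0,\bm x^0_{o},\bm\sigma^0)-\underline L\big)/(T\min_n c_n)$, so an $\epsilon^{ac}$-stationary point is reached within $O(1/\epsilon^{ac})$ \emph{successful} synchronous rounds. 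Finally, a synchronous round completes only when every user's channel transmission of its local update succeeds; modeling the per-user success probability as $\tfrac{1}{C}\sum_{c\in\mathcal{C}}p_{n,c}$ and independence across users, each round succeeds with probability $p^{\mathit{sysn}}=\prod_{n\in\mathcal{N}}\big(\tfrac{1}{C}\sum_{c\in\mathcal{C}}p_{n,c}\big)$, so the expected number of rounds per successful update is $1/p^{\mathit{sysn}}$, and the overall complexity is $O\big(1/(p^{\mathit{sysn}}\epsilon^{\mathit{ac}})\big)$.

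The main obstacle is Step 1: converting the dual increase into a genuine primal decrease for the \emph{non-convex} objective $g^{p}_n=f^{p,u}_n-\theta^{*}_nf^{p,l}_n$ via the subproblem optimality conditions and the $\ell_n$-Lipschitz estimate, and pinning down exactly the threshold $\rho_n>2\ell_n$ that keeps the coefficients $c_n$ positive; once the monotone-decrease plus lower-bound pair is in place, Steps 2–3 are routine bookkeeping.
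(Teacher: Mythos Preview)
Your proposal is correct and follows essentially the same route as the paper: the Hong--Luo--Razaviyayn non-convex consensus ADMM template (sufficient decrease of the augmented Lagrangian, lower-boundedness, then a gap estimate yielding the $O(1/\epsilon^{ac})$ rate), after which the $1/p^{\mathit{sysn}}$ factor is inserted to count actual versus successful synchronous rounds. The paper's own proof simply verifies the three hypotheses (Lipschitz gradient via Proposition~\ref{theorem:NFP_Problem}, the step-size condition $\rho_n>\max\{2\ell_n^2/(\rho_n-\ell_n),\ell_n\}\Leftrightarrow\rho_n>2\ell_n$, and lower-boundedness of $g^{p}_n$) and then cites \cite{hong2016convergence} for both convergence and the rate, whereas you reproduce those internals by hand; the algebra $(\rho_n-2\ell_n)(\rho_n+\ell_n)>0$ you use for $c_n>0$ is exactly the paper's derivation of Property~2).

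Two small caveats worth tightening. First, $\Omega$ is \emph{not} a bounded polytope: constraint~\eqref{eq:c6} bounds $\hat t^{\mathit{tr}}$ only from below, so your compactness argument for the lower bound of $L^{p}$ does not go through as stated. The paper instead bounds $g^{p}_n$ directly from $\Phi_{\delta(n)},\Phi_{\delta(n)-1}\in[0,1]$, $\Psi_n\in[0,\lambda_n]$, $\Upsilon\ge0$, obtaining $g^{p}_n\ge-\theta^*_n\lambda_n$; you should swap in that argument. Second, the clean dual bound $\|\bm\sigma^{t+1}_n-\bm\sigma^{t}_n\|\le\ell_n\|\bm x^{t+1}_n-\bm x^{t}_n\|$ relies on updating $\bm x_o$ \emph{before} $\bm x_n$ (so that optimality gives $\bm\sigma^{t+1}_n=-\nabla g^{p}_n(\bm x^{t+1}_n)$), which is indeed the order in Algorithm~\ref{alg:nonconvexadmm}; your stated optimality condition uses $\bm x^{t}_o$, which corresponds to the opposite order of \eqref{eq:PLadmm2-relax-i}--\eqref{eq:PLadmm2-relax-o} and would leave an extra $\rho_n\|\bm x^{t+1}_o-\bm x^{t}_o\|$ residual. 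Either order ultimately yields the same rate, but you should make the order and the optimality condition consistent.
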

\begin{proof}
    \tr{The detailed proof can be found in Appendix \ref{app:theorem:innerConvergence1} of technical report~\cite{qiao2023popec}.}\fp{The detailed proof can be found in Appendix G of technical report~\cite{qiao2023popec}.}
\end{proof}
{
Theorem \ref{theorem:innerConvergence1} implies that when specific parameters satisfy certain criteria, Algorithm \ref{alg:nonconvexadmm} can sublinearly converge.
Additionally, since the iteration will continue if the parameters given by Algorithm \ref{alg:nonconvexadmm} do not satisfy the stop condition of Algorithm \ref{alg:NFP}, we can use Algorithm \ref{alg:NFP} as a condition for the termination
of the entire solution method.}

\subsection{Multi-Priority Task Scheduling and Multi-Server Collaboration\label{subsec:POMIS}}
\subsubsection{Problem Decomposition}
Different from Section \ref{subsec:POSIP} and Section \ref{subsec:POMIP}, which concentrate solely on a single server, we further study multi-priority and multi-server collaboration-based offloading in this subsection. 
In order to resolve the original problem,  we first derive the expression of the optimal migration decision variable $\bm y^*$.
\begin{theorem}
\label{thm:single-or-multi}
    The optimal migration decision of (\textbf{PoPeC}) is
    \begin{equation}  
	\label{eq:opy-y}
		y_{m_n} = \left\{
		\begin{array}{ll}
		 1, &\quad  \phi^{in}(\bm \eta^s)+\phi^{out}(\bm \eta^s)> 0,\\
		 0 , &\quad otherwise,
		\end{array}
		\right.	
	\end{equation}
	where $\phi^{in}(\bm \eta^s) = \sum_{m'\in\mathcal{M}/m_n} \eta^s_{m',m_n} \sum_{\delta \in \Delta} \lambda^{s}_{\delta,m'}$ and $\phi^{out}(\bm \eta^s)$ $= \sum_{m'\in\mathcal{M}/m_n} \eta^s_{m_n,m'} \sum_{\delta \in \Delta} \lambda^{s}_{\delta,m_n}$.
\end{theorem}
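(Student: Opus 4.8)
The plan is to treat $y_{m_n}$ as a binary switch selecting between two values of $\mathbb{E}[A_n]$, namely $\mathbb{E}[A_n\mid y_{m_n}=0]$ and $\mathbb{E}[A_n\mid y_{m_n}=1]$, as given in~\eqref{eq:pomis}, and to argue pointwise (for each fixed $n$, and for fixed feasible $\bm\eta^u,\bm\eta^s$) that the minimizing choice of $y_{m_n}$ is exactly the one stated. Since the PoPeC objective is a sum over $n$ of $\mathbb{E}[A_n]$ and each $y_m$ enters only through the users $n$ with $m_n = m$, the optimization over $\bm y$ decouples into independent binary choices; hence it suffices to compare the two conditional PAoI values for a single server $m$. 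First I would write out $\mathbb{E}[A_n\mid y_{m_n}=0] = \mathbb{E}[T_n]+\mathbb{E}[I_n]+\mathbb{E}[W_n]+1/\mu_n$ with $\mathbb{E}[W_n]$ from~\eqref{eq:paoi-priority-waitingtime-expectation}, and $\mathbb{E}[A_n\mid y_{m_n}=1] = \mathbb{E}[T_n]+\mathbb{E}[I_n]+\sum_{m\in\mathcal{M}}\pi_{n,m}(\bm\eta^s)$ with $\pi_{n,m}$ from~\eqref{eq:multi-server-DataMigration}, so that the sign of the difference $D_n \doteq \mathbb{E}[A_n\mid y_{m_n}=1]-\mathbb{E}[A_n\mid y_{m_n}=0]$ is what governs the optimal decision.

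The key step is to show that the difference $D_n$ has the same sign as $\phi^{in}(\bm\eta^s)+\phi^{out}(\bm\eta^s)$. When both $\phi^{in}$ and $\phi^{out}$ vanish, no tasks are migrated into or out of $m_n$, so $\eta^s_{m_n,m_n}=1$ and $\eta^s_{m',m_n}=0$ for $m'\neq m_n$; substituting this into~\eqref{eq:multi-server-DataMigration} collapses the sum $\sum_{m\in\mathcal{M}}\pi_{n,m}(\bm\eta^s)$ to the single term $\pi_{n,m_n}$, whose bracketed factor reduces precisely to $1/\mu_n$ plus the M/G/1 priority waiting time, i.e.\ to $\mathbb{E}[W_n]+1/\mu_n$; hence $D_n = 0$ and the two options tie, so $y_{m_n}=0$ is (weakly) optimal, matching the stated rule. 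When $\phi^{in}+\phi^{out}>0$, at least one of the incoming or outgoing migration flows is strictly positive; I would argue that migrating strictly reduces the combined queueing-plus-service contribution — intuitively, spreading the priority load across the collaborating servers lowers the M/G/1 waiting-time denominators $1-\sum(\cdot)\lambda^{s}_{\delta,\cdot}/\mu$ that would otherwise be driven up by concentrating all traffic at $m_n$ — so $D_n<0$ and $y_{m_n}=1$ is strictly optimal. This monotonicity is the crux: it should follow from the convexity/monotonicity of the map $\rho\mapsto \rho/(1-\rho)$ governing each server's queue, combined with the fact that $\bm\eta^s$ is a feasible (capacity-respecting, row-stochastic) assignment, so the migrated configuration never overloads any single server beyond what~\eqref{eq:c4a} permits.

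The main obstacle I anticipate is making the "migration strictly helps" claim rigorous in full generality, because $\pi_{n,m}(\bm\eta^s)$ is a sum over \emph{all} servers of nonlinear (fractional) terms, and a given $\bm\eta^s$ could in principle route tasks through a server whose per-task processing time $1/\mu_{n,m}$ or transmission delay $t^\mathit{tr}_{m_n,m}$ is large, so that a naive term-by-term comparison with the all-local value need not be monotone. I would handle this by noting that Theorem~\ref{thm:single-or-multi} is stated for the \emph{optimal} $\bm\eta^s$ (or, more carefully, that $y^*$ is chosen jointly with the optimal $\bm\eta^s$): whenever migration is feasible and active, the optimal $\bm\eta^s$ will only route load where it strictly reduces $\sum_m\pi_{n,m}$ relative to keeping everything local — otherwise setting $\eta^s_{m_n,m_n}=1$ would be at least as good, contradicting activeness. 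Thus the dichotomy reduces to: either the optimal collaboration pattern is the trivial one ($\phi^{in}=\phi^{out}=0$, giving $y_{m_n}=0$), or it is nontrivial and by optimality strictly beats the local option ($D_n<0$, giving $y_{m_n}=1$). Formalizing this "by optimality" argument — essentially that the minimum over $\bm\eta^s$ of $\sum_m\pi_{n,m}$ is nonincreasing in the feasible region and equals $\mathbb{E}[W_n]+1/\mu_n$ exactly at the trivial pattern — is where the real work lies, and I expect the detailed Appendix proof to carry it out via the explicit derivative computations already assembled for Proposition~\ref{theorem:NFP_Problem}.
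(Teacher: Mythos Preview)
Your proposal over-engineers what the paper treats as an essentially definitional observation. The paper's proof (Appendix~\ref{app:thm:single-or-multi-A}) is a two-line consistency argument: $y_{m_n}$ is the indicator of whether server $m_n$ participates in migration, so it is $0$ exactly when no tasks flow into or out of $m_n$ (i.e.\ $\phi^{in}+\phi^{out}=0$), and $1$ otherwise. The substantive content is deferred to Appendix~\ref{app:thm:single-or-multi-B}, where the paper observes that when $\phi^{in}+\phi^{out}=0$ the multi-server expression $\sum_m\pi_{n,m}(\bm\eta^s)$ collapses to the local M/G/1 formula, so the two branches of~\eqref{eq:pomis} agree; hence one may take $\hat{\bm y}^{*}=\bm 1$ without loss and eliminate $\bm y$ from the problem entirely, yielding~\textbf{P3}.

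You correctly identify the tie when $\phi^{in}+\phi^{out}=0$, which is the only analytic step the paper actually carries out. But you then set out to prove that whenever $\phi^{in}+\phi^{out}>0$ the migrated PAoI is \emph{strictly} smaller than the local one for the given $\bm\eta^s$ --- and you rightly flag that this could fail for a poorly chosen $\bm\eta^s$ (large $t^\mathit{tr}_{m_n,m}$ or $1/\mu_{n,m}$). The paper never attempts this comparison and does not need it: the ``optimality'' of~\eqref{eq:opy-y} is not established by showing $D_n<0$ termwise, but by noting that the local formula $\mathbb{E}[A_n\mid y_{m_n}=0]$ is itself recovered as the special case $\bm\eta^s=\text{identity}$ of the multi-server formula, so minimizing over $(\bm\eta^s,\bm y)$ is equivalent to minimizing $\sum_m\pi_{n,m}(\bm\eta^s)$ over $\bm\eta^s$ alone with $\bm y\equiv\bm 1$. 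Your ``by optimality'' paragraph is headed in this direction, but the derivative computations you anticipate are unnecessary --- the argument is purely an inclusion of feasible sets, not a monotonicity estimate.
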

\begin{proof}
    \tr{The detailed proof can be found in Appendix \ref{app:thm:single-or-multi-A} of technical report~\cite{qiao2023popec}.}\fp{The detailed proof can be found in Appendix H-A of technical report~\cite{qiao2023popec}.}
\end{proof}

\tr{Combining Theorem \ref{thm:single-or-multi} and Appendix \ref{app:thm:single-or-multi-B} of technical report~\cite{qiao2023popec},}\fp{Combining Theorem \ref{thm:single-or-multi} and Appendix H-B of technical report~\cite{qiao2023popec},}
\textbf{(PoPeC)} is equivalent to the following problem:
\begin{equation}
\label{eq:P3}
\begin{split}
    \textbf{(P3)}~&\mathop{\min}_{\bm x, \bm z}
    \frac{1}{N}\sum_{n\in\mathcal{N}} F^1_n(\bm x, \bm z) \\
    \text{s.t.}&~ 
    \eqref{eq:c1},\eqref{eq:c2},
    \eqref{eq:c3a},\eqref{eq:c5a},\eqref{eq:c4a},
    \eqref{eq:c6},
    \eqref{eq:c7b},\eqref{eq:c8b}, 
\end{split}
\end{equation}
where $F^1_n(\bm x, \bm z) = \hat t^\mathrm{tr}_n + \frac{1}{\sum_{c\in\mathcal{C}}p_{n,c}\eta^u_{n,c}\lambda_{n}} + \sum_{m\in\mathcal{M}} \pi_{n,m}(\bm z)$,
$\bm x = \{\hat t^\mathrm{tr}, \bm\eta^u\}$ and $\bm z = \bm\eta^s$.
The main challenge in solving \textbf{P3} lies in the non-convexity of  $\sum_{n\in\mathcal{N}} F^1_n(\bm x, \bm z)$ and the dependency of $\bm x$ and $\bm z$.
We further decompose the problem \textbf{P3}.

\begin{lemma}
    \label{lemma:Transform-DataMigration-ChannelAllocation}
Problem \textbf{P3} can be equivalently transformed into the \textbf{Channel Allocation} subproblem,  \textbf{P3-1}, and the \textbf{Server Collaboration} subproblem, \textbf{P3-2}, which are shown as follow:
\begin{align}
    \textbf{(P3-1)}~&\mathop{\min}_{\bm x}\frac{1}{N}\sum_{n\in\mathcal{N}} F^2_n(\bm x), \nonumber\\
    \text{s.t.}&~ \eqref{eq:c1},\eqref{eq:c2},\eqref{eq:c6},\eqref{eq:c7b},\eqref{eq:c8b},\nonumber\\
    \label{eq:DataMigration-AuxiliaryInequality}
    &\sum_{n \in \mathcal{N}^{\delta}_m}\sum_{c \in C} p_{n,c} \eta^u_{n,c}\lambda_{n} \leq \lambda^{s}_{\delta,m}.
\end{align}
\begin{equation}
\begin{split}
    \textbf{(P3-2)}~&\mathop{\min}_{\bm z}\frac{1}{N}\sum_{n\in\mathcal{N}} F^3_n(\bm z, \bm \lambda^s), \\
    \text{s.t.}&~ \eqref{eq:c3a},\eqref{eq:c5a},\eqref{eq:c4a},
\end{split}
\label{eq:P3-2}
\end{equation}
where we define $F^2_n(\bm x) \doteq \hat t^\mathrm{tr}_n + \frac{1}{\sum_{c\in\mathcal{C}}p_{n,c}\eta^u_{n,c}\lambda_{n}}$,  $F^3_n(\bm z, \bm \lambda^s) \doteq \sum_{m\in\mathcal{M}} \pi_{n,m}(\bm z)$,
and $\lambda^{s}_{\delta,m} \doteq  \sum_{m' \in \mathcal{M}}  \eta^s_{m,m'} \lambda^{s}_{\delta,m}$.
\end{lemma}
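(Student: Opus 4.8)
\textbf{Proof proposal for Lemma~\ref{lemma:Transform-DataMigration-ChannelAllocation}.}

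The plan is to show that the apparent coupling between $\bm x$ and $\bm z$ in Problem \textbf{P3} is artificial: once the migration decision $\bm y^*$ is fixed by Theorem~\ref{thm:single-or-multi}, the objective $\sum_{n\in\mathcal{N}} F^1_n(\bm x, \bm z)$ separates additively into a term depending only on $\bm x$ (the transmission/arrival-interval part $\hat t^\mathit{tr}_n + 1/\sum_{c}p_{n,c}\eta^u_{n,c}\lambda_n$) and a term depending only on $\bm z$ (the migration/queueing part $\sum_{m\in\mathcal{M}}\pi_{n,m}(\bm z)$), apart from the single quantity $\lambda^s_{\delta,m}$ which, by its definition \eqref{eq:lambda-s}, is a function of $\bm\eta^u$ through $\lambda_{n,c}=\eta^u_{n,c}\lambda_n$ and the success probabilities $p_{n,c}$. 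So the only genuine interaction is that $\bm z$'s feasible region and objective see $\bm x$ solely via the aggregate load $\lambda^s_{\delta,m}$. First I would make this $\lambda^s_{\delta,m}$ an explicit intermediate variable and argue that it suffices for the channel-allocation stage to deliver \emph{enough} aggregate reliable load of each priority at each server, which is exactly the content of the new constraint \eqref{eq:DataMigration-AuxiliaryInequality}: $\sum_{n\in\mathcal{N}^{\delta}_m}\sum_{c}p_{n,c}\eta^u_{n,c}\lambda_n \le \lambda^s_{\delta,m}$.

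The key steps, in order, are: (i) substitute $\bm y=\bm y^*$ from Theorem~\ref{thm:single-or-multi} into \eqref{eq:pomis} so that $\mathbb{E}[A_n]$ collapses to $F^1_n(\bm x,\bm z)=\hat t^\mathit{tr}_n + 1/\Psi_n(\bm x) + \sum_{m}\pi_{n,m}(\bm z)$, using the $\mathbb{E}[T_n]$, $\mathbb{E}[I_n]$ expressions given after \eqref{eq:paoi-expectation2} together with the Little's-Law identity $\mathbb{E}[W^s_n]+\mathbb{E}[Y^s_n]=\sum_m \pi_{n,m}(\bm\eta^s)$; (ii) observe $F^1_n = F^2_n(\bm x) + F^3_n(\bm z,\bm\lambda^s)$ with the two pieces as defined in the statement, hence the sum over $n$ splits; (iii) minimize over $\bm x$ and $\bm z$ sequentially — minimize the $\bm z$-part for each fixed load profile $\bm\lambda^s$, then minimize the $\bm x$-part subject to producing that profile; (iv) verify that replacing the exact equality $\lambda^s_{\delta,m}=\sum_{n\in\mathcal{N}^{\delta}_m}\sum_c p_{n,c}\eta^u_{n,c}\lambda_n$ by the inequality \eqref{eq:DataMigration-AuxiliaryInequality} is without loss of optimality, because $F^3_n$ and the capacity constraints \eqref{eq:c4a} are monotone in $\lambda^s_{\delta,m}$ in the right direction (a larger declared $\lambda^s$ never helps the objective and only tightens \eqref{eq:c4a}), so at an optimum the inequality is tight; and (v) check that the constraint partition is exact — \eqref{eq:c1},\eqref{eq:c2},\eqref{eq:c6},\eqref{eq:c7b},\eqref{eq:c8b} constrain only $\bm x$ and go to \textbf{P3-1}, while \eqref{eq:c3a},\eqref{eq:c5a},\eqref{eq:c4a} constrain only $\bm z$ (given $\bm\lambda^s$) and go to \textbf{P3-2}, with \eqref{eq:DataMigration-AuxiliaryInequality} the lone linking constraint placed in \textbf{P3-1}.

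I expect the main obstacle to be step (iv): justifying rigorously that the relaxation from equality to inequality in \eqref{eq:DataMigration-AuxiliaryInequality} preserves the optimal value. This needs a monotonicity argument showing that each $\pi_{n,m}(\bm z)$ in \textbf{P3-2} is nondecreasing in every $\lambda^s_{\delta,m'}$ on the feasible region (so that the server-collaboration stage never benefits from inflated declared loads), together with the observation that the only other place $\lambda^s_{\delta,m}$ enters is the capacity constraint \eqref{eq:c4a}, where a larger value is again only restrictive; combining these gives that any feasible point of the relaxed problem can be pushed to one with \eqref{eq:DataMigration-AuxiliaryInequality} tight without increasing the objective, establishing equivalence. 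A secondary subtlety is confirming that $\bm y^*$ from Theorem~\ref{thm:single-or-multi} is simultaneously optimal for all users sharing a server and does not itself depend on $\bm x$ in a way that reintroduces coupling — but since $\bm y^*$ is driven only by whether any migration flow is active, i.e.\ by $\bm\eta^s$, this is immediate once $\bm z=\bm\eta^s$ is the decision variable. The remaining steps are bookkeeping: matching terms in $F^1_n$, $F^2_n$, $F^3_n$ and verifying the constraint bookkeeping, which I would state briefly and defer the routine algebra to the appendix of the technical report \cite{Nan2023}.
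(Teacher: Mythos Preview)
Your proposal is essentially the same as the paper's: split $F^1_n$ additively into $F^2_n(\bm x)+F^3_n(\bm z,\bm\lambda^s)$, partition the constraints accordingly, introduce $\lambda^s_{\delta,m}$ as the sole linking quantity, and then relax the linking equality to the inequality \eqref{eq:DataMigration-AuxiliaryInequality}. The only difference is in step~(iv): the paper justifies the relaxation more tersely by observing that enlarging the feasible set to $\le$ can only lower $\sum_n F^2_n$ (writing the relaxed optimum as the minimum of the $=$ and $<$ cases), whereas you aim for a two-sided monotonicity argument on $\pi_{n,m}$ and \eqref{eq:c4a}; either route suffices for the subsequent iterative scheme in Algorithm~\ref{alg:IS}.
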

\begin{proof}
    \tr{The detailed proof can be found in Appendix \ref{app:lemma:Transform-DataMigration-ChannelAllocation} of technical report~\cite{qiao2023popec}.}\fp{The detailed proof can be found in Appendix I of technical report~\cite{qiao2023popec}.}
\end{proof}

{
Based on the lemma provided, we develop methods to address the channel allocation \textbf{P3-1} and server cooperation \textbf{P3-2} iteratively on the user and server sides, respectively.}

\subsubsection{Channel Allocation} 
The goal of \textbf{P3-1} is to allocate channels for each local server and the user it serves. Based on Lemma \ref{lemma:Transform-DataMigration-ChannelAllocation}, we can easily obtain the optimal solution of this sub-problem as follows:
     \begin{equation}
        \label{eq:opt-x}
        \begin{split}
            \bm x^{*}=\{\bm x^{*}_m\},
        \end{split}
     \end{equation}
where $\bm x^{*}_m$ can be derived from 
\begin{align}
    \textbf{(P3-3)} \mathop{\min}_{\bm x_m}  \sum_{n\in\mathcal{N}_m} F^2_n(\bm x_m, \bm \lambda^s),~\nonumber\\
    \text{s.t.} \eqref{eq:c1},\eqref{eq:c2}, \eqref{eq:c6},\eqref{eq:c7b},\eqref{eq:c8b}, \eqref{eq:DataMigration-AuxiliaryInequality}.
\end{align}
We further analyze \textbf{P3-3} to identify an optimal solution.

\begin{lemma}
\label{lemma:DataMigration-ChannelAllocation}
    Problem \textbf{P3-3} is a convex problem. 
    \end{lemma}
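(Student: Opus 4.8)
The plan is to verify the two defining properties of a convex program for \textbf{P3-2}: that its feasible region is convex and that its objective $\frac1N\sum_{n\in\mathcal N}F^3_n(\bm z,\bm\lambda^s)=\frac1N\sum_{n\in\mathcal N}\sum_{m\in\mathcal M}\pi_{n,m}(\bm z)$ is convex there. The feasibility part is immediate: once the channel-allocation stage fixes every $\lambda^s_{\delta,m'}$, Constraints~\eqref{eq:c3a}, \eqref{eq:c5a}, and \eqref{eq:c4a} become box, affine-equality, and affine-inequality constraints in $\bm z=\bm\eta^s$, so their intersection is a convex polytope. I would further restrict attention to the stability subregion on which $1-\sum_{\delta\in\Delta(\delta(n))}\sum_{m'}\eta^s_{m',m}\lambda^s_{\delta,m'}/\mu_{n,m}$ and its $\Delta(\delta(n)-1)$ counterpart are strictly positive, so that every $\pi_{n,m}$ is finite and twice continuously differentiable.

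Since a sum of convex functions is convex, it suffices to show each $\pi_{n,m}(\bm z)$ is convex. First I would simplify the leading routing fraction: the common factor $\sum_{\delta}\lambda^s_{\delta,m_n}$ cancels and $\sum_{m'}\eta^s_{m_n,m'}=1$ by \eqref{eq:c5a}, so on the feasible set that fraction equals $\eta^s_{m_n,m}$. Introducing the affine aggregates $R_m(\bm z)=\sum_{\delta}\sum_{m'}\eta^s_{m',m}\lambda^s_{\delta,m'}$, $P_m(\bm z)=\sum_{\delta\in\Delta(\delta(n))}\sum_{m'}\eta^s_{m',m}\lambda^s_{\delta,m'}$ and $Q_m(\bm z)=\sum_{\delta\in\Delta(\delta(n)-1)}\sum_{m'}\eta^s_{m',m}\lambda^s_{\delta,m'}$, each term reduces to
\begin{equation*}
\pi_{n,m}=\eta^s_{m_n,m}\left(c_{n,m}+\frac{\tfrac12\nu_{n,m}R_m}{\bigl(1-P_m/\mu_{n,m}\bigr)\bigl(1-Q_m/\mu_{n,m}\bigr)}\right),
\end{equation*}
with $c_{n,m}=t^\mathit{tr}_{m_n,m}+1/\mu_{n,m}\ge0$ constant. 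The key difficulty is now visible: unlike the priority-free delay of \textbf{P1-1}, whose numerator and denominator share a single linear combination and therefore collapse to a one-dimensional convex function, the bracketed two-priority delay depends on the three \emph{distinct} nested affine forms $R_m\ge P_m\ge Q_m$ and is not convex on its own (the scalar map $(R,u)\mapsto R/u$ already has an indefinite Hessian). Consequently the convexity of $\pi_{n,m}$ cannot be read off termwise; it must be recovered from the multiplicative coupling with the routing weight $\eta^s_{m_n,m}$, and establishing this is the main obstacle.

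To clear it I would show $\nabla^2\pi_{n,m}$ is positive semidefinite directly. Writing $z_1:=\eta^s_{m_n,m}$, $g:=c_{n,m}+W_{n,m}$ (with $W_{n,m}$ the bracketed fraction) and $\bm g:=\nabla g$, one has $\nabla^2\pi_{n,m}=z_1\nabla^2 g+\bigl(\bm e_1\bm g^{\!\top}+\bm g\,\bm e_1^{\!\top}\bigr)$, where $\bm e_1$ is the coordinate direction of $z_1$. Freezing all routing variables other than $z_1$, the linear part $c_{n,m}z_1$ is trivially convex while the remaining part becomes $\tfrac12\nu_{n,m}z_1(\kappa z_1+\rho)/\bigl[(s_0-s_1z_1)(t_0-t_1z_1)\bigr]$ — a ratio of a nonnegative convex quadratic over a product of two strictly positive affine factors, with $\kappa,\rho,s_i,t_i\ge0$ — and I would verify that its second derivative is nonnegative on the stability interval by reducing to the sign of a low-degree polynomial, whose positivity follows from $z_1\ge0$, the nonnegativity of the coefficients with which $z_1$ enters $R_m,P_m,Q_m$, and the strictly positive stability margins. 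For the full Hessian, the potentially negative contribution of the rank-two perturbation $\bm e_1\bm g^{\!\top}+\bm g\,\bm e_1^{\!\top}$ is confined to $\mathrm{span}\{\bm e_1,\bm g\}$, on which the required inequality $z_1\,\bm v^{\!\top}\nabla^2 g\,\bm v\ge -2v_1(\bm g^{\!\top}\bm v)$ reduces to a quantitative curvature-versus-gradient estimate that I expect to obtain, via a Schur-complement argument, from the same positive-coefficient and positive-margin facts. Summing the resulting convex $\pi_{n,m}$ over all $n\in\mathcal N$ and $m\in\mathcal M$ yields a convex objective, which together with the convex feasible region proves that \textbf{P3-2} is a convex program.
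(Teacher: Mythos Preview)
Your central step---showing that each $\pi_{n,m}$ is convex on the feasible polytope---cannot succeed, because $\pi_{n,m}$ is in fact \emph{not} convex. Already in the single-priority reduction (so that $Q_m\equiv 0$ and $P_m=R_m$), with two source servers, unit service rate, $\nu_{n,m}=2$, $c_{n,m}=0$ and unit $\lambda^s$, your own simplification gives
\[
\pi_{n,m}\;=\;\frac{u(u+v)}{1-u-v}\;=\;\frac{u}{1-u-v}-u,
\]
where $u=\eta^s_{m_n,m}$ and $v=\eta^s_{m',m}$ for some $m'\neq m_n$. A direct computation yields $\det\nabla^2\!\bigl[u/(1-u-v)\bigr]=-1/(1-u-v)^4<0$, so the Hessian is indefinite throughout the stability region; since $u$ and $v$ belong to different rows of $\bm\eta^s$, Constraint~\eqref{eq:c5a} does not couple them and this is a feasible direction. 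The multiplicative coupling with the routing weight $\eta^s_{m_n,m}$ therefore does \emph{not} restore convexity, and the Schur-complement ``curvature-versus-gradient estimate'' you hope for cannot hold.

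This is consistent with the paper itself, which a few lines later calls the server-collaboration problem NP-hard and attacks it with nonlinear fractional programming plus \emph{non-convex} ADMM (Proposition~\ref{proposition:DataMigration-MigrationAllocation1}, Lemma~\ref{lemma:DataMigration-MigrationAllocation2}). The paper's one-line proof never touches $\pi_{n,m}$: it simply invokes the convexity of $F^1_n$ established for the user-side PAoI in the appendix to Theorem~\ref{thm:admm-consensus}, together with the affinity of the constraints. Read in context---its placement inside the Channel Allocation subsection, the reference to the convex solver AC of Section~\ref{subsec:POSIP}, and its role in justifying step~(c) (the $\bm x$-update) in the proof of Theorem~\ref{thm:DataMigration-Convergence}---the lemma is asserting convexity of the \emph{channel-allocation} subproblem \textbf{P3-3}, not of \textbf{P3-2}; the label appears to be a typographical slip. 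Your careful analysis was aimed at the wrong, and genuinely non-convex, target.
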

    \begin{proof}
        \tr{We can easily obtain the convexity of $F^1_n$ from Appendix \ref{app:lemma:paoi_mg1} of technical report~\cite{qiao2023popec}.}\fp{We can easily obtain the convexity of $F^1_n$ from Appendix C of technical report~\cite{qiao2023popec}.} Since both the sub-function and the constraints are convex, we yield the result. 
    \end{proof}
    Based on Lemma \ref{lemma:DataMigration-ChannelAllocation} and the solution of \ref{subsec:POSIP}, \textbf{P3-3} can be resolved by the existing method AC in Algorithm \tr{\ref{alg:admm}}\fp{5}.
    \tr{The details can be found in Section \ref{subsubsec:POSIP2} and Appendix \ref{app:det:admm-consensus} of technical report~\cite{qiao2023popec}.}\fp{The details can be found in Section D and Appendix D of technical report~\cite{qiao2023popec}.}
    Each local server can find an optimal solution $\bm x^{*}_m$ for all users it covers, which is in fact the optimal channel allocation for \textbf{P3-1}.
    
\subsubsection{Server Collaboration}
    Problem \textbf{P3-2} seeks to address the problem of multi-server collaboration between various servers in order to lessen the load on overhead servers and speed up task execution to decrease PAoI for multi-priority users, which has been shown to be an NP-Hard problem in section \ref{subsec:POMIP}.
    However, we develop an effective migration strategy,
    based on Lemma \ref{thm:single-or-multi}, the server collaboration strategy is
    $\bm z^* = \mathop{\arg \min}_{\bm z} \{\frac{1}{N} \sum_{n\in\mathcal{N}} \sum_{m\in\mathcal{M}} \pi_{n,m}(\bm z) ~ \text{s.t.}~\eqref{eq:c3a},\eqref{eq:c5a},\eqref{eq:c4a} \}$.

{\begin{proposition}
    \label{proposition:DataMigration-MigrationAllocation1}
    Given $\bm \lambda^s$, Problem \textbf{P3-2}  is equivalent to the following problem:
    \begin{align}
        \textbf{(P3-4)}~&\mathop{\min}_{\{\bm z_{n,m}\}} \sum_{n\in\mathcal{N}} \sum_{m\in\mathcal{M}} \{\pi^{u}_{n,m}(\bm z_{n,m}) - \vartheta^{*}_{n,m} \pi^{l}_{n,m}(\bm z_{n,m})\},\nonumber \\
        \text{s.t.}&~~\eqref{eq:c1},\eqref{eq:c2},
        \eqref{eq:c7a},
        \eqref{eq:c8a},
        \eqref{eq:c6}.
        \label{eq:P3-4}
    \end{align}
Given $\bm z_{n,m}$, $\vartheta^{*}_{n,m}$ is the minimum of $\frac{\pi^{u}_{n,m}(\bm z_{n,m})}{\pi^{l}_{n,m}(\bm z_{n,m})}$, 
where
\begin{align}
\left\{
    \begin{array}{l}
        \pi^{l}_{n,m}(\bm z)=\Phi^{\pi}_{\delta(n),m}(\bm z)\Phi^{\pi}_{\delta(n)-1,m}(\bm z)\sum_{m'\in\mathcal{M}} \Psi_{n, m'}^{\pi}(\bm z),\\
        \pi^{u}_{n,m}(\bm z)=\Lambda_{n,m}\Psi_{n,m}^{\pi}(\bm z)\Phi^{\pi}_{\delta(n),m}(\bm z)\Phi^{\pi}_{\delta(n)-1,m}(\bm z)\nonumber\\
        ~~~~~~~~~~+\Psi_{n,m}^{\pi}(\bm z)\Upsilon^{\pi}(\bm z),\\
        \Phi^{\pi}_{\delta(n),m}(\bm z) = 1-\sum_{\delta\in{\Delta(\delta(n))}}\sum_{m'\in\mathcal{M}} \eta^s_{m',m} \lambda^{s}_{\delta,m'} \frac{1}{\mu_{n,m}},\\
        \Upsilon^{\pi}_{n,m}(\bm z) = \frac{1}{2}\sum_{\delta\in\Delta} \sum_{m'\in\mathcal{M}} \eta^s_{m',m} \lambda^{s}_{\delta,m'} \nu_{n,m},\\
        \Psi_{n,m}^{\pi}(\bm z)=\sum_{\delta\in\Delta} \eta^s_{m_n,m} \lambda^{s}_{\delta,m_n},\\
        \Lambda_{n,m} = t^\mathrm{tr}_{m_n,m} + \frac{1}{\mu_{n,m}}.
    \end{array}
\right.
\end{align}
\end{proposition}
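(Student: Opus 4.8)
The plan is to recognize \textbf{P3-2} as a sum of nonlinear fractional programs and dispatch it with the Dinkelbach parametric transformation, along the same lines as the passage from \textbf{P2} to \textbf{P2-1} in Proposition~\ref{pro:npl-mp}. The first step is the algebraic identity $\pi_{n,m}(\bm z) = \pi^{u}_{n,m}(\bm z)/\pi^{l}_{n,m}(\bm z)$. Reading off \eqref{eq:multi-server-DataMigration}, the leading quotient has numerator $\sum_{\delta\in\Delta}\eta^{s}_{m_n,m}\lambda^{s}_{\delta,m_n} = \Psi^{\pi}_{n,m}(\bm z)$ and denominator $\sum_{\delta\in\Delta}\sum_{m'\in\mathcal{M}}\eta^{s}_{m_n,m'}\lambda^{s}_{\delta,m_n} = \sum_{m'\in\mathcal{M}}\Psi^{\pi}_{n,m'}(\bm z)$, while the bracketed factor, after combining its two nested reciprocals over the common denominator $\Phi^{\pi}_{\delta(n),m}(\bm z)\Phi^{\pi}_{\delta(n)-1,m}(\bm z)$, equals $\bigl(\Lambda_{n,m}\Phi^{\pi}_{\delta(n),m}(\bm z)\Phi^{\pi}_{\delta(n)-1,m}(\bm z) + \Upsilon^{\pi}_{n,m}(\bm z)\bigr) / \bigl(\Phi^{\pi}_{\delta(n),m}(\bm z)\Phi^{\pi}_{\delta(n)-1,m}(\bm z)\bigr)$. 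Multiplying the product of these two factors top and bottom by $\Psi^{\pi}_{n,m}(\bm z)$ and matching term by term against the definitions quoted in the proposition yields exactly $\pi^{u}_{n,m}$ and $\pi^{l}_{n,m}$.

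Second, I would prove the single-ratio Dinkelbach equivalence~\cite{dinkelbach1967nonlinear}. Fix $n$ and $m$ and let $\vartheta^{*}_{n,m} = \min_{\bm z_{n,m}} \pi^{u}_{n,m}(\bm z_{n,m})/\pi^{l}_{n,m}(\bm z_{n,m})$ over the feasible set, attained at some $\bm z^{\star}_{n,m}$. The capacity/stability constraint \eqref{eq:c4a} keeps every load factor appearing in $\Phi^{\pi}_{\delta(n),m}$ and $\Phi^{\pi}_{\delta(n)-1,m}$ strictly below one and keeps $\sum_{m'\in\mathcal{M}}\Psi^{\pi}_{n,m'}(\bm z)$ positive, so $\pi^{l}_{n,m}(\bm z) > 0$ on the whole feasible region. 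Hence $\pi^{u}_{n,m}(\bm z_{n,m}) - \vartheta^{*}_{n,m}\pi^{l}_{n,m}(\bm z_{n,m}) = \pi^{l}_{n,m}(\bm z_{n,m})\bigl(\pi^{u}_{n,m}(\bm z_{n,m})/\pi^{l}_{n,m}(\bm z_{n,m}) - \vartheta^{*}_{n,m}\bigr) \ge 0$ for every feasible $\bm z_{n,m}$, with equality precisely at the minimizers of the ratio; consequently the parametric subproblem $\min_{\bm z_{n,m}}\{\pi^{u}_{n,m}(\bm z_{n,m}) - \vartheta^{*}_{n,m}\pi^{l}_{n,m}(\bm z_{n,m})\}$ has optimal value $0$ and the same argmin set as the ratio minimization.

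Third, I would lift this to the whole objective. Introducing per-pair copies $\bm z_{n,m}$ tied by the consensus constraint used throughout Section~\ref{sec:PoPeC} (exactly as \textbf{P2-1} is turned into the consensus form \textbf{P2-2}), we have $\frac{1}{N}\sum_{n\in\mathcal{N}}\sum_{m\in\mathcal{M}}\pi_{n,m}(\bm z) = \frac{1}{N}\sum_{n\in\mathcal{N}}\sum_{m\in\mathcal{M}}\pi^{u}_{n,m}(\bm z_{n,m})/\pi^{l}_{n,m}(\bm z_{n,m})$, and by the previous step each summand is minimized at the same point by the surrogate $\pi^{u}_{n,m}(\bm z_{n,m}) - \vartheta^{*}_{n,m}\pi^{l}_{n,m}(\bm z_{n,m})$; since the feasible set is left unchanged and overall scaling by $1/N$ does not affect minimizers, the two problems share the same set of optimal solutions, which is the asserted equivalence with \textbf{P3-4}. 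I expect the main obstacle to be the bookkeeping in the first step --- correctly clearing the two layers of nested fractions in \eqref{eq:multi-server-DataMigration} and aligning every factor with $\Psi^{\pi}$, $\Phi^{\pi}$, $\Upsilon^{\pi}$, and $\Lambda_{n,m}$ --- together with the auxiliary check that $\pi^{l}_{n,m} > 0$ on the feasible region, since the sign argument in the Dinkelbach step (and hence the claimed equivalence) rests on it.
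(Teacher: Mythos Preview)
Your proposal is correct and follows essentially the same route as the paper: rewrite each $\pi_{n,m}$ as the ratio $\pi^{u}_{n,m}/\pi^{l}_{n,m}$ by clearing the nested fractions in \eqref{eq:multi-server-DataMigration}, and then invoke Dinkelbach's parametric transformation term by term exactly as in Proposition~\ref{pro:npl-mp}. The paper's own argument is terser---it simply asserts that $\pi^{u}_{n,m}$ and $\pi^{l}_{n,m}$ are the numerator and denominator polynomials and cites the necessary-and-sufficient Dinkelbach condition---so your explicit algebra and the positivity check $\pi^{l}_{n,m}>0$ are welcome additions rather than departures.
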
}
\begin{proof}
{${\pi^{u}_{n,m}(\bm z)}$ and ${\pi^{l}_{n,m}(\bm z)}$ are polynomial functions representing the numerator and denominator of the fraction $\pi_{n,m}(\bm z)$, respectively. Accordingly, we can reframe problem \textbf{P3-2} using nonlinear fractional programming, resulting in an equivalent problem denoted as \textbf{P3-4} with given $\bm \lambda^s$.
\tr{For more detailed problem transformation, please refer to Appendix \ref{app:proposition:DataMigration-MigrationAllocation1} of technical report~\cite{qiao2023popec}.}\fp{For more detailed problem transformation, please refer to Appendix J-1 of technical report~\cite{qiao2023popec}.}}
\end{proof}
Combining Proposition \ref{proposition:DataMigration-MigrationAllocation1} and Section \ref{subsubsec:NFP}, we can apply NFP in Algorithm \ref{alg:NFP} to convert \textbf{P3-2} to \textbf{P3-4}.
After transforming the problem in Proposition \ref{proposition:DataMigration-MigrationAllocation1}, we obtain problem \textbf{P3-4}, which has a cubic polynomial objective function. Nevertheless, deriving the closed-form solution for \textbf{P3-4} is still challenging, we instead provide an iterative algorithm as follows.
We first analyze the properties of \textbf{P3-4}.

\begin{lemma}
\label{lemma:DataMigration-MigrationAllocation2}
    In \textbf{P3-4}, the first-order derivative of $\pi^{u}_{n,m}(\bm z_{n,m}) - \vartheta^{*}_{n,m} \pi^{l}_{n,m}(\bm z_{n,m})$ is Lipschitz continuous.
\end{lemma}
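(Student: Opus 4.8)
The plan is to show that $\pi^{u}_{n,m}(\bm z_{n,m}) - \vartheta^{*}_{n,m}\pi^{l}_{n,m}(\bm z_{n,m})$, viewed as a function of the finite-dimensional vector $\bm z_{n,m}$ of migration proportions, is a polynomial with a gradient that is Lipschitz on the compact feasible region cut out by \eqref{eq:c3a}, \eqref{eq:c5a}, \eqref{eq:c4a}. The key observation, already recorded in Proposition \ref{proposition:DataMigration-MigrationAllocation1}, is that $\pi^{u}_{n,m}$ and $\pi^{l}_{n,m}$ are \emph{cubic} polynomials in $\bm z$: each of $\Psi^{\pi}_{n,m}(\bm z)$, $\Phi^{\pi}_{\delta(n),m}(\bm z)$, $\Phi^{\pi}_{\delta(n)-1,m}(\bm z)$, $\Upsilon^{\pi}_{n,m}(\bm z)$ is affine in $\bm z$ (they are linear combinations of the entries $\eta^s_{m',m}$ with coefficients built from the fixed quantities $\lambda^{s}_{\delta,m'}$, $1/\mu_{n,m}$, $\nu_{n,m}$, $t^{\mathit tr}_{m_n,m}$), and the two target functions are products of at most three such affine factors. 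Hence $g(\bm z) \doteq \pi^{u}_{n,m}(\bm z) - \vartheta^{*}_{n,m}\pi^{l}_{n,m}(\bm z)$ is a polynomial of degree at most three, so its gradient $\nabla g$ is a vector of polynomials of degree at most two, i.e. each component of the Hessian $\nabla^2 g$ is affine in $\bm z$.

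First I would make the affine-factor structure explicit: write $\Psi^{\pi}_{n,m}(\bm z) = \langle \bm a, \bm z\rangle$, $\Phi^{\pi}_{\delta(n),m}(\bm z) = 1 - \langle \bm b, \bm z\rangle$, $\Phi^{\pi}_{\delta(n)-1,m}(\bm z) = 1 - \langle \bm b', \bm z\rangle$, and $\Upsilon^{\pi}_{n,m}(\bm z) = \langle \bm d,\bm z\rangle$ for constant vectors $\bm a,\bm b,\bm b',\bm d$ whose entries are bounded in terms of $\lambda_{\mathit{max}}$, $1/\mu_{\mathit{min}}$, and $\nu_{n,m}$ (exactly as in Proposition \ref{theorem:NFP_Problem}). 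Then I would differentiate the product form twice, using the product rule, to get an explicit formula for $\nabla^2 g(\bm z)$ as a sum of terms each of which is (a rank-one or rank-two constant matrix) times (at most one affine factor). Second, I would bound $\|\nabla^2 g(\bm z)\|$ uniformly over the feasible set: each affine factor lies in $[0,1]$ on the region defined by \eqref{eq:c3a}, \eqref{eq:c5a}, \eqref{eq:c4a} (the $\eta^s_{m,m'}$ live in $[0,1]$ and the factors $\Phi^{\pi}$ are stability terms bounded by $1$, while $\Psi^\pi$ and $\Upsilon^\pi$ are bounded using \eqref{eq:c4a}), so the spectral norm of $\nabla^2 g$ is bounded by a fixed constant $\ell^{\pi}_{n,m}$ assembled from $\|\bm a\|,\|\bm b\|,\|\bm b'\|,\|\bm d\|$ and $\vartheta^{*}_{n,m}$. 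Finally, $\sup_{\bm z}\|\nabla^2 g(\bm z)\| \le \ell^{\pi}_{n,m} < \infty$ on a convex feasible set immediately yields $\|\nabla g(\bm z_1)-\nabla g(\bm z_2)\| \le \ell^{\pi}_{n,m}\|\bm z_1 - \bm z_2\|$ by the mean-value inequality along the segment $[\bm z_1,\bm z_2]$, which is the claim.

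The main obstacle is bookkeeping rather than any deep estimate: the derivative of a triple product of affine maps has many cross terms, and one must be careful that $\vartheta^{*}_{n,m}$ — although it depends on $\bm z_{n,m}$ through its definition as a minimum — is treated as a \emph{fixed constant} at the point where the Dinkelbach subproblem \textbf{P3-4} is solved, just as $\theta^{*}_n$ was in Proposition \ref{theorem:NFP_Problem}. A secondary subtlety is confirming that every affine factor is genuinely bounded on the feasible region: $\Phi^{\pi}_{\delta(n),m}$ and $\Phi^{\pi}_{\delta(n)-1,m}$ must stay bounded away from blowing up, which holds because they are at most $1$ (the subtracted load term is nonnegative), and $\Psi^{\pi}_{n,m}$, $\Upsilon^{\pi}_{n,m}$ are bounded by the capacity constraint \eqref{eq:c4a}. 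Once those bounds are in hand, the Lipschitz constant can be written down in closed form in the same style as $\ell_n$ in Proposition \ref{theorem:NFP_Problem}; the detailed constant is relegated to the technical report.
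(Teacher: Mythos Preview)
Your proposal is correct and matches the paper's own argument essentially line for line: the paper simply observes that $\pi^{u}_{n,m}-\vartheta^{*}_{n,m}\pi^{l}_{n,m}$ is a cubic polynomial in $\bm z$ on a bounded feasible box, so its Hessian is bounded and hence $\ell_{n,m} I \succeq \nabla^{2}(\pi^{u}_{n,m}-\vartheta^{*}_{n,m}\pi^{l}_{n,m})$, explicitly pointing back to Proposition~\ref{theorem:NFP_Problem} as the template. Your write-up is in fact more detailed than the paper's (which leaves the explicit constant to the technical report), but the route is the same.
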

\begin{proof}
    \tr{The detailed proof can be found in Appendix \ref{app:lemma:DataMigration-MigrationAllocation2} of technical report~\cite{qiao2023popec}.}\fp{The detailed proof can be found in Appendix J-2 of technical report~\cite{qiao2023popec}.}
\end{proof}
Based on Lemma \ref{lemma:DataMigration-MigrationAllocation2} and Section \ref{subsubsec:nonconvex-admm}, we use NAC and NFP in Algorithm \ref{alg:nonconvexadmm} to gain an efficient solution of \textbf{P3-4}. 
\tr{In this case, the complexity of the method is $O(1/\epsilon^{\mathrm{ac}})$, which can be proved by Appendix \ref{app:proof:DataMigration-MigrationAllocation3} of technical report~\cite{qiao2023popec}.}\fp{In this case, the complexity of the method is $O(1/\epsilon^{\mathrm{ac}})$, which can be proved by Appendix J-3 of technical report~\cite{qiao2023popec}.}






\subsubsection{Iterative Solution}
We design an iterative solution algorithm that first obtains the initial $\bm x_o$ inside each local server with a given $\bm y$. In the algorithm, $\bm x^t$ and $\bm z^t$ are solved alternately to obtain the solution. In the following, we establish the convergence guarantee for the proposed algorithm.

\begin{algorithm}[htbp]
	\caption{Iterative Solution (IS)}
	\label{alg:IS}
	\LinesNumbered
	\KwIn{$t=0$, $\lambda^{s,0}_{\delta,m} = \lambda^{s,\mathrm{max}}_m$}
    \While{not done}{
        {Compute $\bm x^{t+1}$ according to Eq.~\eqref{eq:opt-x}} by Lemma \ref{lemma:DataMigration-ChannelAllocation}\\
        {Compute $\bm \lambda^{s,t+1}$ according to Eq.~\eqref{eq:lambda-s}}\\
        {Compute $\bm z^{t+1}$ according to Eq.~\eqref{eq:P3-4}} by Lemma \ref{lemma:DataMigration-MigrationAllocation2}\\
        {Update $t = t + 1$}\\
	    }
	{Compute $\bm y^*$ by Eq.~\eqref{eq:opy-y}}\\
	\KwOut{$\bm x^*=\bm x^{t}$,$\bm y^*$,$\bm z^*=\bm z^{t}$}
\end{algorithm}

\begin{theorem}
\label{thm:DataMigration-Convergence}
If we solve \textbf{P3} by Algorithm \ref{alg:IS}, $F_n(\bm x^t, \bm z^t)$
monotonically decreases and converges to a unique point.
\end{theorem}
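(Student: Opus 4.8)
The plan is to establish the two claims of the theorem separately and then combine them: (i) the sequence $F_n(\bm x^t,\bm z^t)$ is monotonically non-increasing along the iterations of Algorithm~\ref{alg:IS}, and (ii) the sequence is bounded below, so it converges; finally (iii) the limit is unique because each block update is a minimization over a convex feasible set with a (block-wise) convex objective, hence attains a well-defined value. First I would recall the decomposition from Lemma~\ref{lemma:Transform-DataMigration-ChannelAllocation}: for fixed $\bm z$ (equivalently fixed $\bm\lambda^s$), minimizing $\frac1N\sum_n F^1_n(\bm x,\bm z)$ over $\bm x$ reduces to the Channel Allocation subproblem \textbf{P3-1}, and for fixed $\bm x$ (hence fixed $\bm\lambda^s$ via \eqref{eq:lambda-s}) it reduces to the Server Collaboration subproblem \textbf{P3-2}. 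Then the core monotonicity argument is the standard block-coordinate (alternating minimization) inequality: since step~3 of Algorithm~\ref{alg:IS} computes $\bm x^{t+1}$ as the optimal solution of \textbf{P3-1} given $\bm\lambda^{s,t}$ (Lemma~\ref{lemma:DataMigration-ChannelAllocation} guarantees this is an exact convex minimization), we get $\sum_n F^1_n(\bm x^{t+1},\bm z^{t})\le\sum_n F^1_n(\bm x^{t},\bm z^{t})$; and since step~5 computes $\bm z^{t+1}$ as the minimizer of \textbf{P3-2} given the updated $\bm\lambda^{s,t+1}$, we get $\sum_n F^1_n(\bm x^{t+1},\bm z^{t+1})\le\sum_n F^1_n(\bm x^{t+1},\bm z^{t})$. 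Chaining these yields $F_n(\bm x^{t+1},\bm z^{t+1})\le F_n(\bm x^{t},\bm z^{t})$ in the aggregate sense, which is the monotone-decrease claim.

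Next I would argue boundedness below. Each term of $F^1_n$ — namely $\hat t^\mathit{tr}_n$, $1/\sum_{c}p_{n,c}\eta^u_{n,c}\lambda_n$, and $\sum_{m}\pi_{n,m}(\bm z)$ — is nonnegative on the feasible region: the auxiliary variable $\hat t^\mathit{tr}_n>0$ by \eqref{eq:c6}, the interarrival term is a positive reciprocal, and $\pi_{n,m}(\bm z)$ as defined in \eqref{eq:multi-server-DataMigration} is a product of a nonnegative routing fraction with a sum of nonnegative queueing delays, where positivity of the denominators $1-\sum\cdots/\mu$ is enforced by the capacity/confidence constraints \eqref{eq:c7b}, \eqref{eq:c8b} and \eqref{eq:c4a}. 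Hence $\frac1N\sum_n F^1_n\ge 0$, and a monotone sequence bounded below converges; denote its limit $F^\infty$. Because each subproblem is solved \emph{exactly} (not by a single descent step), the cluster points of $(\bm x^t,\bm z^t)$ are partial optima of both blocks, i.e. coordinate-wise minimizers, and by the convexity established in Lemma~\ref{lemma:DataMigration-ChannelAllocation} and the Dinkelbach-based exact solution of \textbf{P3-4} (Proposition~\ref{proposition:DataMigration-MigrationAllocation1}) such a point is unique in objective value; I would invoke the strict convexity / uniqueness of the per-user reciprocal and fractional terms to conclude the limit point itself (not just the value) is determined, giving the "converges to a unique point" statement.

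The main obstacle I anticipate is rigorously justifying the uniqueness of the limit point rather than merely the convergence of the objective values: alternating minimization in general only guarantees monotone decrease of the objective and subsequential convergence to a coordinatewise (Nash-type) stationary point, and without a joint convexity or Kurdyka--\L ojasiewicz-type argument the iterate sequence itself need not converge to a single point. I would handle this by exploiting the specific structure here — $F^1_n$ is separable across users once $\bm\lambda^s$ is fixed, $F^2_n$ is strictly convex in the relevant scalar aggregates $\Psi_n(\bm x)$, and the transformed server-collaboration objective after Proposition~\ref{proposition:DataMigration-MigrationAllocation1} has a strictly convex surrogate with Lipschitz gradient (Lemma~\ref{lemma:DataMigration-MigrationAllocation2}) — so that each block minimizer is unique and the fixed-point map is well-defined; combined with the monotone bounded objective this pins down a unique limit. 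A secondary subtlety is ensuring the intermediate iterates stay strictly feasible so the denominators never vanish (so that $F^1_n$ stays finite and continuous along the trajectory), which follows from the fact that $\Omega$ and the constraint set of \textbf{P3-2} are closed and the exact minimizers lie in their (relative) interiors by the confidence constraints being strict inequalities in \eqref{eq:c8a}.
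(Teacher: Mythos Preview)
Your overall structure is close to the paper's, but there is a genuine gap in the monotonicity step for the $\bm x$-update. You claim that because $\bm x^{t+1}$ solves \textbf{P3-1} given $\bm\lambda^{s,t}$, it follows that $\sum_n F^1_n(\bm x^{t+1},\bm z^{t})\le\sum_n F^1_n(\bm x^{t},\bm z^{t})$. This does not follow directly. Problem \textbf{P3-1} minimizes only $F^2_n(\bm x)$, not $F^1_n(\bm x,\bm z^t)$; the server-side term $F^3_n(\bm z,\bm\lambda^s)=\sum_m\pi_{n,m}$ also depends on $\bm x$ through $\bm\lambda^s$ via \eqref{eq:lambda-s}, and this dependence is \emph{not} part of the objective in \textbf{P3-1}. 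So solving \textbf{P3-1} tells you $F^2_n(\bm x^{t+1})\le F^2_n(\bm x^{t})$, but says nothing by itself about $F^3_n(\bm z^t,\bm\lambda^s(\bm x^{t+1}))$ versus $F^3_n(\bm z^t,\bm\lambda^s(\bm x^{t}))$.

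The paper closes this gap with an extra intermediate step that your proposal omits. The auxiliary constraint \eqref{eq:DataMigration-AuxiliaryInequality} in \textbf{P3-1} forces $\lambda^{s,t+1}_{\delta,m}\le\lambda^{s,t}_{\delta,m}$ for all $\delta,m$; the paper then verifies that $\partial F^3_n(\bm z,\bm\lambda^s)/\partial\lambda^{s}_{\delta,m}>0$ on the feasible set, so $F^3_n(\bm z^t,\bm\lambda^{s,t+1})\le F^3_n(\bm z^t,\bm\lambda^{s,t})$. Only after chaining this with the $F^2_n$ decrease and then the $\bm z$-update decrease do you obtain $\sum_n F^1_n(\bm x^{t+1},\bm z^{t+1})\le\sum_n F^1_n(\bm x^{t},\bm z^{t})$. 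You need to insert exactly this monotonicity-in-$\bm\lambda^s$ argument; without it your block-coordinate inequality is unjustified. Your boundedness-below argument is fine and matches the paper; your extended discussion of uniqueness of the limit \emph{point} (as opposed to the limit \emph{value}) goes well beyond what the paper actually proves, which simply invokes monotone decrease plus a lower bound.
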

\begin{proof}
    \tr{The detailed proof can be found in Appendix \ref{app:thm:DataMigration-Convergence} of technical report~\cite{qiao2023popec}.}\fp{The detailed proof can be found in Appendix K of technical report~\cite{qiao2023popec}.}
\end{proof}

\section{Discussion}\label{sec:Discussion}
{In this section, we expand upon and analyze the performance of PoPeC.  Firstly, we introduce a communication-efficient asynchronous parallel algorithm and investigate both its convergence and convergence rate.  Following that, we delve into the distinctions and benefits of our approach when compared to non-priority and traditional multi-priority methods.}
\subsection{Asynchronous Parallel Algorithm}
\label{subsec:Discussion-AsynchronousADMM}
\begin{algorithm}[htbp]
    \label{alg:AsynchronousADMM}
	\caption{Asynchronous Non-Convex ADMM-Consensus (ANAC)}
	\LinesNumbered   
    \BlankLine
	{/* MEC Side */}\\
	\KwIn{$\epsilon^{\mathrm{ac}}$, $\tau_{o}$ and $\left(\{\bm x^0_n\}, \{\theta^*_n\}\right)$ from Algorithm \ref{alg:NFP}}
	{Convergence = False}\\
	\While{Convergence == False}{
        {Calculate $\bm x^{\tau_{o}+1}_{o} = \arg\min_{\bm x_{o} \in \Omega} L^p\left(\{\bm x^0_n\}, \bm x_{o}, \{\bm\sigma^0_n\}\right)$}
        \eIf{$\eta\left(\bm x^{\tau_{o}}, \bm\sigma^{\tau_{o}}\right) < \epsilon^{\mathrm{ac}}$}{
        {Convergence = True}\\
        {Send $\bm x^{\tau_{o}+1}_{o}$ and Convergence to all \textbf{users}}}
        {{Send $\bm x^{\tau_{o}+1}_{o}$ and Convergence to all \textbf{users}}\\
        {Wait for some fixed period of time}\\
        {Receive all the gradients $\{x^{\tau^r_n}_{n}\}$ and all the local time $\{\bm\tau^r_n\}$ from \textbf{users}}\\
        {Record the received users in $\mho^{\tau_{o}+1}$}\\
	    \eIf{$n \in \mho^{\tau_{o}+1}$}{
        {Select $\tau_n = \min \{\bm\tau^r_n\}$ and $\nabla G^{p,\tau_{o}+1}_n = x^{\tau^r_n}_{n}$}}
	{{Select $\nabla G^{p,\tau_{o}+1}_n = \nabla G^{p,\tau_{o}}_n$}}
	    
        {Calculate $\bm x^{\tau_{o}+1}_n = \bm x^{\tau_{o}+1}_o - \frac{1}{\rho_n} (\nabla G^{p,\tau_{o}+1}_n + \bm\sigma^{\tau_{o}}_n)$}\\
        {Calculate $\bm\sigma^{\tau_{o}+1}_n = \bm\sigma^{\tau_{o}}_n + \rho_n (\bm x^{\tau_{o}+1}_n + \bm x^{\tau_{o}+1}_o)$}\\
        {Update $\tau_{o} = \tau_{o} + 1$}}
	    }
	\KwOut{$\bm x^{\tau_{o}}_{o}$}
	    
	{/* USER Side */}\\
	\For{user $n\in\mathcal{N}$}{
	    {Initialize $\tau_n = 0$}\\
    	\While{Receive $\bm x^r$ and Convergence from \textbf{MEC}}{
    	\eIf{Convergence == False}{
    	{Select $\bm x^{\tau_{n}+1}_n = \bm x^r$}\\
    	{Calculate $\nabla g^{p}_n(\bm x^{\tau_{n}+1}_n)$}\\
    	{Send $\nabla g^{p}_n(\bm x^{\tau_{n}+1}_n)$ and $(\tau_{n}+1)$} to \textbf{MEC} via the most reliable channel\\
    	{Update $\tau_{n} = \tau_{n} + 1$}\\}
    	{{Select $\bm x_n = \bm x^r$}\\
    	\KwOut{$\bm x_n$}}
    	}
	}
	 
\end{algorithm}
{In the previous sections, we propose several synchronous parallel algorithms. Nevertheless, the reliability and effectiveness of these algorithms can be severely affected by communication failures and chaos resulting from faulty communication networks. 
As Theorem \ref{theorem:innerConvergence1} has revealed, Algorithm \ref{alg:nonconvexadmm} has a slow convergence speed, highlighting the need to replace it with a more communication-efficient alternative.}


{Firstly, users opt to prioritize the selection of the channel with the highest reliability rate to maximize the success rate during iterations, represented as $p^\mathrm{max}_n = \max_c\{p_{n,c}\}$.
In doing so, the risk of communication iteration loss is intuitively reduced.
To ensure successful iterations in an asynchronous algorithm, there needs to be a limit on the number of communications required for each communication unit.
In order to satisfy the delay bound of iterations needed for successful communication $\Gamma_n$, we have $(1-p^\mathrm{max}_n)^{\Gamma_n}<\epsilon^{a}$, where  $\epsilon^{a}$ is the maximum tolerance for asynchronous iterative communication.
Both sides are logarithmic at the same time, it is $\Gamma_n \geq  \frac{ln(\epsilon^{a})}{ln(1-p^\mathrm{max}_n)}$.
 Thus, we have}
    \begin{assumption}
	\label{assumption:finiteCommunication}
	     The upper bound on the number of communications to complete a successful iteration satisfies
	    \begin{equation}
	        \Gamma_n = \left\lceil \frac{ln(\epsilon^{a})}{ln(1-p^{\mathrm{max}}_n)} \right\rceil,
	    \end{equation}
        where $\lceil x \rceil$  is the ceiling function of $x$.
    \end{assumption}
{This is a standard assumption in the asynchronous ADMM literature~\cite{hong2016convergence,hong2017distributed}. In the worst case, it is a necessary condition to ensure that each user finishes one iteration within $\Gamma_n$ iterations.}

Furthermore, we present an asynchronous variant of Algorithm \ref{alg:nonconvexadmm} as Algorithm \ref{alg:AsynchronousADMM}.  In each iteration, each user computes the gradient based on the most recently received information from the server and sends it to the local server.  
The server collects all available iterations, updates the value of $\bm x$, and passes the latest information back to the user.  
This asynchronous approach can help reduce communication overhead and improve convergence speed.
In addition, we recognize that users may have limited computational resources, and thus, we suggest assigning a minimal number of computational tasks or designing the tasks to be as simple as possible.

\textbf{Convergence Analysis:}
If Assumption \ref{assumption:finiteCommunication} is satisfied and we set $\rho_n>\max\{7\ell_n, \ell_n(\Gamma^2_n + \frac{3}{7}(\Gamma_n + 1)^2)\}$, the sequence $\{\{\bm x_n\}, \bm x\}$ in Algorithm \ref{alg:AsynchronousADMM} converges to the set of stationary solutions of the problem, based on \cite[Theorem 3.1]{hong2017distributed}.
Moreover, for $\epsilon^{\mathrm{ac}}>0$, we obtain
    \begin{equation}
        p^{\mathrm{asyn}}\Gamma^{\mathrm{asyn}}\epsilon^{\mathrm{ac}} <k^{\Gamma}(L^{p}\left(\left\{\bm x_{n}^{1}\right\}, \bm x_{o}^{1}, \bm \sigma^{1}\right)-\underline{G}^{p}),
    \end{equation}
    where  
    $\epsilon^{\mathrm{ac}}$ is a positive iteration factor, 
    $k^{\Gamma}$ is a constant, 
    $\underline{G^{p}}$ is the lower bound of $\sum_{n\in\mathcal{N}} g^{p}_n(\bm x_n)$,
    $\Gamma^\mathrm{asyn}$ is number of iterations, \ie,  
    $\Gamma^\mathrm{asyn} = \min \left\{t\mid \eta\left(\bm x^t, \bm\sigma^t\right) \leq \epsilon, t \geq 0\right\}$, $p^{\mathrm{asyn}} = 1 - \Pi_{n\in\mathcal{N}}(1-p^{\mathrm{max}}_n)$ denote the probability that at least one communication unit communicates successfully in an iteration and
$p^{\mathrm{asyn}}\Gamma^{\mathrm{asyn}}$ represents the number of successful iterations.
{\tr{According to detailed analysis and proof in Appendix \ref{app:GapFunction-Convergence} of technical report~\cite{qiao2023popec},}\fp{According to detailed analysis and proof in Appendix L of technical report~\cite{qiao2023popec}, }
we have
    \begin{equation}
         \Gamma^{\mathrm{asyn}}<\frac{k^{\Gamma}(L^{p}\left(\left\{\bm x_{n}^{1}\right\}, \bm x_{o}^{1}, \bm \sigma^{1}\right)-\underline{G}^{p})}{\epsilon^{\mathrm{ac}}p^{\mathrm{asyn}}},
    \end{equation}
which means Algorithm \ref{alg:AsynchronousADMM} converges to an $\epsilon^{\mathrm{ac}}$-stationary point within $O(1/(p^\mathrm{asyn}\epsilon^\mathrm{ac}))$.
Therefore, the asynchronous parallel algorithm is approximately $p^\mathrm{asyn}/p^\mathrm{syn}$ times faster, in comparison to the synchronous parallel approach according to Theorem \ref{theorem:innerConvergence1}.
The value of $p^\mathrm{asyn}/p^\mathrm{syn}$ is greater than 1, and it increases as the channel quality declines.
}

{Such convergence analysis shows that Algorithm \ref{alg:AsynchronousADMM} converges faster than Algorithm    \ref{alg:nonconvexadmm}, particularly when transmission reliability is low.
Furthermore, we demonstrate the effectiveness of the asynchronous algorithm through numerous simulation experiments.
In each iteration, we are required to compute $\bm{x}{o}^{\tau{o} + 1} = \arg \min_{\bm{x}{o} \in \Omega} \mathcal{L}^p \left({\bm{x}^0_n}, \bm{x}{o}, {\bm{\sigma}^0_n}\right)$, where $\mathcal{L}^p$ is a convex problem. 
We can exploit widely-used gradient descent or interior point methods to solve this problem with low computational cost.}
In particular, we introduce an asynchronous parallel communication algorithm tailored for the issue of unreliable channels in priority-free cases as well
\tr{(see Appendix \ref{app:det:admm-consensus-C} of technical report~\cite{qiao2023popec} for more details).}\fp{(see Appendix D-C of technical report~\cite{qiao2023popec} for more details).


\subsection{Why Multi-Class Priority?}

{Existing multi-priority methods focus on scenarios where each user has a unique priority level, which can lead to the unjust treatment of users who should have the same priority, as exemplified in previous studies such as~\cite{huang2015optimizing, xu2020peak, maatouk2019age}.
This may undermine the principles of fairness and equality that these systems are intended to uphold. However, the multi-priority model can extend to handle different scenarios to promote fair allocation, including cases where users have unequal priorities or where multiple users share the same priority level. In our research, this model is called multi-class priority, which aims to prevent unfair treatment of users with similar priorities by allocating resources equitably.
To the best of our knowledge, our study represents the first attempt to explore such a  priority model with a broader range of practical applications.}

{Next, we explore the benefits of our proposed multi-class priority mechanism in contrast to the commonly used multi-priority and no-priority mechanisms. Specifically, we delve into the effects of our approach on the performance of three distinct user groups:}

{\paragraph{The highest priority users} 
The PAoI of user $n*$, who are set to the highest priority ($\mathbb{E}[A^{p}_{n^*}]$), has more optimal information freshness than the priority-free case ($\mathbb{E}[A_{n^*}]$) under the same offloading strategy, which is
\begin{align}
    &\mathbb{E}[A_{n^*}] - \mathbb{E}[A^{p}_{n^*}]
    \nonumber\\
    = &\Upsilon(\bm\eta^u) \frac{\zeta_{\Delta}(\bm\eta^u)-\zeta_{\delta(n^*)}(\bm\eta^u)}{(1-\zeta_{\Delta}(\bm\eta^u))(1-\zeta_{\delta(n^*)}(\bm\eta^u))}\nonumber\\
    \geq& 0.
\end{align}
where 
$\zeta_{\delta(n^*)}(\bm\eta^u) =\sum_{\delta\in\Delta(\delta(n^*))}\sum_{n'\in\mathcal{N}^{\delta}}\sum_{c\in\mathcal{C}}p_{n',c}\frac{\eta^u_{n',c}\lambda_{n'}}{\mu_{n'}}$,  
$\zeta_{\Delta}(\bm\eta^u) =\sum_{\delta\in\Delta}\sum_{n'\in\mathcal{N}^{\delta}}\sum_{c\in\mathcal{C}}p_{n',c}\frac{\eta^u_{n',c}\lambda_{n'}}{\mu_{n'}}$,  and
$\Upsilon(\bm\eta^u) = \frac{1}{2}\sum_{\delta\in\Delta}\sum_{n'\in\mathcal{N}^{\delta}}\sum_{c\in\mathcal{C}}p_{n',c}\eta^u_{n',c}\lambda_{n'}\nu_{n'}$.
This formula indicates that prioritized offloading systems yield greater advantages for users with high real-time requirements.
\tr{(For more proof details, please refer to Appendix \ref{app:proof:sp-m:a} of technical report~\cite{qiao2023popec}.)}\fp{(For more proof details, please refer to Appendix N-A of technical report~\cite{qiao2023popec}.)}
}

{\paragraph{The lowest priority users} 
Similarly, the PAoI of user $n_*$, who is set to have the lowest priority ($\mathbb{E}[A^{p}_{n_*}]$), has a worse PAoI than the priority-free case ($\mathbb{E}[A_{n_*}]$). Thus, we gain
\begin{align}
    &\mathbb{E}[A_{n_*}] - \mathbb{E}[A^{p}_{n_*}] 
    \nonumber\\
    =& \frac{\Upsilon(\bm\eta^u)} {1-\zeta_{\Delta}(\bm\eta^u)}(1-\frac{1}{1-\zeta_{\delta(n)-1}(\bm\eta^u)})\nonumber\\
    \leq& 0.
\end{align}
This formula demonstrates that the potential drawback of a priority offloading system is the insufficient guarantee of information freshness for users with lower priority.
\tr{(For more proof details, please refer to Appendix \ref{app:proof:sp-m:b} of technical report~\cite{qiao2023popec}.)}\fp{(For more proof details, please refer to Appendix N-B of technical report~\cite{qiao2023popec}.)}
}
{
\paragraph{The higher priority users} 
Furthermore, multiple users are assumed to have the same priority level but are instead assigned priorities of $\delta^0$ and $\delta^0-\hat\delta$ in the multi-priority scenario, the resulting difference in freshness can be substantial and unfair:
\begin{align}
    &\mathbb{E}[A^{p}_{n}|\delta(n)=\delta^0] - \mathbb{E}[A^{p}_{n}|\delta(n)=\delta^0-\hat\delta]
    \nonumber\\
    \ge& \frac{\Upsilon(\bm\eta^u)(\zeta_{\delta^0}(\bm\eta^u)-\zeta_{\delta^0 - \hat\delta - 1}(\bm\eta^u))}{(1-\zeta_{\delta^0}(\bm\eta^u))(1-\zeta_{\delta^0 - 1}(\bm\eta^u))(1-\zeta_{\delta^0 - \hat\delta - 1}(\bm\eta^u))}\nonumber\\
    \geq& 0.
\end{align}
This formula reveals that the most straightforward and impact approach to getting fresher information is elevating the user's priority; otherwise, its priority would be lowered.
\tr{(For more proof details, please refer to Appendix \ref{app:proof:sp-m:c} of technical report~\cite{qiao2023popec}.)}\fp{(For more proof details, please refer to Appendix N-C of technical report~\cite{qiao2023popec}.)}
Based on the conclusions drawn, we have:
\begin{align}
    &\mathbb{E}[A^{p}_{n}|\delta(n)=\delta^h] - \mathbb{E}[A^{p}_{n}|\delta(n)=\delta^l]\nonumber\\
    =&\mathbb{E}[A^{p}_{n}|\delta(n)=\delta^l] - \mathbb{E}[A^{p}_{n}|\delta(n)=\delta^l-\hat\delta)]\nonumber\\
    \leq& 0,
\end{align}
where $\hat\delta=\delta^l-\delta^h>0$ represents the differential priority level, with $\delta^{h}$ and $\delta^{l}$ denoting high and low priorities, respectively.
This formula reveals that high-priority users are guaranteed to have lower PAoI values compared to low-priority users.
}

{This subsection highlights that user $n$ can get superior performance as compared to priority-free by allocating it a high priority. 
Furthermore, allocating user $n$ to a higher level also enhances its performance and allows it to obtain more up-to-date information at the expense of users with lower priority.
We substantiate these claims with comprehensive simulations in the next section.}

\section{Simulation}\label{sec:simulation}
{In this section, we evaluate our proposed algorithm by answering the following questions:
\begin{enumerate}
    \item What is the overall utility of our method?
    \item Can it schedule multi-priority tasks effectively?
    \item Can it deal with heterogeneous and unreliable channels effectively?
\end{enumerate}
}

\subsection{Experimental Setup}\label{subsec:Exp-Setup}
{\textit{\textbf{Parameters.}}
In this section, we carry out simulations to evaluate the effectiveness, performance, and computational efficiency of our proposed method. In the priority-free case, we consider 10 servers and 200 users within their respective coverage areas. For the multi-priority case, we examine at least three priorities, allocating users to different priority levels. 
Moreover, we model the transmission success probability of the channel as a Gaussian distribution $N(0.5,1)$, following~\cite{scutari2008asynchronous}.  We set the number of available channels $C$ to be 30, with a bandwidth of $B = 5 MHz$, and channel gains were set to unity as in~\cite{balasubramanian2009energy}.  
To account for the heterogeneity of users and servers, the service time of tasks followed a general distribution, where the mean and variance of the distribution are determined by the types of users and servers. Specifically, we set the value of the mean and variance of the general distribution to follow the uniform distribution $U(1,5)$ and $U(1,25)$, respectively.}

{\textit{\textbf{Performance metrics.}}
In the following experiments, we mainly use PAoI and throughput as performance metrics. A lower PAoI signifies fresher information for the user and a reduced number of outdated tasks. Conversely, a higher throughput indicates better utilization of communication and computation resources within the same experimental setup.
}

\begin{table}[htpb]
	\caption{Parameters in Simulation}
	\label{table:parameters}
	\centering
	\renewcommand\arraystretch{1.25}
	\resizebox{\columnwidth}{!}{
	\begin{tabular}{l|l}
		\hline
		\textbf{Parameter} & \textbf{Value}\\ 
		\hline
		\# users ($N$) & an integer varying between $[1,200]$\\
		\# channels ($C$) & an integer varying between $[1,30]$\\
		\# servers ($M$) & an integer varying between $[1,10]$\\
		the channel condition & 
		    \begin{tabular}[c]{@{}l@{}}
		        channel bandwidth ($B$) as 5MHz\\
		        channel gains ($R_c$) as 1\\
		    \end{tabular} \\
		the service time & 
		    \begin{tabular}[c]{@{}l@{}}
		        obey the general distribution\\
		        with the mean $\mu$ following $U(1,5)$, \\ 
		        and the variance $\nu$ following $U(1,25)$, respectively
		    \end{tabular} \\                  
		the transmission rates  & real numbers (Mbit/slot) varying between $[0,0.5]$\\
		task generation rates & real numbers (Mbit/slot) following $U(0.5,1.5)$ \\                                           \hline
	\end{tabular}
	}
\end{table}

\textit{\textbf{Baselines.}}
We compare our proposed algorithm with existing algorithms in the literature to perform a comprehensive analysis. Specifically, we compare our method with the Age-Aware Policy (AAP) algorithm which utilizes throughput constraints through the Lyapunov optimization method~\cite{sun2021age}. We also consider the Greedy Control Algorithm (GA), which selects the most reliable channel among the unreliable channels. Additionally, to account for the lack of priority mechanism in AAP and GA, we compare our algorithm with the Priority Scheduling method of Peak Age of Information in Priority Queueing Systems (PAUSE)~\cite{xu2020peak} and the Rate and Age of Information (RAI) method~\cite{abdollahi2022rate}.
To ensure a fair comparison, we assume that each user sends the maximum possible number of tasks to the edge server, and the edge server completes the tasks in a First-Come-First-Serve (FCFS) manner.

\textit{\textbf{Implementation.}}
The simulation platform is Matlab R2019a and all the simulations are performed on a laptop with 2.5 GHz Intel Core i7 and 16 GB RAM.

\subsection{Overall Utility}\label{subsec:Exp-Utility}
{
In the assessment of the overall utility, we conducted an analysis of various metrics, such as PAoI and throughput, under different methods.
We further compare our PAoI-based approach with the latency-based method and the weight-based method.
Moreover, we examined their performance in various settings, including priority allocation and server collaboration.  
}

{Firstly, we compare the performance of our proposed algorithm (OUR) with the Greedy Control Algorithm (GA) and the Age-Aware Policy (AAP) algorithm in multi-user and multi-server cases.
As depicted in Fig.\ref{fig:PTvsN} and \ref{fig:PTvsM}, the blue bars represent throughput, while the black lines illustrate the Packet Age of Information (PAoI). 
Our evaluation underscores that the PAoI value and throughput are notably influenced by the number of users ($N$) and servers ($M$). 
Overall, OUR's main strength lies in its consistently superior overall performance in comparison to GA and AAP.
Notably, OUR's PAoI excels across diverse parameter settings compared to GA and AAP.
However, in scenarios characterized by resource limitations, as illustrated in Fig.\ref{fig:PTvsM}, the throughput performance of our algorithm is slightly lower.
This is attributed to OUR's predominant emphasis on minimizing PAoI, whereas the AAP method inherently prioritizes throughput with its foundational constraint.
Although throughput is not OUR's primary focus, 
it achieves an optimal or at least suboptimal level when contrasted with the other two algorithms.
}
\begin{figure}[ht!]
	\centering
	\subfigure[PAoI \& Throughout vs. $N$]{\label{fig:PTvsN}\includegraphics[width=0.225\textwidth]{./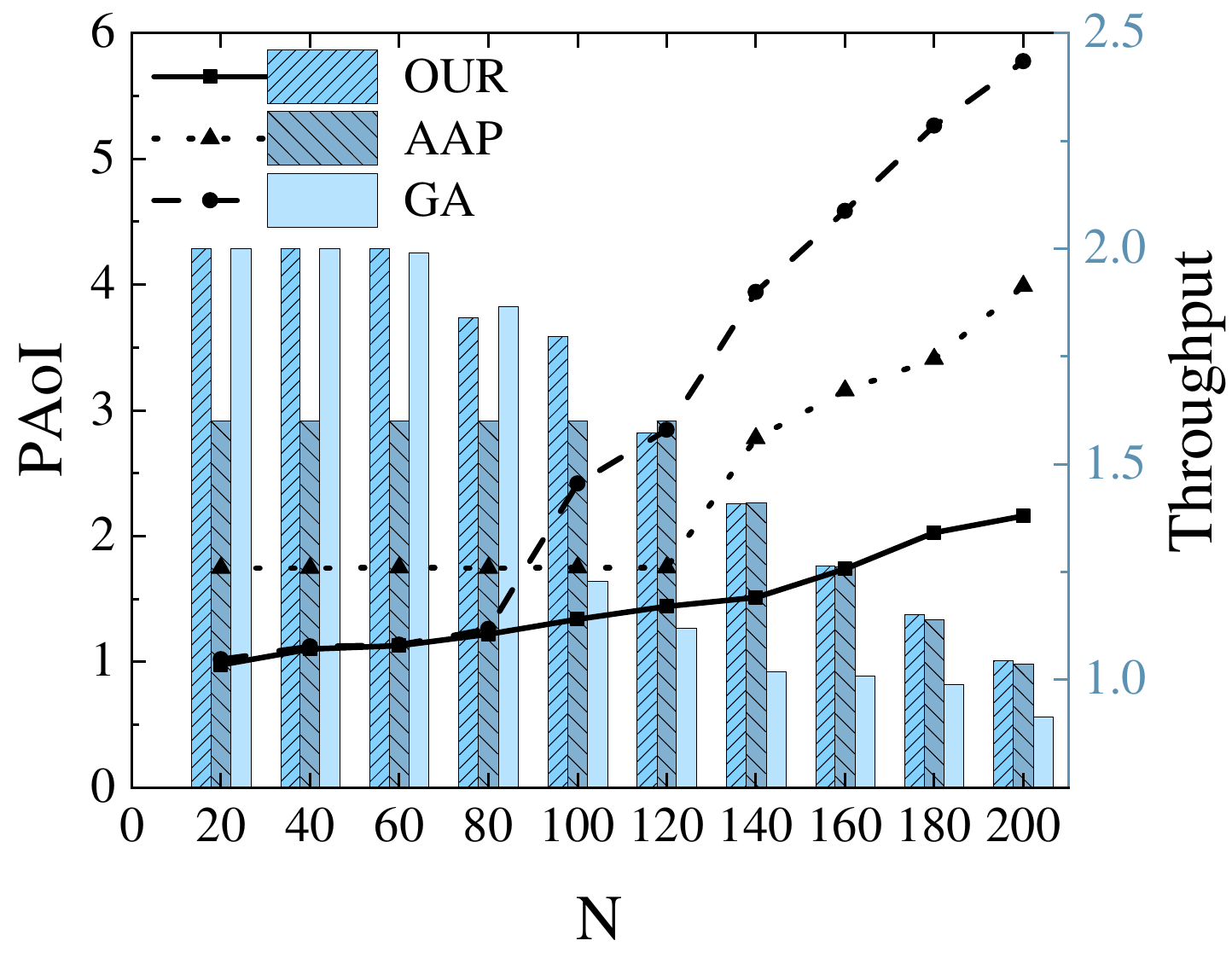}}
	\subfigure[PAoI \& Throughout vs. $M$]{\label{fig:PTvsM}\includegraphics[width=0.225\textwidth]{./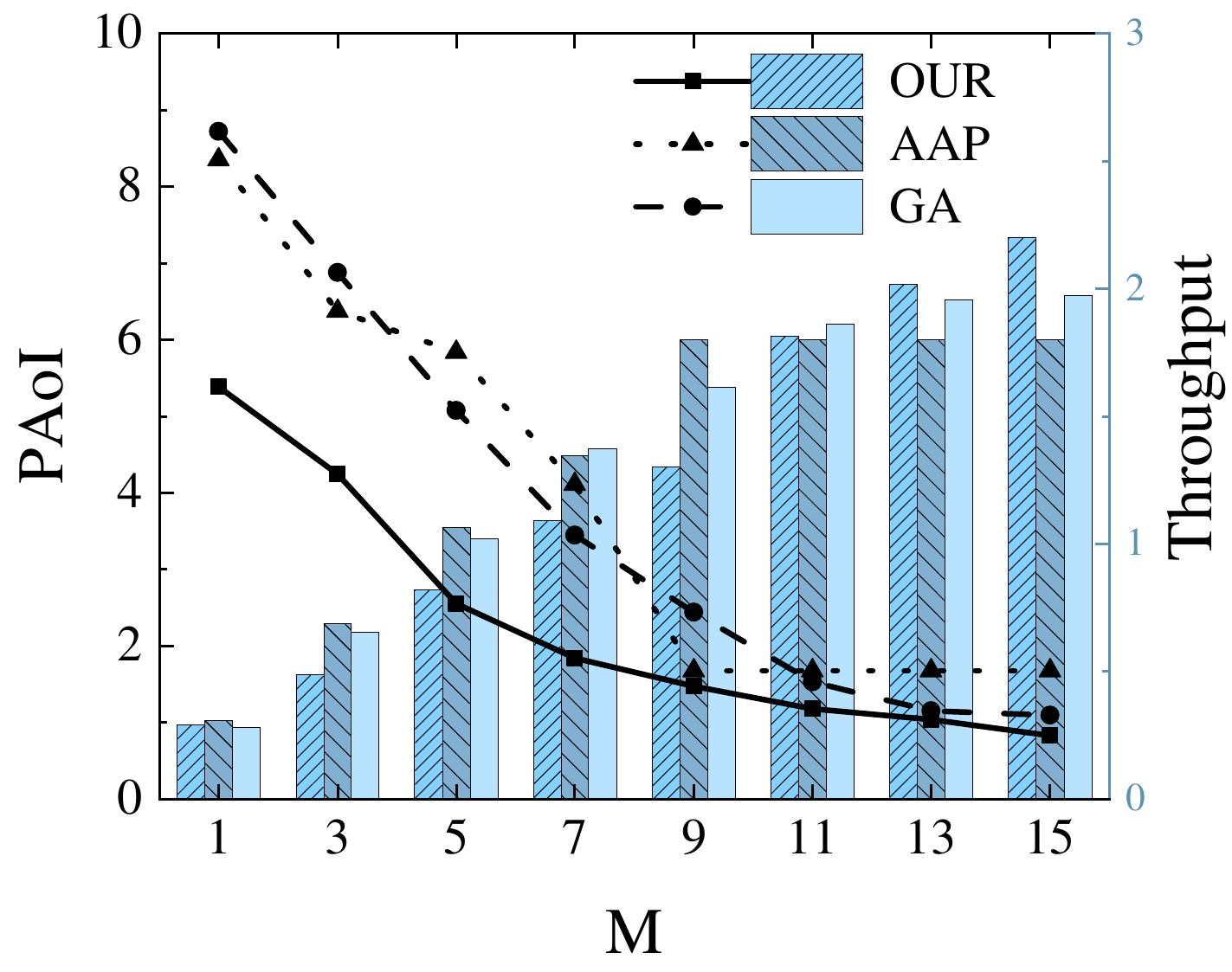}}
	\caption{Impact of methods on various metrics.}
\end{figure}

{
In our second set of analyses, we compared our approach, which utilizes strict priority control, with latency-based methods and weight-based (flexible priority control).
We explore the latency-based method, focusing on the performance metric of arrival interval. As depicted in Fig.\ref{fig:OURvsLatency}, we found that the latency-based method underperforms in resource-constrained scenarios. The high arrival intervals, indicative of reduced frequency of updates, emerged as a notable limitation in latency-based methods. This shortfall makes them less adept for applications like AR/VR, where there is a critical need for swift information updates \cite{guo2021scheduling}.
We further compared our approach, which employs hard priority control, with the weight-based method that uses flexible priority control. The results, illustrated in Fig.\ref{fig:OURvsWeight}, showed that our approach effectively reduces the Peak Age of Information (PAoI) for high-priority users. This indicates a more efficient delivery of timely information compared to the weight-based method. The advantage of our approach stems from its server-side priority queuing system, which ensures high-priority tasks are processed more promptly. This experiment underlines the effectiveness of our PAoI-based method in scenarios where speed and accuracy of information processing are paramount.
}
\begin{figure}[ht!]
	\centering
	\subfigure[Arrival Internal vs. $\mu$]{\label{fig:OURvsLatency}\includegraphics[width=0.212\textwidth]{./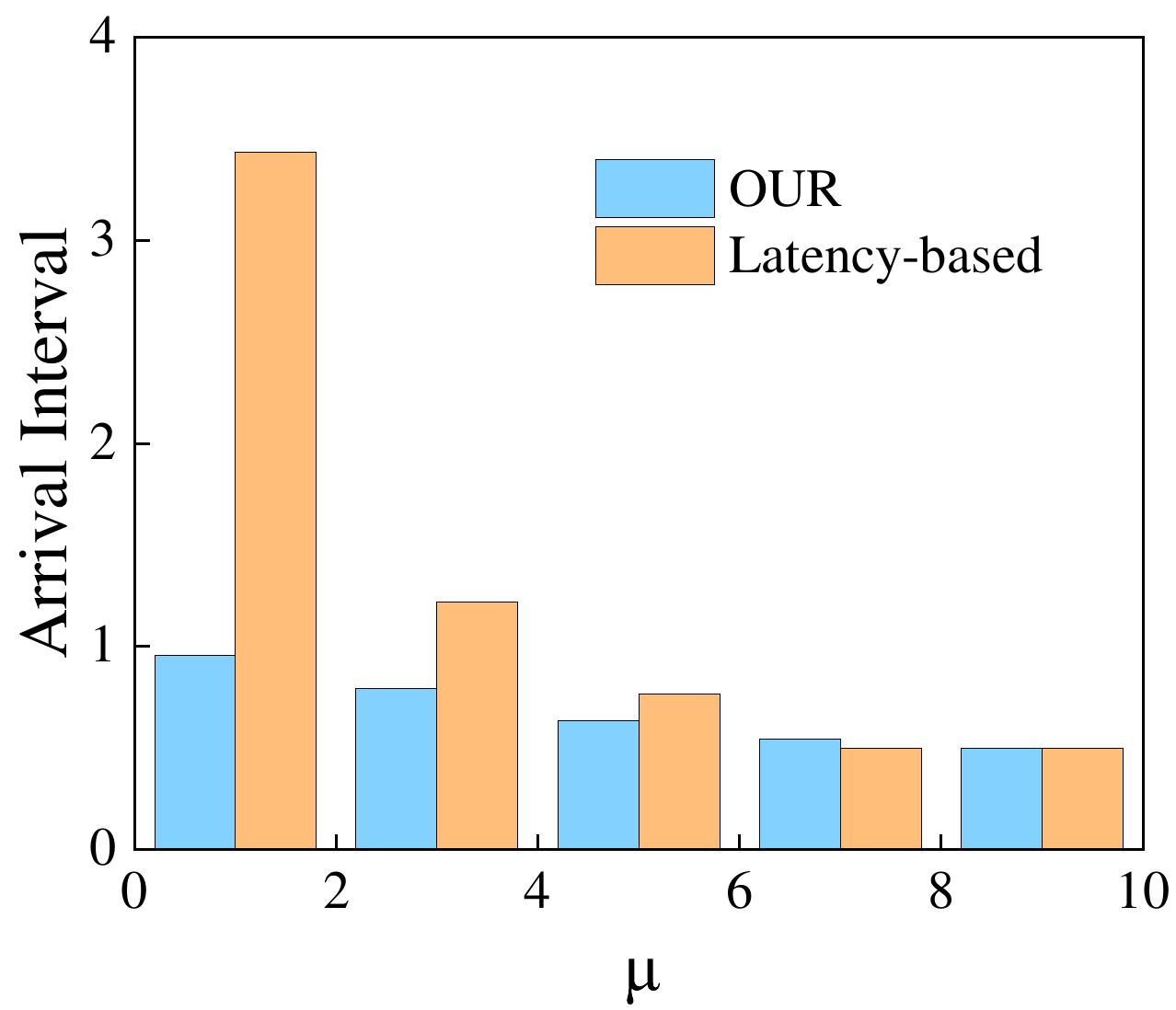}}
	\subfigure[PAoI vs. $\mu$]{\label{fig:OURvsWeight}\includegraphics[width=0.265\textwidth]{./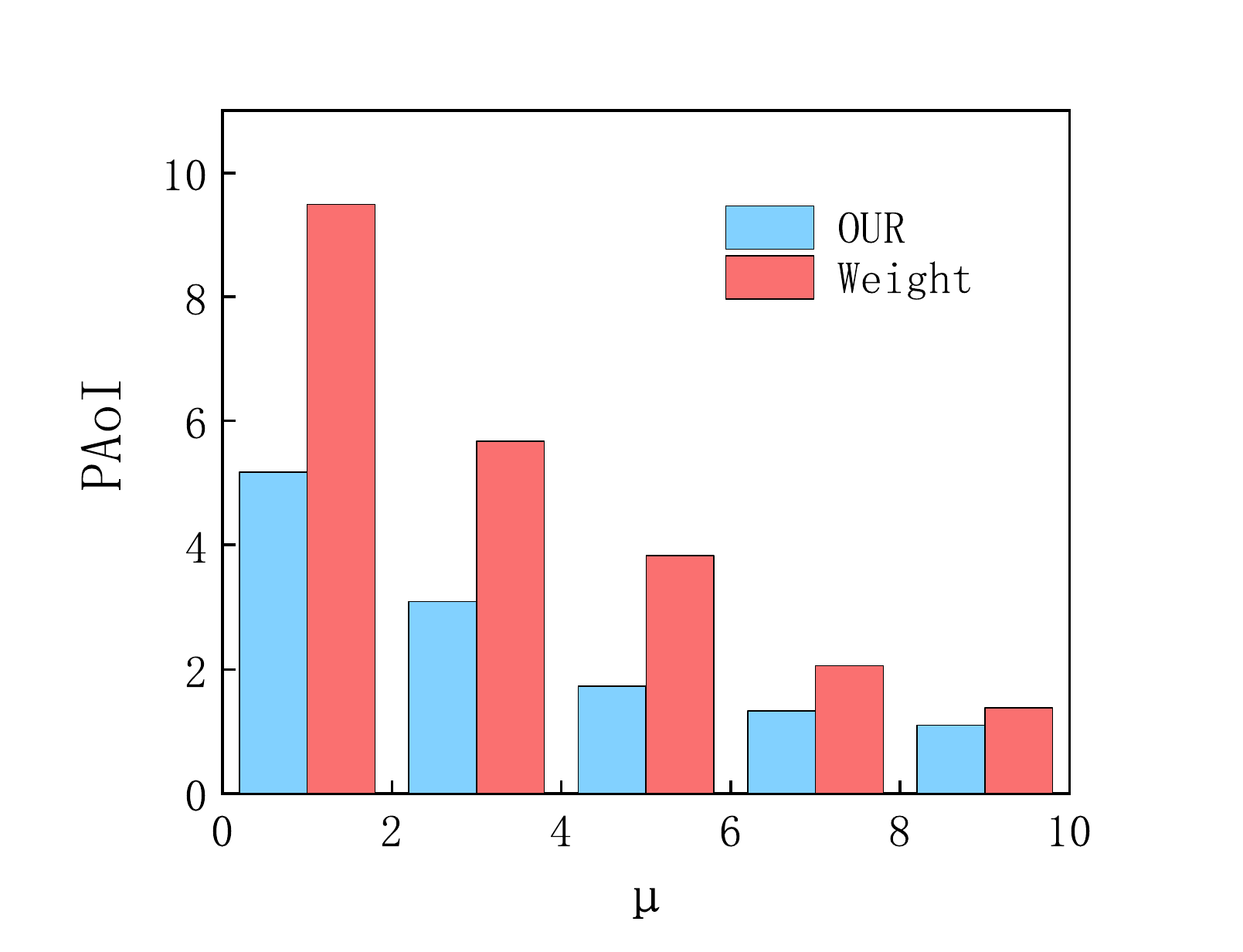}}
	\caption{Performance comparison between different methods.}
\end{figure}

{Our third segment of analysis focuses on the impact of the priorities division and server collaboration on the overall PAoI performance.
We conducted a simulation for the multi-priority case, as illustrated in Fig.\ref{fig:PAoIvsMu}. Here, $\Delta$ denotes the degree of priority division, with $\Delta=1$ indicating no priority distinction and $\Delta=6$ indicating six priority classes. The simulation results show that the average PAoI of the system is mainly determined by the computing power of the server ($\mu$, task execution time), and the priority division level has little influence on it.
Furthermore, we investigate the cases of whether servers are collaborating.
Fig.\ref{fig:PAoIvsN} presents the results for both the server collaboration case and the without-server collaboration case, showing that the PAoI performance of the former outperforms the latter regardless of the number of users. Notably, this feature is more prominent as the number of users $N$ increases.
Besides, the PAoI variance in the server collaboration scenario is notably lower, suggesting its stronger system stability.
}


\begin{figure}[ht!]
	\centering
	\subfigure[PAoI vs. $\mu$ in different priority allocation case]{\label{fig:PAoIvsMu}\includegraphics[width=0.220\textwidth]{./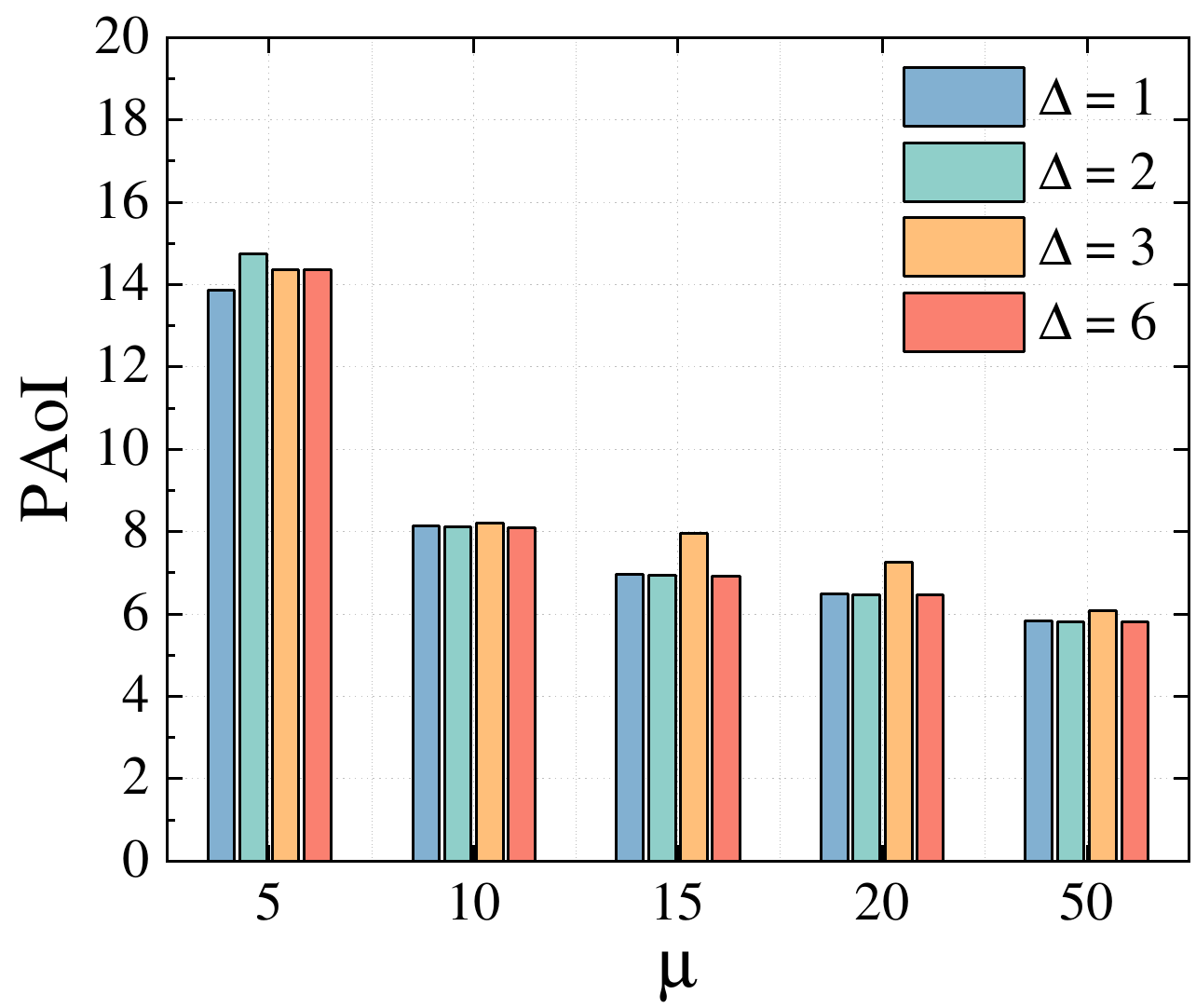}}
	\subfigure[PAoI vs. $N$ in the case of whether servers are collaborating]{\label{fig:PAoIvsN}\includegraphics[width=0.225\textwidth]{./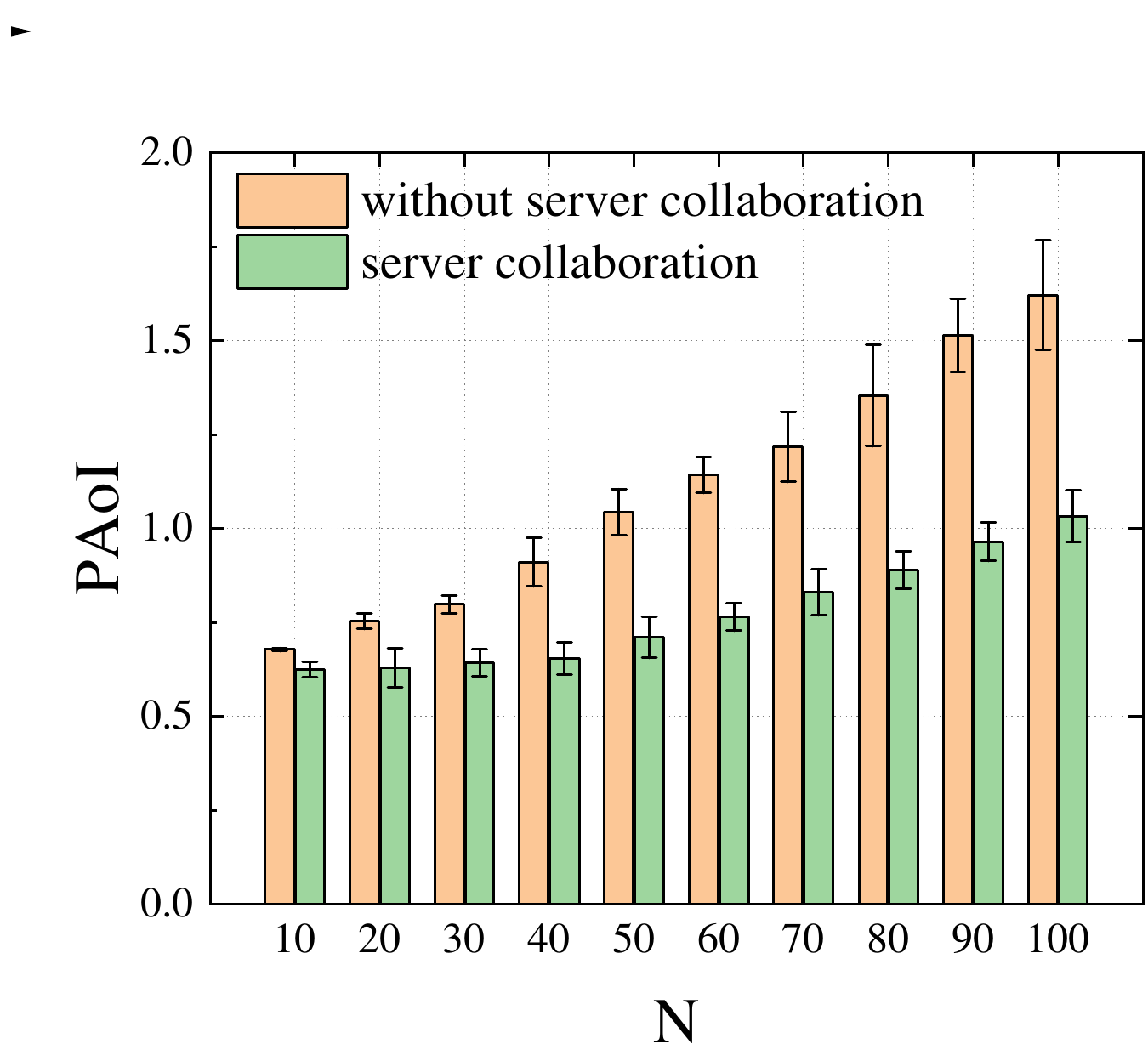}}
	\caption{The performance in various settings.}
\end{figure}


\subsection{Performance of Priority Tasks}\label{subsec:Exp-Priority}
{Simulation results show that the proposed algorithm can be extended to multi-class priority scenarios.
As shown in Fig.\ref{fig:PAoIvsNinServers}, we set three priority classes ($\delta$ lower, higher priority) and then randomly allocated these priority levels to an equal number of users, each of whom had the same quantity of tasks.
As the iterations of our algorithm proceed, the average PAoI values of users with various priority levels steadily decrease and eventually tend to stabilize, and the offloading rate gradually increases and eventually tends to be stable.
We find that users with higher priority ($\delta=1$) always get higher offloading rates, which results from more channel resource allocations.
In addition, high-priority users' PAoI values are lower since their tasks are scheduled and executed promptly.
The above observation implies that high-priority users receive a larger share of channel and computing resources and they are more likely to achieve superior performance.
}
\begin{figure}[ht!]
	\centering
	\subfigure[Performance of users with different priorities]{\label{fig:PAoIvsNinServers}\includegraphics[width=0.25\textwidth]{./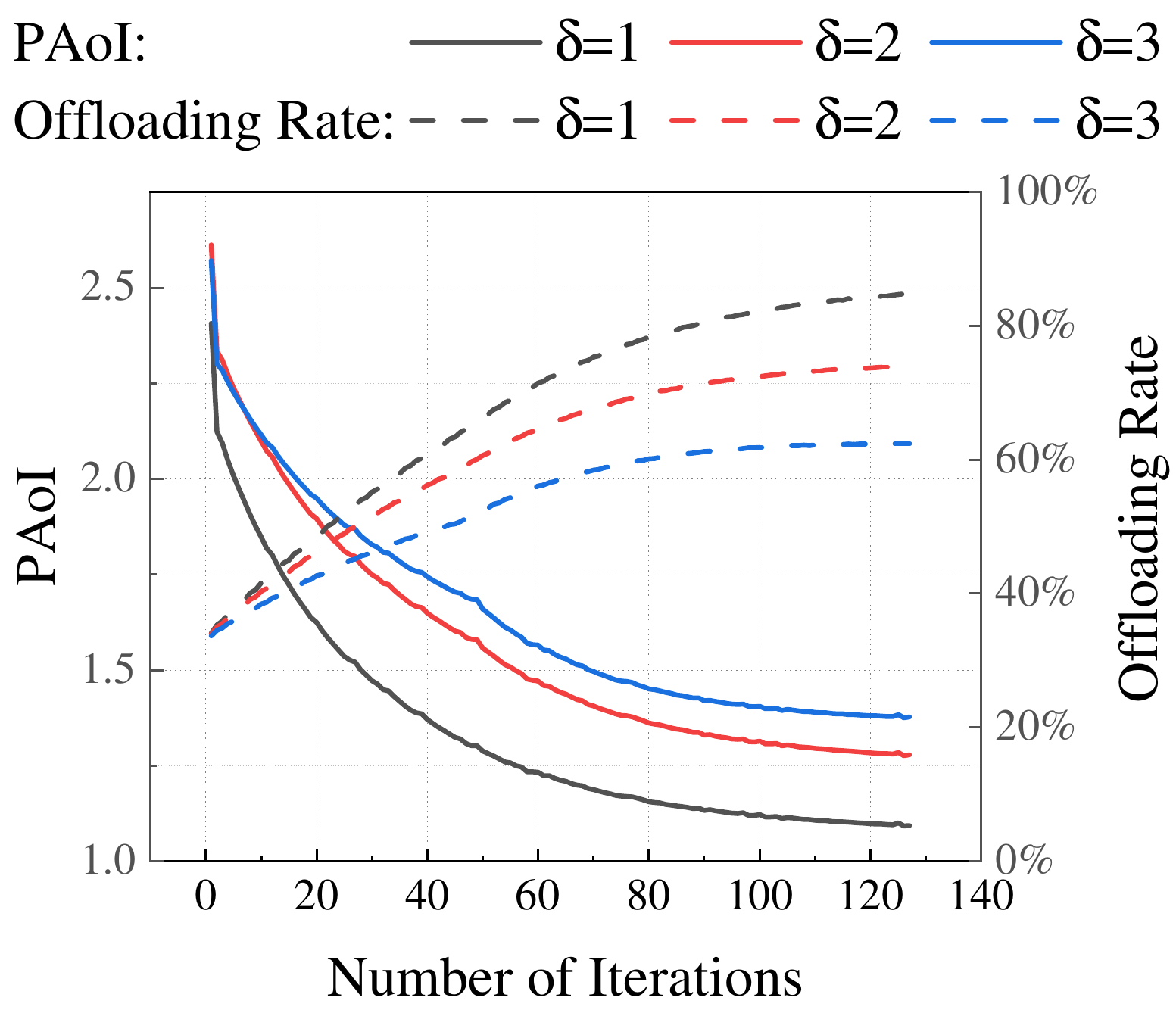}}
	\subfigure[Comparison of high priority users in different algorithms]{\label{fig:mp-compare}\includegraphics[width=0.205\textwidth]{./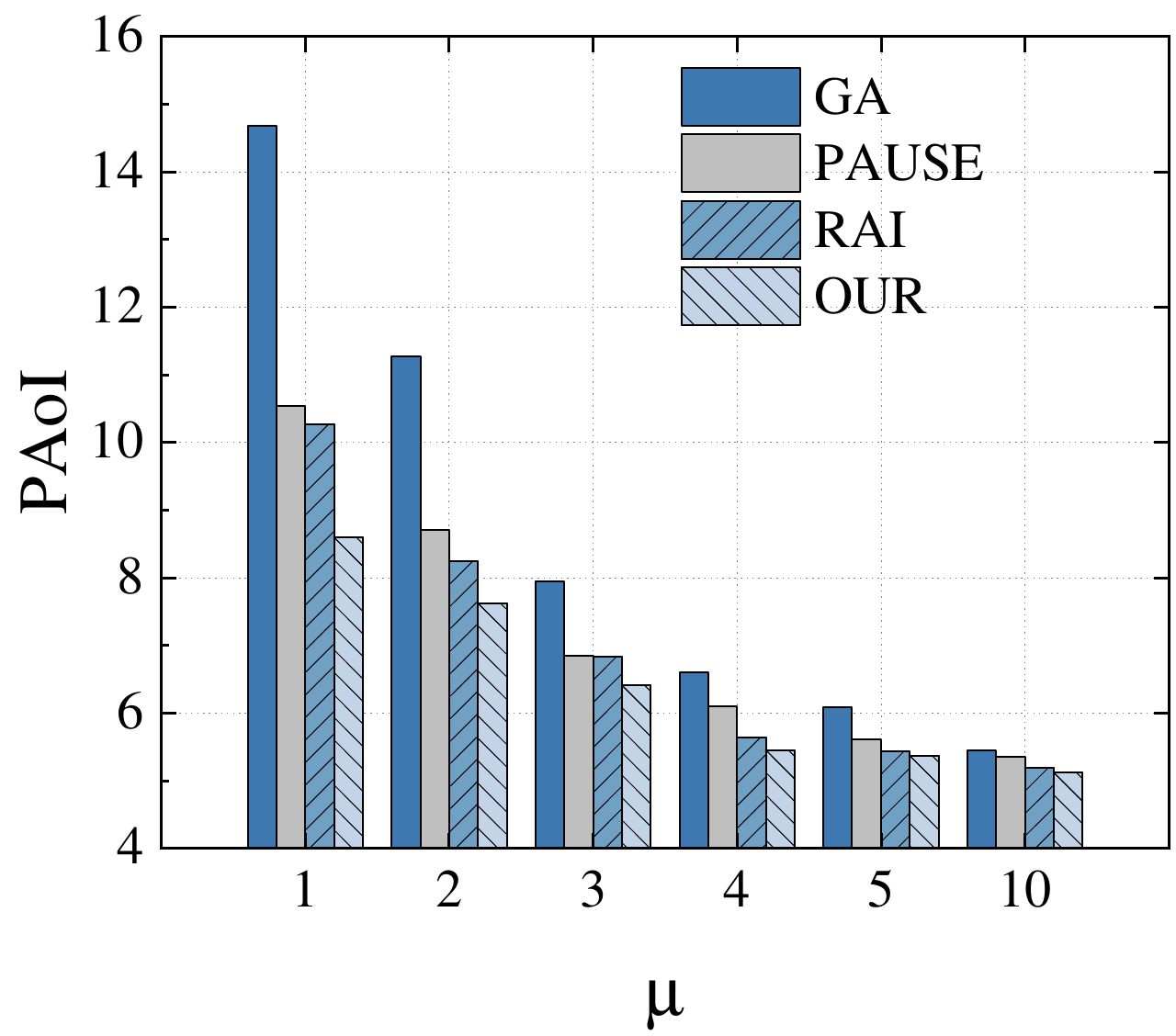}}
	\caption{Performance of Priority Users.}
\end{figure}

{With three types of priorities (high priority, medium priority, low priority), we compare the proposed multi-class priority method (OUR) with other algorithms (GA, PAUSE, RAI).
We pay special attention to the promotion effect of our algorithm for high-priority users, as shown in Fig.\ref{fig:mp-compare}.
The GA algorithm is one that does not consider priorities, the PAUSE algorithm focuses on the discussion of multiple priorities rather than multiple classes of priorities, and the RAI algorithm only considers two classes of priorities.
The generality of these methods falls short, and they are unable to provide high-priority users with the lower PAoI that they require.
However, it is clear that the proposed strategy our approach is always the best one with the lowest PAoI and can be easily implemented across a range of scenarios.}


\begin{figure}[ht!]
	\centering
	\subfigure[High server processing rates]{\label{fig:Graph_highmu}\includegraphics[width=0.232\textwidth]{./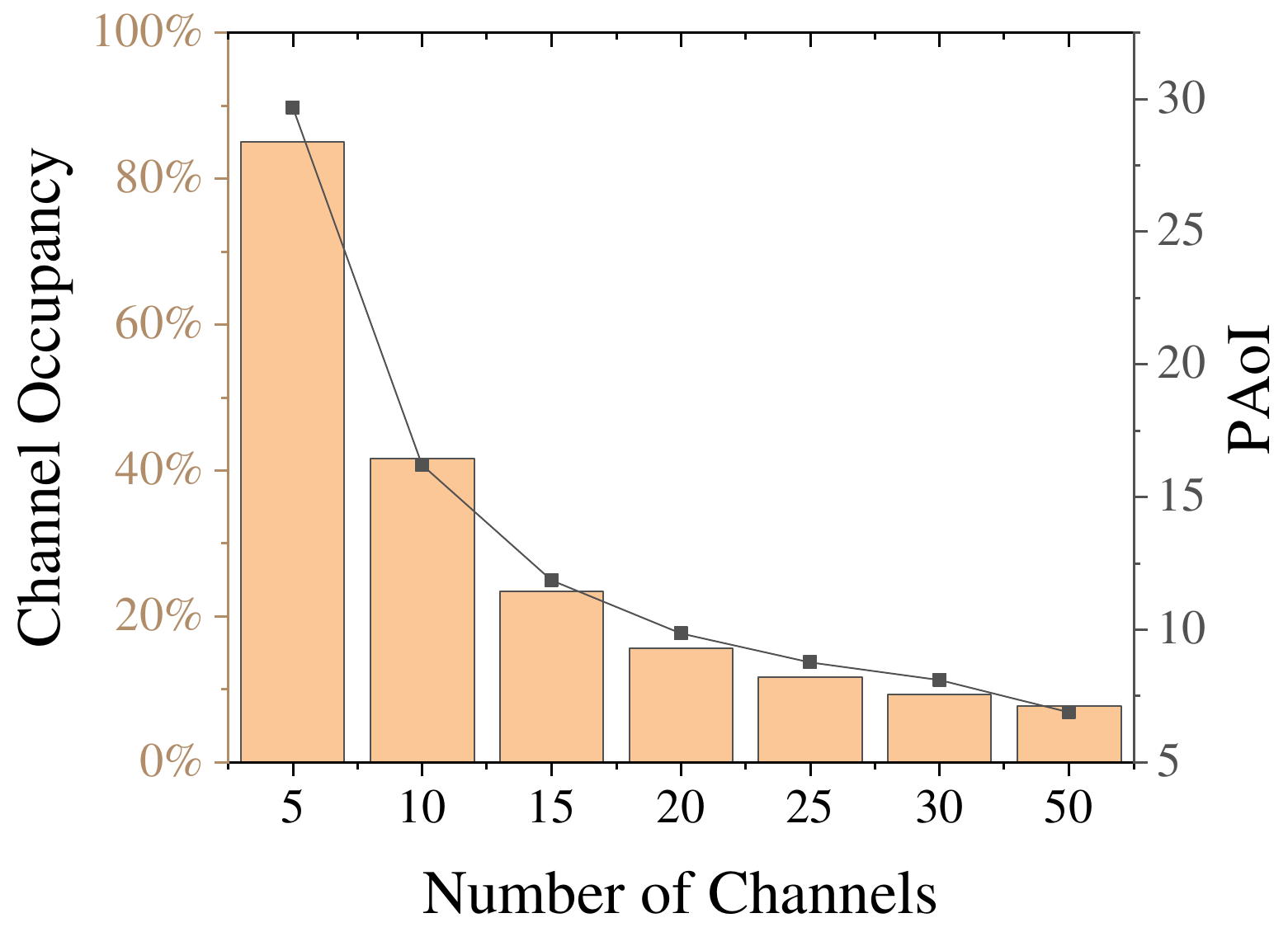}}
	\subfigure[Low server processing rates]{\label{fig:Graph_lowmu}\includegraphics[width=0.225\textwidth]{./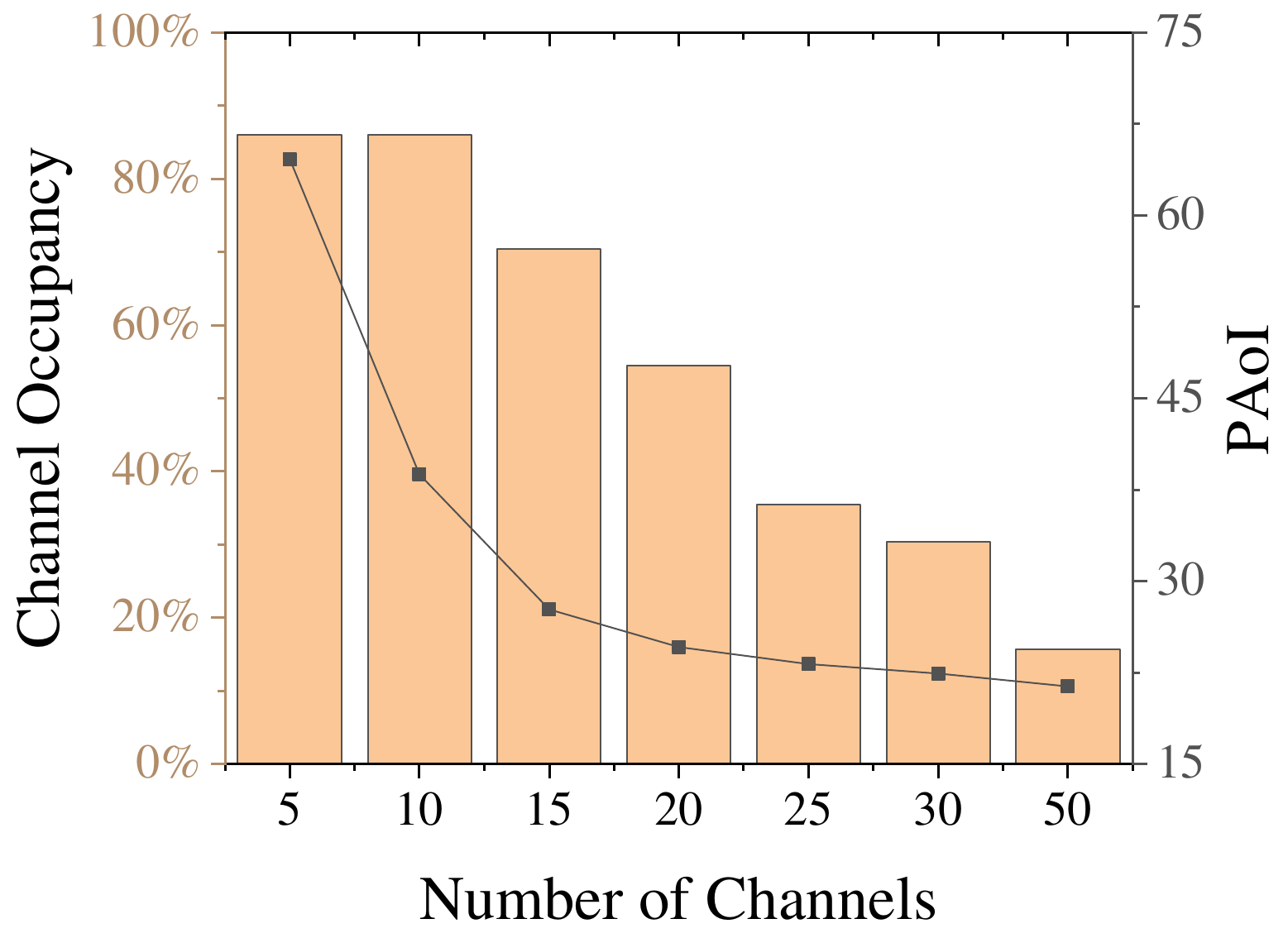}}
	\caption{PAoI \& Channel Occupancy vs. Number of Channels.}
\end{figure}
\subsection{Performance and convergence of Algorithms in Channels}\label{subsec:Exp-Converge}
{
In this subsection, we empirically show the correlation between the algorithm performance and the number of channels, as well as the correlation between the algorithm convergence rate and the channel unreliability.}

{In order to investigate the potential impact of the number of channels on algorithm performance, we compared the impact of the number of channels on PAoI and average channel occupancy.
To this end, we run experiments in two different server processing rates scenarios, as seen in Fig.\ref{fig:Graph_highmu} and \ref{fig:Graph_lowmu}, respectively.
On the one hand, we notice a drop in PAoI as the number climbs, indicating that more tasks may be sent to the server across a more dependable and fast channel as overall channel capacity rises.  Particularly, in the scenario of high server processing rates (Fig.\ref{fig:Graph_highmu}), the server processing efficiency is high, resulting in a lower total PAoI value than in another scenario (Fig.\ref{fig:Graph_lowmu}).
On the other hand, as the overall channel capacity increases with the number of channels, the channel occupancy decreases.  This ensures that tasks can be transmitted to the server on a more reliable and faster path.  However, if the servers' service efficiency becomes lower than the channel transmission efficiency, there is an upper bound to the channel occupancy. 
This feature is shown in Fig.\ref{fig:Graph_lowmu} as the number of channels gets low and the reason why it does not reach 100\% is discussed in detail in Section \ref{sec:system_model}.
}

{The convergence rate of distributed algorithms can be hindered by a lack of reliable communication resources. However, both NFPA-NAC and NFPA-ANAC (as shown in Fig.\ref{fig:NFPA-NAC} and \ref{fig:NFPA-ANAC} respectively) are advantageous for the implementation of powerful edge servers. This is due to their ability to efficiently allocate channel and task-scheduling resources.
We show the performance of algorithms NFPA-NAC and NFPA-ANAC in different channel conditions ($p=0.3, 0.5, 0.7, 0.9$) in Fig.\ref{fig:NFPA-NAC} and Fig.\ref{fig:NFPA-ANAC} without repeated experiments.
We observe that the PAoI values gradually decline and eventually stabilize as the iterations proceed.
Moreover, the lower the value of channel condition $p$, the slower the convergence rate.
Additionally, it can be observed that PAoI can be more optimal with better channel conditions because users have more options for offloading.
We note that in our simulations the algorithm NFPA-NAC cannot successfully iterate in each iteration, while NFPA-ANAC can successfully iterate using the limited information in each iteration. }

\begin{figure}[ht!]
	\centering
	\subfigure[NFPA-NAC]{\label{fig:NFPA-NAC}\includegraphics[width=0.235\textwidth]{./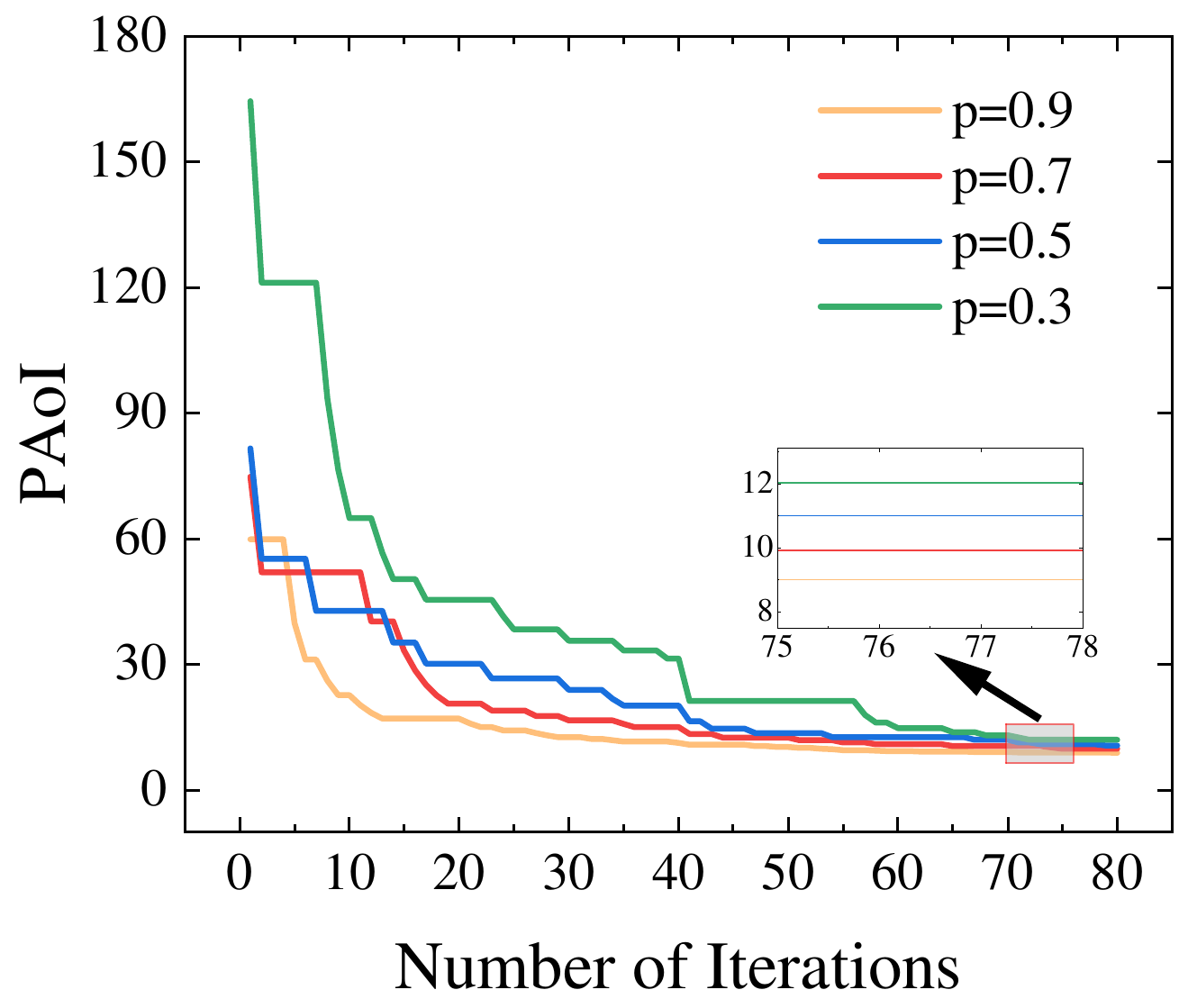}}
	\subfigure[NFPA-ANAC]{\label{fig:NFPA-ANAC}\includegraphics[width=0.235\textwidth]{./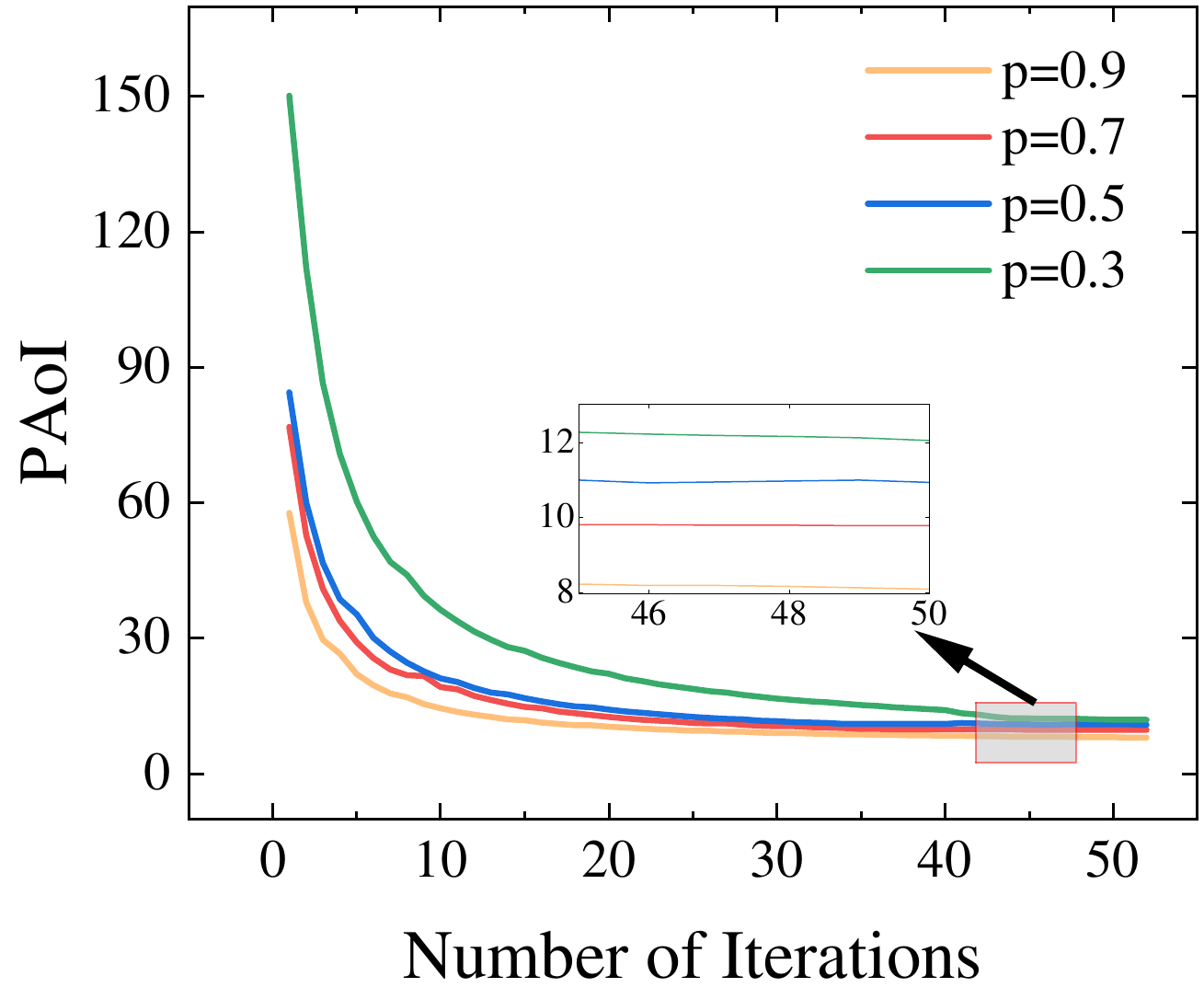}}
	\caption{The influence of the unreliable channel on the PAoI.}
	\label{fig:5}
\end{figure}

{To illustrate the differences between the synchronous and asynchronous parallel algorithms, specifically NFPA-NAC and NFPA-ANAC, we can refer to Fig.\ref{fig:compare}. This figure depicts the convergence of these two algorithms under various channel conditions ($p=0.7, 0.8, 0.9, 0.99$).
In contrast to Fig.\ref{fig:5}, we choose a better initial value, more users, and a large number of repeated experiments to help explain the convergence characteristics.
Both methods have similar rates of convergence in acceptable channels ($p=0.9, 0.99$).
However, the convergence rate advantage of algorithm NFPA-ANAC, however, is fairly apparent in the ($p=0.7, 0.8$) channel because of the poorer channel quality.
This observation verifies our discussion in Section \ref{subsec:Discussion-AsynchronousADMM}, as it clearly demonstrates that the asynchronous parallel algorithm (NFPA-ANAC) exhibits greater communication efficiency, particularly in unreliable channel conditions.}

\begin{figure}[ht!]
    \vspace{-0cm} 
    \setlength{\abovecaptionskip}{-0cm} 
    \setlength{\belowcaptionskip}{-0cm} 
    \centering
	\includegraphics[width=0.45\textwidth]{./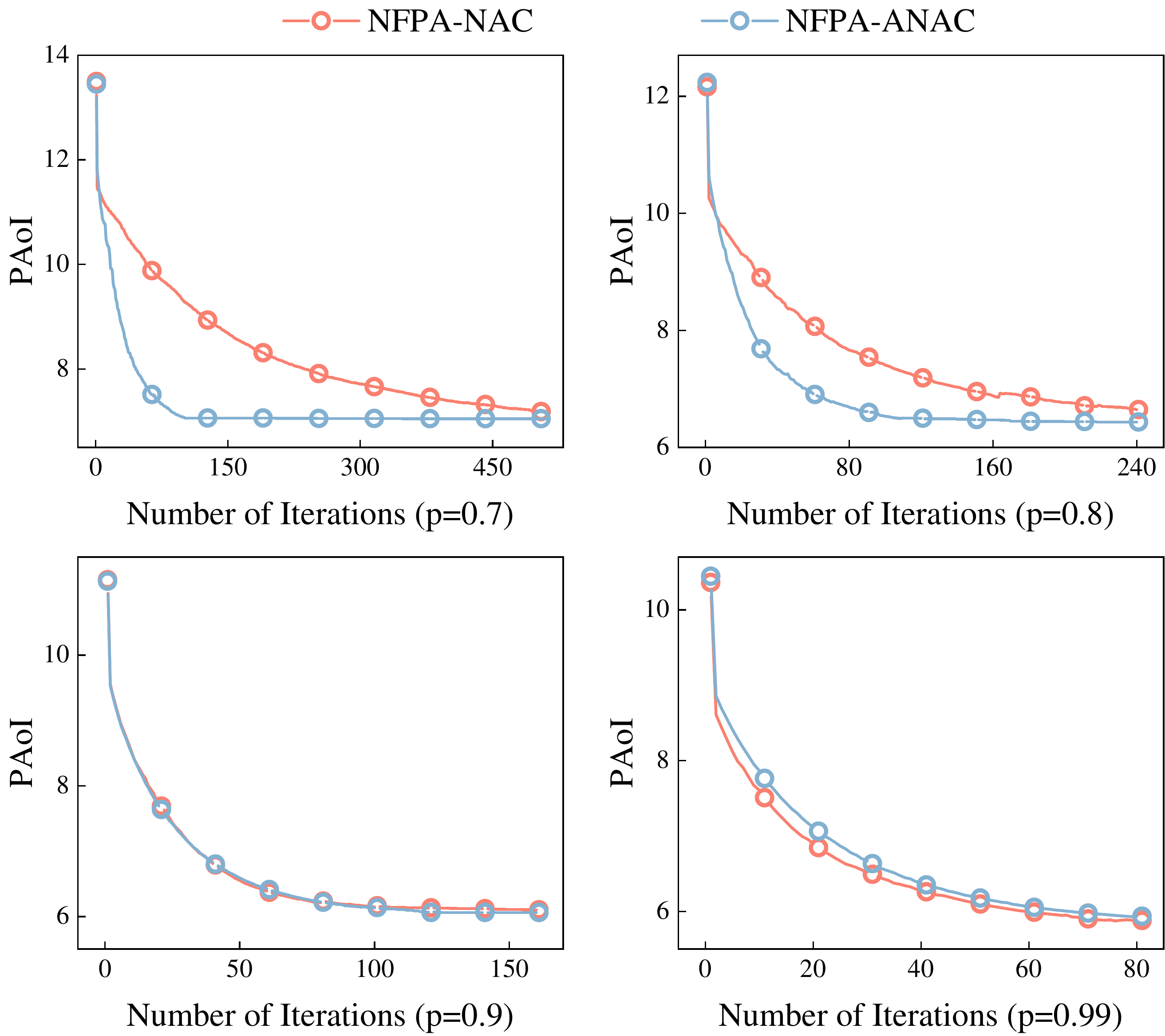}
    \vspace{-0pt}
	\caption{Comparison with different algorithms.}
	\label{fig:compare}
\end{figure}

\section{Conclusion}\label{sec:conclusion}
In this paper, we proposed a task scheduling method that considered multi-priority users and multi-server collaboration to address the limitations of unreliable channels in current real-time systems and the individual needs of users. We derived the utility function of priority scheduling based on PAoI and designed a set of distributed optimization methods. Specifically, we first considered two simplified problems for the original problem and employed fractional programming as well as ADMM to obtain their solutions. Building upon these solutions and the conclusions drawn therein, we developed an iterative algorithm to solve the original problem. Furthermore, we proposed a distributed asynchronous approach with a sublinear rate of communication defects and discussed the theoretical performance improvement due to the multi-priority mechanism.  We implemented the method and conducted extensive simulations to compare it with the existing age-based scheduling strategies. Our results demonstrated the effectiveness and superiority of our method in addressing the requirements of freshness-sensitive users over unreliable channels.



\normalem
\bibliographystyle{IEEEtran}
\bibliography{reference}


\begin{IEEEbiography}[{\includegraphics[width=1in,height=1.25in,clip,keepaspectratio]{./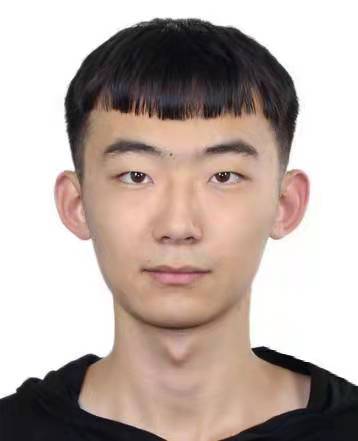}}]{Nan Qiao}
received the B.Sc. in Computer Science from Central South University, China. Since Sept. 2021, he has been pursuing the Ph.D. degree in Computer Science from Central South University, China. His research interests include wireless communications, distributed optimization, and Internet-of-Things.
\end{IEEEbiography}
\vskip -20pt plus -1fil

\begin{IEEEbiography}[{\includegraphics[width=1in,height=1.25in,clip,keepaspectratio]{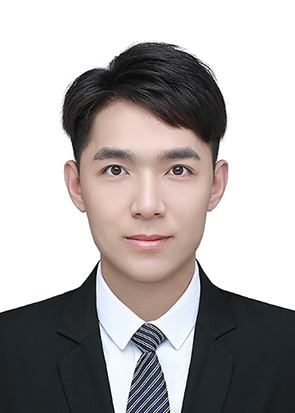}}]{Sheng Yue}
received his B.Sc. in mathematics (2017) and Ph.D. in computer science (2022), from Central South University, China. Currently, he is a postdoc with the Department of Computer Science and Technology, Tsinghua University, China. His research interests include network optimization, distributed learning, and reinforcement learning.
\end{IEEEbiography}
\vskip -20pt plus -1fil

\begin{IEEEbiography}[{\includegraphics[width=1in,height=1.25in,clip,keepaspectratio]{./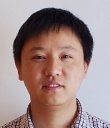}}]{Yongmin Zhang} 
(Senior Member, IEEE) received the PhD degree in control science and engineering in 2015, from Zhejiang University, Hangzhou, China. From 2015 to 2019, he was a postdoctoral research fellow in the Department of Electrical and Computer Engineering at the University of Victoria, BC, Canada. He is currently a Professor in the School of Computer Science and Engineering at the Central South University, Changsha, China. His research interests include IoTs, Smart Grid, and Mobile Computing. He won the best paper award of IEEE PIMRC 2012 and the IEEE Asia-Pacific (AP) outstanding paper award 2018.
\end{IEEEbiography}
\vskip -20pt plus -1fil

\begin{IEEEbiography}[{\includegraphics[width=1in,height=1.25in,clip,keepaspectratio]{./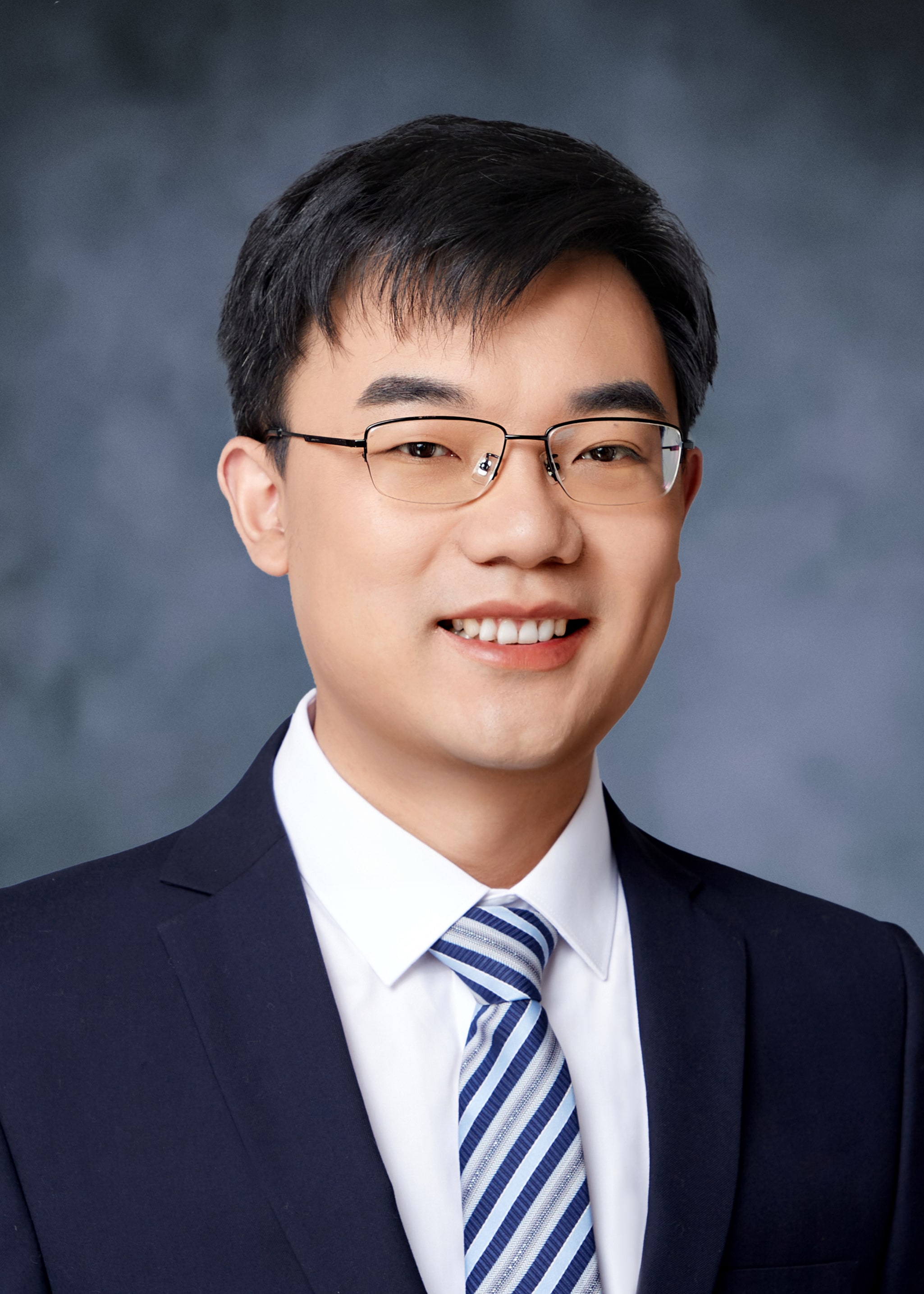}}]{Ju Ren}
Ju Ren [M’16, SM’21] (renju@tsinghua.edu.cn) received the B.Sc. (2009), M.Sc. (2012), Ph.D. (2016) degrees all in computer science, from Central South University, China. Currently, he is an associate professor with the Department of Computer Science and Technology, Tsinghua University, China. His research interests include Internet-of-Things, edge computing, edge intelligence, as well as security and privacy. 
    
He currently serves as an associate editor for many journals, including IEEE Transactions on Mobile Computing, IEEE Transactions on Cloud Computing and IEEE Transactions on Vehicular Technology, etc. He also served as the general co-chair for IEEE BigDataSE’20, the TPC co-chair for IEEE BigDataSE’19, the track co-chair for IEEE ICDCS’24, the poster co-chair for IEEE MASS’18, a symposium co-chair for IEEE/CIC ICCC’23\&19, I-SPAN’18 and IEEE VTC’17 Fall, etc. He received several best paper awards from IEEE flagship conferences, including IEEE ICC’19 and IEEE HPCC’19, etc., the IEEE TCSC Early Career Researcher Award (2019), and the IEEE ComSoc Asia-Pacific Best Young Researcher Award (2021). He was recognized as a highly cited researcher by Clarivate (2020-2022).

\end{IEEEbiography}
\vskip -20pt plus -1fil


\clearpage
\begin{appendices}

\section{}\label{app:pro:paoi_mg1_pri}

    The previous have investigated $\mathbb{E}[T^{p}_{n}]$, $\mathbb{E}[I^{p}_{n}]$ and $\mathbb{E}[Y^{p}_{n}]$ too much, 
    with $\mathbb{E}[Y^{p}_{n}]=1/\mu_n$ being the processing time, $\mathbb{E}[T^{p}_{n}]=T^\mathrm{tr}_{n,c}$  the maximum transmission time over all possible channels, 
    and $\mathbb{E}[I^{p}_{n}] = 1/\sum_{c\in\mathcal{C}}p_{n,c}\eta^u_{n,c}\lambda_{n}$ the arrival interval~\cite{yates2021age}.
    Also, some studies discuss the waiting time $\mathbb{E}[W]$ in the condition multi-class M/G/1\cite{huang2015optimizing} or M/G/1 with priority\cite{xu2020peak}. However, we focus more on the value of $\mathbb{E}[W]$ in multi-class M/G/1 with priority, which means there are a number of different types of users in a unified priority in the M/G/1 system.
    According to Little's Law,
    \begin{align}
        \mathbb{E}[L_{i}] = \lambda_i \mathbb{E}[W_{i}].
    \end{align}
    For the highest priority users \ie $\delta=1$, it holds
    \begin{align}
        \mathbb{E}[W_{\delta=1}] 
        &= \sum_{i \in \mathcal{N}^{1}} \frac{\mathbb{E}[L_i]}{\mu_i} + \sum_{i \in \mathcal{N}} \rho_i \mathbb{E}[\nu^{2}_i]\nonumber\\
        &= \sum_{i \in \mathcal{N}^{1}} \rho_i \mathbb{E}[W_i] + \sum_{i \in \mathcal{N}} \rho_i \mathbb{E}[\nu^{2}_i]
    \end{align}
    where $\mathbb{E}[W_{\delta=1}] = \mathbb{E}[W_i], ~ \forall ~ \delta(i) = 1$. Accordingly, we obtain
    \begin{align}
    \label{eq:WaitingTime-HPU}
        \mathbb{E}[W_{\delta=1}] 
        = \frac{\sum_{i \in \mathcal{N}} \rho_i \mathbb{E}[\nu^{2}_i]}{1 - \sum_{i \in \mathcal{N}^{1}} \rho_i}
    \end{align}

     We separate $\mathbb{E}[W]$ into different components in order to break it down into smaller parts.  The first component consists of all high-priority or identically prioritized jobs that are in the queue when the current task arrives.  The waiting time for this component is $\mathbb{E}[S^1]$.  The second component, consisting of all high-priority users who arrived at the same time as the first component was executed, has a waiting time of $\mathbb{E}[S^2]$.  We continue to split the remaining portions in the same way.
    
    \begin{align}
        \mathbb{E}[W_i] = \mathbb{E}[S^1_i + S^2_i + S^3_i + \cdots] = \sum^{\infty}_{k} \mathbb{E}[S^k_i]
    \end{align}
    
    Based on the above discussion, we derive the waiting time for the first part,
    \begin{align}
        \mathbb{E}[S^1_i] = \sum_{p=1}^{\delta(i)} \sum_{j  \in \mathcal{N}^{p}} \rho_j \mathbb{E}[W_j] + \sum_{j \in \mathcal{N}} \rho_j \mathbb{E}[\nu^{2}_j]
    \end{align}
    The waiting time for the other part is as follows,
    \begin{align}
        E[S^{k+1}_i] &=\int_{s=0}^{\infty} E[S^{k+1}_i \mid S^{k}_i=s] f_{k}(s) d s \nonumber\\
        &=\int_{s=0}^{\infty} \sum_{p=1}^{\delta(i)-1} \sum_{j  \in \mathcal{N}^{p}}  \rho_j s  f_{k}(s) d s \nonumber\\
        &=\left(\sum_{p=1}^{\delta(i)-1} \sum_{j  \in \mathcal{N}^{p}}  \rho_j\right) E[S^{k}_i] \nonumber\\
        &=\left(\sum_{p=1}^{\delta(i)-1} \sum_{j  \in \mathcal{N}^{p}}  \rho_j\right)^k E[S^{1}_i]
    \end{align}
    Thus, we derive
    \begin{align}
        \mathbb{E}[W_i] &= \frac{\mathbb{E}[S^1_i]}{1 - \sum_{p=1}^{\delta(i)-1} \sum_{j  \in \mathcal{N}^{p}}  \rho_j}\nonumber\\ 
        &= \frac{\sum_{p=1}^{\delta(i)} \sum_{j  \in \mathcal{N}^{p}} \rho_j \mathbb{E}[W_j] + \sum_{j \in \mathcal{N}} \rho_j \mathbb{E}[\nu^{2}_j]}{1 - \sum_{p=1}^{\delta(i)-1} \sum_{j  \in \mathcal{N}^{p}}  \rho_j}
    \end{align}
    which can be transformed into
    \begin{align}
        \mathbb{E}[W_i](1 - \sum_{p=1}^{\delta(i)-1} \sum_{j \in \mathcal{N}^{p}}  \rho_j)\nonumber\\
        = \sum_{p=1}^{\delta(i)} \sum_{j  \in \mathcal{N}^{p}} \rho_j \mathbb{E}[W_j] + \sum_{j \in \mathcal{N}} \rho_j \mathbb{E}[\nu^{2}_j]
    \end{align}
    and
    \begin{align}
        \mathbb{E}[W_{\delta=\delta(i)}](1 - \sum_{p=1}^{\delta(i)} \sum_{j  \in \mathcal{N}^{p}}  \rho_j)\nonumber\\
        = \sum_{p=1}^{\delta(i)-1} \sum_{j  \in \mathcal{N}^{p}} \rho_j \mathbb{E}[W_j] + \sum_{j \in \mathcal{N}} \rho_j \mathbb{E}[\nu^{2}_j]\nonumber\\
        = \mathbb{E}[W_{\delta=\delta(i)-1}](1 - \sum_{p=1}^{\delta(i)-2} \sum_{j  \in \mathcal{N}^{p}}  \rho_j)
    \end{align}
    
    In other words, it holds,
    \begin{align}
        &\mathbb{E}[W_{\delta=\delta(i)}](1 - \sum_{p=1}^{\delta(i)} \sum_{j  \in \mathcal{N}^{p}}  \rho_j)(1 - \sum_{p=1}^{\delta(i)-1} \sum_{j  \in \mathcal{N}^{p}}  \rho_j)\\
        =& \mathbb{E}[W_{\delta=\delta(i)-1}](1 - \sum_{p=1}^{\delta(i)-1} \sum_{j  \in \mathcal{N}^{p}}  \rho_j)(1 - \sum_{p=1}^{\delta(i)-2} \sum_{j  \in \mathcal{N}^{p}}  \rho_j)\nonumber
    \end{align}
    
    Combine with \eqref{eq:WaitingTime-HPU}, we obtain
    \begin{align}
        &\mathbb{E}[W_{\delta=\delta(i)}]
        = \frac{\mathbb{E}[W_{\delta=1}](1 - \sum_{j \in \mathcal{N}^{1}}  \rho_j)}{(1 - \sum_{p=1}^{\delta(i)} \sum_{j  \in \mathcal{N}^{p}}  \rho_j)(1 - \sum_{p=1}^{\delta(i)-1} \sum_{j  \in \mathcal{N}^{p}}  \rho_j)}\nonumber\\
        &= \frac{\sum_{i \in \mathcal{N}} \rho_j \mathbb{E}[\nu^{2}_j]}{(1 - \sum_{p=1}^{\delta(i)} \sum_{j  \in \mathcal{N}^{p}}  \rho_j)(1 - \sum_{p=1}^{\delta(i)-1} \sum_{j  \in \mathcal{N}^{p}}  \rho_j)}
    \end{align}
    which merges the value of $\mathbb{E}[T^{p}_{n}]$, $\mathbb{E}[I^{p}_{n}]$ and $\mathbb{E}[Y^{p}_{n}]$ to complete the proof.

\section{A Proof of Lemma \ref{lemma:P1-1_convex_problem}}\label{app:lemma:P1-1_convex_problem}
    First, we should prove that \eqref{eq:c7a} and \eqref{eq:c8a} are not convex sets. 
    Take \eqref{eq:c7a} as an example, 
    if it was a convex set, we should have obtained 
    \begin{equation}
        (1-r)h(\bm\eta^u_{c_1}) + r h(\bm\eta^u_{c_2}) \mathop{\ge}^{(a)} h((1-r)\bm\eta^u_{c_1} + r\bm\eta^u_{c_2}),
    \end{equation}
    where $h(\bm\eta^u_c) = \langle\bm\lambda, \bm\eta^u_c\rangle + z_{2}\sqrt{\langle\bm\lambda, \bm\eta^u_c\rangle + z^2_{2}/4} + z^2_{1}/2 - M_c^\mathrm{max}$, $\bm\lambda := \{\lambda_{n}\}_{n\in\mathcal{N}}$, $\bm\eta^u_c := \{\eta^u_{n,c}|\}_{n\in\mathcal{N}}$, $\forall r\in[0,1]$, 
    $\{c1, c2 \in \mathcal{C}\}$,
    $h(\bm\eta^u_{c_1})\leq 0$ and $h(\bm\eta^u_{c_2})\leq 0$.
    Moreover, inequality (a) is equivalent to
    \begin{equation}
    \label{eq:lemma-(a)}
    r (r-1) (\sqrt{\langle\bm\lambda, \bm\eta^u_{c_1}\rangle} + \sqrt{\langle\bm\lambda, \bm\eta^u_{c_2}\rangle})^2 \ge 2 r (1-r) z^2_{2}/4.
    \end{equation}
    However, if $r \ne 0~or~1$, inequality \eqref{eq:lemma-(a)} does not hold since the right term is higher than zero and the left term of the inequality is less than zero.
    Hence, \eqref{eq:c7a} is a nonconvex set.
     A comparable procedure in \eqref{eq:c8a} can lead to the same conclusion.

    Next, we will demonstrate that \eqref{eq:c7a} and \eqref{eq:c8a} are comparable to \eqref{eq:c7b} and \eqref{eq:c8b} and are convex sets.
    Again, let's take the example of \eqref{eq:c7a}, which is equivalent to
    \begin{align}
            &\langle\bm\lambda, \bm\eta^u_{c}\rangle + z^2_{1}/2 + z_{2}\sqrt{\langle\bm\lambda, \bm\eta^u_{c}\rangle + z^2_{2}/4} \le M_c^\mathrm{max},\nonumber\\
            &(z_{2}/2 + \sqrt{\langle\bm\lambda, \bm\eta^u_{c}\rangle + z^2_{2}/4})^2 
            \le M_c^\mathrm{max} - z^2_{1}/2  + z^2_{2}/4,\nonumber\\
            &\langle\bm\lambda, \bm\eta^u_{c}\rangle \le M_c^\mathrm{max} + \frac{z^2_{2}}{2} - \frac{z^2_{1}}{2} - z_{2}\sqrt{M_c^\mathrm{max} + \frac{z^2_{2}}{2} - \frac{z^2_{1}}{2}},
    \end{align}
    which means that the constraint \eqref{eq:c7a} is transformed mathematically into \eqref{eq:c7b}, which is an affine set and a kind of convex set.
    \eqref{eq:c8b} can be obtained in a similar way.
    

\section{A Proof of Theorem \ref{thm:admm-consensus}}\label{app:lemma:paoi_mg1}
    The objective function that needs to be proven convexity is
    \begin{equation}
        f_n(\hat t^\mathrm{tr},\bm\eta^u) = F^T_n(\hat t^\mathrm{tr}_n,\bm\eta^u) + F^I_n(\hat t^\mathrm{tr}_n,\bm\eta^u)+F^W_n(\hat t^\mathrm{tr}_n,\bm\eta^u)+F^Y_n(\hat t^\mathrm{tr}_n,\bm\eta^u),\nonumber
    \end{equation}
    where
        $F^I_n(\hat t^\mathrm{tr}_n,\bm\eta^u)=F^I_n(\bm\eta^u) = \frac{1}{\sum_{c\in\mathcal{C}}p_{n,c}\eta^u_{n,c}\lambda_{n}}$, 
        $F^T_n(\hat t^\mathrm{tr}_n,\bm\eta^u) = \hat t^\mathrm{tr}_n$,
        $F^Y_n(\hat t^\mathrm{tr}_n,\bm\eta^u) = \frac{1}{\mu}$,
        and
        $F^W_n(\hat t^\mathrm{tr}_n,\bm\eta^u)=F^W_n(\bm\eta^u) = (\frac{1}{2}\sum_{n'=1}^{\mathcal{N}}\sum_{c\in\mathcal{C}}p_{n',c}\eta^u_{n',c}\lambda_{n'}\nu)/(1-\sum_{n'=1}^{\mathcal{N}}\sum_{c\in\mathcal{C}}p_{n',c}\frac{\eta^u_{n',c}\lambda_{n'}}{\mu})$.
    \begin{enumerate}
        \item Due to properties of linear functions, it is obvious that \underline{$\nabla^{2} F^T_n(\hat t^\mathrm{tr}_n,\bm\eta^u) \succeq 0$}.

        \item Due to properties of constant functions, it holds that \underline{$\nabla^{2} F^Y_n(\hat t^\mathrm{tr}_n,\bm\eta^u) \succeq 0$}.
        
        \item Furthermore, we have $\{n, m, m_1, m_2 \in \mathcal{N} | n = m \& n \neq m_1 \& n \neq m_2\}$ , $\{c, d, d_1, d_2 \cdots d_\mathrm{max} \in \mathcal{C}\}$, $\boldsymbol p_n = \{p_{n,1} \cdots p_{n,C}\}$, and $\boldsymbol p = \{\boldsymbol p_n|n\in \mathcal{N}\}$.
        We list some key second-order derivatives.
        \begin{align}
            \label{eq:sod-1}
            \frac{\partial^2 F^I_n(\bm\eta^u)}{\partial \lambda^2_{m,c}} &= \frac{2p^2_{m,c}}{(\sum_{c\in\mathcal{C}}p_{n,c}\eta^u_{n,c}\lambda_{n})^3},\\
            \label{eq:sod-2}
            \frac{\partial^2 F^I_n(\bm\eta^u)}{\partial \lambda_{m,d_1} \partial \lambda_{m,d_2}} &= \frac{2p_{n,d_1}p_{n,d_2}}{(\sum_{c\in\mathcal{C}}p_{n,c}\eta^u_{n,c}\lambda_{n})^3},\\
            \label{eq:sod-3}
            \frac{\partial^2 F^I_n(\bm\eta^u)}{\partial \lambda_{m,c} \partial \lambda_{m_1,c}} &= 
            \frac{\partial^2 F^I_n(\bm\eta^u)}{\partial \lambda_{m_1,c} \partial \lambda_{m_2,c}} = 
            \frac{\partial^2 F^I_n(\bm\eta^u)}{\partial \lambda^2_{n_1,c}} = 0.
        \end{align}
        Combining Eq.\eqref{eq:sod-1}, Eq.\eqref{eq:sod-2}, and Eq.\eqref{eq:sod-3}, the Hessian matrix of $F^I_n(\bm\eta^u)$ is described as
        \begin{align}
        \label{eq:paoi-function-arrival-hessian}
            \mathbf{H}^I_n(\bm\eta^u)           
           =\left[\begin{array}{c|c}
            \boldsymbol A & \boldsymbol B\\
            \hline
            \boldsymbol C & \boldsymbol D
           \end{array}\right]
        \end{align}
        where $\boldsymbol D = \boldsymbol{O}^{(N-1)C}$, $\boldsymbol B = \boldsymbol C^T = \boldsymbol{O}^{C \times (N-1)C}$ and 
        $\boldsymbol A=\left[\begin{array}{ccc}
            \frac{\partial^2 F^I_n(\bm\eta^u)}{\partial \lambda^2_{m,d_1}} & \cdots & \frac{\partial^2 F^I_n(\bm\eta^u)}{\partial\lambda_{m,d_1} \partial\lambda_{m,d_\mathrm{max}}}\\
            \vdots & \ddots & \vdots\\
            \frac{\partial^2 F^I_n(\bm\eta^u)}{\partial\lambda_{m,d_\mathrm{max}} \partial\lambda_{m,d_1}} & \cdots & \frac{\partial^2 F^I_n(\bm\eta^u)}{\partial\lambda^2_{m,d_\mathrm{max}}}
          \end{array}\right]^{C \times C}
         = \frac{\boldsymbol p^T_n \times \boldsymbol p_n}{(\sum_{c\in\mathcal{C}}p_{n,c}\eta^u_{n,c}\lambda_{n})^3}
          $ 
        denotes second-order derivative matrix at $m = n$, \ie $\boldsymbol A \succeq 0$.
        The leading Principle Submatrix of $\mathbf{H}^I_n(\bm\eta^u)$ are all greater than 0, \ie \underline{$\nabla^{2} F^I_n(\hat t^\mathrm{tr},\bm\eta^u) \succeq 0$}.
        
        \item As for $F^W_n(\bm\eta^u)$, it holds
        \begin{align}
	        \frac{\partial^2 F^W_n(\bm\eta^u)}{\partial \lambda_{m_1,d_1}\partial \lambda_{m_2,d_2}} = \boldsymbol p_{m_1,d_1} \boldsymbol p_{m_2,d_2}\frac{\nu/\mu}{(1-\sum_{m=1}^{\mathcal{N}}\sum_{c\in\mathcal{C}}p_{m,c}\frac{\lambda_{m,c}}{\mu})^3}
	    \end{align}
        The hessian matrix is described as 
        \begin{align}
        \nabla^{2}F^W_n(\bm\eta^u)=\boldsymbol p^T \times \boldsymbol p \frac{\nu/\mu}{(1-\sum_{m=1}^{\mathcal{N}}\sum_{c\in\mathcal{C}}p_{m,c}\frac{\lambda_{m,c}}{\mu})^3}
        \end{align}
        Thus, we derive \underline{$\nabla^{2} F^W_n(\hat t^\mathrm{tr},\bm\eta^u) \succeq 0$}.
        \end{enumerate}
        
        Above all, $\nabla^{2} f_n(\hat t^\mathrm{tr},\bm\eta^u) \succeq 0$. The proof is completed.
\section{Details of Optimal Solution to \textbf{P1}}\label{app:det:admm-consensus}
{
In this section, we delve into the intricacies of the optimal solution for problem \textbf{P1} in detail. We elucidate the Algorithm AC, tailored for solving \textbf{P1}, and expound on its comprehensive procedure in Appendix \ref{app:det:admm-consensus-A}. The convergence properties and proofs pertaining to Algorithm AC are methodically analyzed in Appendix \ref{app:det:admm-consensus-B}. Furthermore, we expand upon Algorithm AC in Appendix \ref{app:det:admm-consensus-C} by architecting its asynchronous variant, thereby enhancing its communication efficacy in the context of unreliable channels. This enhancement paves the way for deeming the algorithm as adept in managing communication challenges within such environments.}

{
\subsection{Algorithm AC for the solution to \textbf{P1}}\label{app:det:admm-consensus-A}
According to Problem \textbf{P1-3}, the augmented Lagrangian for Problem \textbf{P1-3} can be expressed as:
\begin{align}
    L_{\rho}(\{\bm x_n\}, \bm x_{o}, \{\bm\sigma_n\})
   &= \sum_{n\in\mathcal{N}}(g_n(\bm x_n) + \langle\bm\sigma_n, \bm x_n - \bm x_{o}\rangle\nonumber\\
   &+ (\rho/2)\|\bm x_n - \bm x_{o}\|^2_2,
\end{align}
where $\bm\sigma_n := \{\sigma^n_{m,c}|c\in \mathcal{C}, m\in \mathcal{N}\}$ denotes the Lagrangian multipliers \textbf{P1-3} and $\rho$ is a positive penalty parameter.
Based on the analysis of Theorem 1 and Section IV-A2, the ADMM-Consensus based offloading algorithm is summarized as Algorithm \ref{alg:admm}. The procedure is detailed subsequently:
\paragraph{User side}  In iteration $t$ of the loop, given $\bm x_n$, the primal and dual variables are updated according to Eq.~\eqref{eq:PLadmm1-relax-i} and Eq.~\eqref{eq:PLadmm1-relax-sigma}, respectively. Then, each user sends the $\bm x^{t+1}_n$ and $\bm\sigma^{t+1}_n$ to the local server.
\begin{align}
    \label{eq:PLadmm1-relax-i} 
    \bm x^{t+1}_n & = \arg\min_{\bm x_n} (g_n(\bm x_n) + \langle\bm\sigma^t, \bm x_n - \bm x_{o}\rangle\nonumber\\
    &+ (\rho/2)\|\bm x_n - \bm x^t_{o}\|^2_2), \\ 
    \label{eq:PLadmm1-relax-sigma} 
    \bm\sigma^{t+1}_n & = \bm\sigma^{t}_n + \rho(\bm x^{t+1}_n - \bm x^{t}_{o}).
\end{align}
\paragraph{Server side} 
The server aggregates all the iterations it receives, updates the value of $\bm x^{t+1}_{o}$ according to Eq.~\eqref{eq:PLadmm1-relax-o}, and circulates the updated information back to the users.
\begin{align}
    \label{eq:PLadmm1-relax-o} 
    \bm x^{t+1}_{o} & = \frac{1}{N}\sum_{n\in\mathcal{N}}(\bm x^{t+1}_n + (\frac{1}{\rho}\bm\sigma^{t+1}_n)).
\end{align}
\paragraph{Termination criteria}
The iterative process is terminated when primal residual \(\|\bm x^{t+1}_n - \bm x^{t+1}_{o}\|^2_2\) exceeds a predefined threshold \(\epsilon^\mathit{pr}\), or when dual residual \(\rho\|\bm x^{t+1}_{o} - \bm x^{t}_{o}\|^2_2\) surpasses the threshold \(\epsilon^\mathrm{dl}\). 
\begin{algorithm}[ht]
	\caption{ADMM-Consensus (AC)}
        \label{alg:admm}
	\LinesNumbered
	\For{$n = 1$ \KwTo $N$}{
	\KwIn{$\epsilon^\mathit{pr}$, $\epsilon^\mathrm{dl}$, $\theta$, $\bm x_n$, $t$}
    \While{$\|\bm x^{t+1}_n - \bm x^{t+1}_{o}\|^2_2 > \epsilon^\mathit{pr}$ or $\rho\|(\bm x^{t+1}_{o}) - \bm x^{t}_{o})\|^2_2 > \epsilon^\mathrm{dl}$}{
    	{Calculate $\bm x^{t+1}_n$, simultaneously according to \eqref{eq:PLadmm1-relax-i}}\\
        {Calculate $\sigma^{t+1}_n$, according to \eqref{eq:PLadmm1-relax-sigma}}\\
        {Calculate $\bm x^{t+1}_{o}$, according to \eqref{eq:PLadmm1-relax-o}}\\
        {Update $t = t + 1$}\\
	}
	{Set $\bm x^{*}_n = \bm x^{t+1}_n$}\\
	\KwOut{$\bm x^{*}_n$}
}	
\end{algorithm}
\subsection{The performance analysis of Algorithm AC}\label{app:det:admm-consensus-B}
In this subsection, we further analyze the convergence and computational cost of the Algorithm AC.
According to Theorem 1, the objective function of Problem \textbf{P1-3} is defined as a closed, proper, and convex function. Its associated domain is also a well-defined closed, non-empty convex set. Moreover, the Lagrangian $L_n\left(\left\{\bm x_{n}^{t+1}\right\}, \bm x_{o}^{t+1} ; y^{t+1}\right)$ is endowed with a saddle point. Consequently, based on \cite{boyd2011distributed}, the iteration is demonstrated to satisfy three types of convergence: dual, consensus, and objective function. These convergences are explicated as follows:
\begin{itemize}
        \item \textit{Dual variable convergence.} We have $\sigma^{t}_n \rightarrow \sigma^{*}_n$ as $t \rightarrow \infty$, where $\sigma^{*}_n$ is a dual optimal vector.
        \item \textit{Consensus convergence.} The consensus constraint is satisfied eventually $\lim_{t \rightarrow \infty}\left\|\bm x_{n}^{t+1}-\bm x_{o}^{t+1}\right\|=0, \forall n$.
        \item \textit{Objective function convergence.} The objective function ultimately stabilizes at the optimal value:
    \begin{align}
        &\sum_{n\in\mathcal{N}} g_n(\bm x^t) \rightarrow \sum_{n\in\mathcal{N}} g_n(\bm x^*_n),~ t \rightarrow \infty
    \end{align}
    \end{itemize}
Within each iteration of the Algorithm \ref{alg:admm}, Eq.~\eqref{eq:PLadmm1-relax-i}, Eq.~\eqref{eq:PLadmm1-relax-sigma}, and Eq.~\eqref{eq:PLadmm1-relax-o} are updated in a sequential manner to get $\bm x^t$.
Accordingly, each individual subproblem can be efficiently tackled using prevalent optimization techniques such as gradient descent or interior point methods with low computational costs.
\subsection{Detail of Asynchronous Solution to \textbf{P1-3}}\label{app:det:admm-consensus-C}
In the preceding subsections, we introduced a suite of synchronous parallel algorithms. However, the robustness and performance of these algorithms are significantly susceptible to disruptions caused by communication errors and the inherent unreliability of wireless communication networks \cite{moon2015minimax}. This emphasizes the imperative to transition towards an alternative that is more resilient and economizes on communication costs.
Consequently, we introduce the distributed asynchronous ADMM algorithm tailored for consensus optimization of the global variable.
}

{
To guarantee the progression of an asynchronous algorithm, it is essential to establish a cap on the requisite number of communications per communicative entity. To adhere to the iteration delay bound necessary for effective communication, denoted as \( \Gamma_n \), the inequality \( (1-p^{\text{max}}_n)^{\Gamma_n} < \epsilon^a \) holds, where \( \epsilon^a \) represents the maximal allowable delay for asynchronous iteration exchanges. By taking the logarithm of both sides, we arrive at the inequality \( \Gamma_n \geq \frac{\ln(\epsilon^a)}{\ln(1-p^{\text{max}}_n)} \). Consequently, we assume the following:
\begin{assumption}
\label{assumption:finiteCommunication-inappendix}
    The communication rounds required to conclude a successful iteration are bounded above by
    \begin{equation}
        \Gamma_n = \left\lceil \frac{\ln(\epsilon^a)}{\ln(1-p^{\text{max}}_n)} \right\rceil,
    \end{equation}
    where $ \lceil x \rceil $ denotes the smallest integer greater than or equal to $ x $.
\end{assumption}
This assumption is a conventional one in studies of asynchronous ADMM, as referenced in works such as~\cite{hong2016convergence, hong2017distributed}. In the worst case, this condition is pivotal to ensuring that each participant completes at least one iteration within $ \Gamma_n $ cycles.
\paragraph{Updating $\bm x_n$ and $\bm \sigma_n$ in Users}
We first update the variables $\bm x_n$ and $\bm \sigma_n$ as follows:
\begin{align}
    \label{eq:asyn-PLadmm1-relax-i} 
    \bm x^{t+1}_n & = \arg\min_{\bm x_n} (g_n(\bm x_n) + \langle\bm\sigma^t, \bm x_n - \bm x_{o}\rangle\nonumber\\
    &+ (\rho/2)\|\bm x_n - \bm x^t_{o}\|^2_2), \\ 
    \label{eq:asyn-PLadmm1-relax-sigma} 
    \bm\sigma^{t+1}_n & = \bm\sigma^{t}_n + \rho(\bm x^{t+1}_n - \bm x^{t+1}_{o}),
\end{align}
where the most recent consensus value $\bm x_o$ is received from the server\footnote{Due to the different update speeds of users, the values of $\bm x_o$ of different users are not the same.}.
Furthermore, the termination criteria are the same as in Algorithm \ref{alg:admm}.
Accordingly, the user-side procedure is delineated in Algorithm \ref{alg:async-admm-user}:
\begin{algorithm}[ht]
\caption{Asynchronous ADMM-Consensus in Users (AACU)}
\label{alg:async-admm-user}
\LinesNumbered
\KwIn{$\bm  \sigma_n^0, t$ for all users $ n $}
\KwOut{Optimized parameter $ x_o^* $}
\While{$\|\bm x^{t+1}_n - \bm x^{t+1}_{o}\|^2_2 > \epsilon^\mathit{pr}$ or $\rho\|(\bm x^{t+1}_{o}) - \bm x^{t}_{o})\|^2_2 > \epsilon^\mathrm{dl}$}{
    Update $\bm x_n^{t+1} $ using Eq.~\eqref{eq:asyn-PLadmm1-relax-i}\\
    Update $\bm\sigma^{t+1}_n$ using Eq.~\eqref{eq:asyn-PLadmm1-relax-sigma}\\
    Send $\bm  \sigma_n^{t} $ and $\bm  x_n^{t+1} $ to the \textbf{Server}\\
    Wait for $\bm x^n_o$ from the \textbf{Server}\\
    Update $\bm  \sigma_n^{t+1} $ using (8)\\
    $ t \leftarrow t + 1 $\\
}
\end{algorithm}
\paragraph{Updating $\bm x_o$ in Server}
As shown in Algorithm \ref{alg:async-admm-server}, the server continuously accumulates updates from users until it has gathered a minimum of $N_\mathrm{min}$ updates or until the maximum delay surpasses a threshold, designated as $\Gamma = \max \{\Gamma_1 \cdots \Gamma_N\}$. 
In this context, the set of users who have submitted updates in phase $t$ is denoted as $\Phi^t$. For each user $n$ within this group, their corresponding $\bm x_n$ and $\bm \sigma_n$ values are updated.
Conversely, for those not in $\Phi^t$, their values remain unchanged. Upon completion of this process, the server proceeds to update the value of $\bm x_o$ in accordance with Eq.~\eqref{eq:asyn-PLadmm1-relax-o} and subsequently broadcasts this revised value.
\begin{align}
    \label{eq:asyn-PLadmm1-relax-o} 
    \bm x^{t+1}_{o} & = \frac{1}{N}\sum_{n\in\mathcal{N}}(\bm x^{t+1}_n + (\frac{1}{\rho}\bm\sigma^{t}_n)),
\end{align}
\begin{algorithm}[ht]
\caption{Asynchronous ADMM-Consensus by Server (AACS)}
\label{alg:async-admm-server}
\LinesNumbered
\KwIn{$ t, \bm x_n, \bm \sigma_n$ for $ i = 1, 2, \ldots, N $}
\KwOut{Optimized parameter $ \bm x_o^* $}
Initialize \\
\While{$\|\bm x^{t+1}_n - \bm x^{t+1}_{o}\|^2_2 > \epsilon^\mathit{pr}$ or $\rho\|(\bm x^{t+1}_{o}) - \bm x^{t}_{o})\|^2_2 > \epsilon^\mathrm{dl}$}{
    $ \tau_n \leftarrow 0 $ for $ n = 1, 2, \ldots, N $\\
    \While{less than $N_\mathrm{min}$ updates received or $ \max(\tau_1, \tau_2, \ldots, \tau_N) > \Gamma $}{
        Wait for updates\\
    }
    \ForEach{user $ n \in \Phi^t $}{
        $ \tau_n \leftarrow 1 $\\
        $ \bm x_n \leftarrow $ newly received $\bm x_n $ from user $ n $\\
        $ \bm \sigma_n \leftarrow $ newly received $ \sigma_n $ from user $ n $\\
    }
    \ForEach{user $ n \notin \Phi^t $}{
        $ \tau_n \leftarrow \tau_n + 1 $\\
    }
    Update $\bm x_o^{t+1} $ by Eq.~\eqref{eq:asyn-PLadmm1-relax-o}\\
    Broadcast $\bm x_o^{t+1} $ to all the users in $ \Phi^t $\\
    $ k \leftarrow t + 1 $\\
}
\end{algorithm}
\begin{theorem}
\label{thm:convergence-asysn-admm}
Let $ (x^*, \bm x_o^*) $ represent the optimal primal solution of problem \textbf{P1-3}, and $ \{\bm\sigma_n^*\}_{n=1}^N $ the corresponding optimal dual solution. Then,
\begin{align}
&\mathbb{E} \left[ \sum_{n=1}^N g_n(\bm{x}_n) - g_n(\bm x^*) + \langle \bm\sigma_n^*, \bm{x}_n - \bm x_o^* \rangle \right] \nonumber\\
\leq &\frac{N\Gamma}{2T N_\mathrm{min}} \left\{ \sum_{n=1}^N \rho \|\bm{x}_o^0 - \bm x_o^*\|^2 + \frac{1}{\rho} \|\bm\sigma_n^0 - \bm\sigma_n^*\|^2 \right\},
\end{align}
where $ \bm{x}_o^0 $ and $ \bm\sigma_n^0 $ denote the initial values of $\bm x_o $ and $ \bm\sigma_n $, respectively, for user $ n $.
\end{theorem}
\begin{proof}
The result of Theorem \ref{thm:convergence-asysn-admm} can be obtained via the proof of combining \cite[Theorem 4.2]{zhang2014asynchronous} and \cite[Corollary 4.3]{zhang2014asynchronous}.
\end{proof}
Consequently, the theorem \ref{thm:convergence-asysn-admm} implies that the combination of Algorithms \ref{alg:async-admm-user} and \ref{alg:async-admm-server} can sublinearly converge.
Specifically, the asynchronous algorithm convergence rate is $O(\frac{N\Gamma}{T N_\mathrm{min}})$.  
}
    
\section{A Proof of Proposition \ref{pro:npl-mp}}
\label{app:pro:npl-mp}
    A brief idea of the proof is given as follows. 
    According to nonlinear fractional programming~\cite{dinkelbach1967nonlinear}, $\theta^{*}_n$ is achieved if and only if
	\begin{align}
	\mathop{\min}_{\bm x_n} f^{p,u}_n(\bm x_n) - \theta^{*}_n f^{p,l}_n(\bm x_n)
	\nonumber\\ 
       = f^{p,u}_n(\bm x_n^*) - \theta^{*}_nf^{p,l}_n(\bm x_n^*) = 0.
    \label{eq:NFP-optimal-theta}
    \end{align}
which outlines the necessary and sufficient criteria in order to reach $\theta^{*}_n$.
In light of this, $\{\bm x_n\}$ can be acquired by resolving the following transformation problem.

\section{A Proof of Proposition \ref{theorem:NFP_Problem}}\label{app:theorem:NFP_Problem}
Here is a concise proof followed by a comprehensive one.
\subsection{A Concise Proof}
\begin{enumerate}  
\item Given the complexity of the proof, we will only provide a brief overview of the main idea here. We begin by computing the Hessian matrix of the entire function. Due to differing priority attributes ($\delta(n_1)> \delta(n_2)$, $\delta(n_1)=\delta(n_2)$, n1=n2,$\delta(n_1)< \delta(n_2)$), we need to examine the second derivative of the function in at least 16 cases.

\item Through analyzing the properties of the Hessian matrix, we observe that the function is Lipschitz smooth or gradient Lipschitz continuous as long as the values in the Hessian matrix are finite. 

\item We can also obtain the value of $\ell_n$in $\ell_n I \succeq \nabla^{2}F$
 based on the characteristics of the Hessian matrix.
\end{enumerate}
\subsection{A Comprehensive Proof}
    For computational convenience, we present some functions, where
    $\Phi_{\delta(n)}(\bm x) = 1-\sum_{\delta=1}^{\delta(n)}\sum_{n'=1}^{\mathcal{N}^{\delta}}\sum_{c\in\mathcal{C}}p_{n',c}\frac{\eta^u_{n',c}\lambda_{n'}}{\mu_{n'}}$,
    $\Phi_0(\bm x) = 1$,
    $\Upsilon(\bm x) = \frac{1}{2}\sum_{\delta=1}^{\Delta}\sum_{n'=1}^{\mathcal{N}^{\delta}}\sum_{c\in\mathcal{C}}p_{n',c}\eta^u_{n',c}\lambda_{n'}\nu_{n'}$,
    and $\Psi_n(\bm x)=\sum_{c\in\mathcal{C}}p_{n,c}\eta^u_{n,c}\lambda_{n}$.
    We also give some useful parameters in advance, where
    $\{n_1, n_2 \in \mathcal{N}\}$, $\{c_1, c_2\in \mathcal{C}\}$,
    $\kappa_{1}=\frac{p_{n_1,c_1}}{\mu_{n_1}}>0$, 
    $\kappa^1_\mathrm{max}=\frac{1}{\mu_\mathrm{min}}>0$,
    and $p_{n,c}\in[0,1]$.
    Additionally, since $\nabla_{\{t^\mathrm{tr} \in \bm x\}} g^{p}_n(\bm x) = 0$, we just ignore it.
    Due to $\nabla^2_{\eta_{n_1,c_1}\eta_{n_2,c_2}} g^{p}_n(\bm x) = \lambda_{n_1}\lambda_{n_2}\nabla^2_{\lambda_{n_1,c_1}\lambda_{n_2,c_2}} g^{p}_n(\bm x)$, we have
\subsubsection{Hessian Matrix}
    We first analyze the function $f^{p,u}_n(\bm x)$,
    \begin{align}
        f^{p,u}_n(\bm x)  & = \Phi_{\delta(n)}(\bm x)\Phi_{\delta(n)-1}(\bm x) + \Upsilon(\bm x)\Psi_n(\bm x).
    \end{align}

    \noindent{\bf Case I)} If $\delta(n_1)<\delta(n)$, we obtain first-order derivative.
    \begin{align}
    \frac{\partial f^{p,u}_n(\bm x)}{\partial \lambda_{n_1,c_1}} &= -\frac{p_{n_1,c_1}}{\mu_{n_1}}(\Phi_{\delta(n)}(\bm x) + \Phi_{\delta(n)-1}(\bm x))\nonumber\\
    &+ \frac{1}{2}p_{n_1,c_1}\nu_{n_1}\Psi_n(\bm x)
    \end{align}
    We obtain some second-order derivatives.
    \begin{equation}  
		\frac{\partial^2 f^{p,u}_n(\bm x)}{\partial\lambda_{n_1,c_1} \partial\lambda_{n_2,c_2}}\\
		= \left\{
		\begin{array}{ll}
		 2\frac{p_{n_1,c_1}p_{n_2,c_2}}{\mu_{n_1}\mu_{n_2}}, \quad \delta(n_2)<\delta(n)\\
		 \frac{p_{n_1,c_1}p_{n_2,c_2}}{\mu_{n_1}\mu_{n_2}}+\frac{1}{2}p_{n_1,c_1}p_{n_2,c_2}\nu_{n_1},
		\quad n_2 = n\\
		 \frac{p_{n_1,c_1}p_{n_2,c_2}}{\mu_{n_1}\mu_{n_2}},
		\quad \delta(n_2) = \delta(n), \quad n_2 \ne n\\
		 0, \quad \delta(n_2)>\delta(n)\\
		\end{array}
		\right..
	\end{equation}
    
    \noindent{\bf Case II)} If $\delta(n_1)=\delta(n)~\&~n_1 = n$, we obtain first-order derivative.
    \begin{align}
    \frac{\partial f^{p,u}_n(\bm x)}{\partial \lambda_{n_1,c_1}} &= -\frac{p_{n_1,c_1}}{\mu_{n_1}}\Phi_{\delta(n)-1}(\bm x) + p_{n_1,c_1}\Upsilon(\bm x)\nonumber\\
    &+\frac{1}{2}p_{n_1,c_1}\nu_{n_1}\Psi_n(\bm x)
    \end{align}
    We obtain some second-order derivatives.
    \begin{equation}  
		\frac{\partial^2 f^{p,u}_n(\bm x)}{\partial\lambda_{n_1,c_1} \partial\lambda_{n_2,c_2}} = \left\{
		\begin{array}{ll}
		& \frac{p_{n_1,c_1}p_{n_2,c_2}}{\mu_{n_1}\mu_{n_2}}+\frac{1}{2}p_{n_1,c_1}p_{n_2,c_2}\nu_{n_2},\nonumber\\
		&\quad \delta(n_2)<\delta(n)\\
		& \frac{1}{2}p_{n_1,c_1}p_{n_2,c_2}\nu_{n_1}+\frac{1}{2}p_{n_1,c_1}p_{n_2,c_2}\nu_{n_2},\nonumber\\  
		&\quad n_2 = n\\
		& \frac{1}{2}p_{n_1,c_1}p_{n_2,c_2}\nu_{n_2},  \quad \delta(n_2) = \delta(n), \nonumber\\ 
		&\quad n_2 \ne n\\
		& \frac{1}{2}p_{n_1,c_1}p_{n_2,c_2}\nu_{n_2}, \quad \delta(n_2)>\delta(n)\\
		\end{array}
		\right..
	\end{equation}
	
	\noindent{\bf Case III)} If $\delta(n_1)=\delta(n)~\&~n_1 \ne n$, we obtain first-order derivative.
    \begin{align}
    \frac{\partial f^{p,u}_n(\bm x)}{\partial \lambda_{n_1,c_1}} = -\frac{p_{n_1,c_1}}{\mu_{n_1}}\Phi_{\delta(n)-1}(\bm x) +\frac{1}{2}p_{n_1,c_1}\nu_{n_1}\Psi_n(\bm x)
    \end{align}
    We obtain some second-order derivatives.
    \begin{equation}  
		\frac{\partial^2 f^{p,u}_n(\bm x)}{\partial\lambda_{n_1,c_1} \partial\lambda_{n_2,c_2}} = \left\{
		\begin{array}{ll}
		& \frac{p_{n_1,c_1}p_{n_2,c_2}}{\mu_{n_1}\mu_{n_2}}, \quad \delta(n_2)<\delta(n)\\
		& \frac{1}{2}p_{n_1,c_1}p_{n_2,c_2}\nu_{n_1},  \quad n_2 = n\\
		& 0,  \quad \delta(n_2) = \delta(n), \quad n_2 \ne n\\
		& 0, \quad \delta(n_2)>\delta(n)\\
		\end{array}
		\right..
	\end{equation}
	
    \noindent{\bf Case IV)} If $\delta(n_1)>\delta(n)$, we obtain first-order derivative.
    \begin{align}
    \frac{\partial f^{p,u}_n(\bm x)}{\partial \lambda_{n_1,c_1}} =\frac{1}{2}p_{n_1,c_1}\nu_{n_1}\Psi_n(\bm x)
    \end{align}
    We obtain some second-order derivatives.
    \begin{equation}  
		\frac{\partial^2 f^{p,u}_n(\bm x)}{\partial\lambda_{n_1,c_1} \partial\lambda_{n_2,c_2}} = \left\{
		\begin{array}{ll}
		& 0, \quad \delta(n_2)<\delta(n)\\
		& \frac{1}{2}p_{n_1,c_1}p_{n_2,c_2}\nu_{n_1},  \quad n_2 = n\\
		& 0,  \quad \delta(n_2) = \delta(n), \quad n_2 \ne n\\
		& 0, \quad \delta(n_2)>\delta(n)\\
		\end{array}
		\right..
	\end{equation}

	As for $f^{p,l}_n(\bm x)$, we have some different cases, as follows.
	\begin{align}
    f^{p,l}_n(\bm x)  & = \Psi_n(\bm x)\Phi_{\delta(n)}(\bm x)\Phi_{\delta(n)-1}(\bm x).
    \end{align}   
    
    \noindent{\bf Case I)} If $\delta(n_1)<\delta(n)$, we obtain first-order derivative.
    \begin{align}
    \frac{\partial f^{p,l}_n(\bm x)}{\partial \lambda_{n_1,c_1}} = -\frac{p_{n_1,c_1}}{\mu_{n_1}}(\Phi_{\delta(n)}(\bm x)+\Phi_{\delta(n)-1}(\bm x))\Psi_n(\bm x)
    \end{align}
    We obtain some second-order derivatives.
    \begin{equation}  
		\noindent\frac{\partial^2 f^{p,l}_n(\bm x)}{\partial\lambda_{n_1,c_1} \partial\lambda_{n_2,c_2}}\nonumber
	\end{equation}
    \begin{equation}  
		= \left\{
		\begin{array}{ll}
		& 2\frac{p_{n_1,c_1}p_{n_2,c_2}}{\mu_{n_1}\mu_{n_2}}\Psi_n(\bm x), \quad \delta(n_2)<\delta(n)\\
		& \frac{p_{n_1,c_1}p_{n_2,c_2}}{\mu_{n_1}\mu_{n_2}}(\Psi_n(\bm x)-\frac{\nu_{n_2}}{2\mu_{n_2}}\Phi_{\delta(n)}(\bm x)-\frac{\nu_{n_2}}{2\mu_{n_2}}\Phi_{\delta(n)-1}(\bm x)),\\
		& ~~~\quad n_2 = n\\
		& \frac{p_{n_1,c_1}p_{n_2,c_2}}{\mu_{n_1}\mu_{n_2}}\Psi_n(\bm x),\quad \delta(n_2) = \delta(n), \quad n_2 \ne n\\
		& 0, \quad \delta(n_2)>\delta(n)\\
		\end{array}
		\right..
	\end{equation}
    
    \noindent{\bf Case II)} If $\delta(n_1)=\delta(n)~\&~n_1 = n$, we obtain first-order derivative.
    \begin{align}
    \frac{\partial f^{p,l}_n(\bm x)}{\partial \lambda_{n_1,c_1}} &= p_{n_1,c_1}\Phi_{\delta(n)}(\bm x)\Phi_{\delta(n)-1}(\bm x)\nonumber\\
    &-\frac{p_{n_1,c_1}}{\mu_{n_1}}\Psi_n(\bm x)\Phi_{\delta(n)-1}(\bm x)
    \end{align}
    We obtain some second-order derivatives.
    \begin{equation}  
		\frac{\partial^2 f^{p,l}_n(\bm x)}{\partial\lambda_{n_1,c_1} \partial\lambda_{n_2,c_2}}\nonumber
		\end{equation}
    \begin{equation}  
     = \left\{
		\begin{array}{ll}
		& \frac{p_{n_1,c_1}p_{n_2,c_2}}{\mu_{n_1}\mu_{n_2}}(\Psi_n(\bm x)-\frac{\nu_{n_2}}{2\mu_{n_2}}\Phi_{\delta(n)}(\bm x)-\frac{\nu_{n_2}}{2\mu_{n_2}}\Phi_{\delta(n)-1}(\bm x)),\\
		&\quad \delta(n_2)<\delta(n)\\
		& -\frac{p_{n_1,c_1}p_{n_2,c_2}}{\mu_{n_1}\mu_{n_2}}(\frac{\nu_{n_1}}{2\mu_{n_1}}+\frac{\nu_{n_2}}{2\mu_{n_2}})\Phi_{\delta(n)-1}(\bm x),  \quad n_2 = n\\
		& -\frac{p_{n_1,c_1}p_{n_2,c_2}}{\mu_{n_1}\mu_{n_2}}\frac{\nu_{n_2}}{2\mu_{n_2}}\Phi_{\delta(n)-1}(\bm x),  \quad \delta(n_2) = \delta(n), \quad n_2 \ne n\\
		& 0, \quad \delta(n_2)>\delta(n)\\
		\end{array}
		\right..
	\end{equation}
	
	\noindent{\bf Case III)} If $\delta(n_1)=\delta(n)~\&~n_1 \ne n$, we obtain first-order derivative.
    \begin{align}
    \frac{\partial f^{p,l}_n(\bm x)}{\partial \lambda_{n_1,c_1}} = -\frac{p_{n_1,c_1}}{\mu_{n_1}}\Psi_n(\bm x)\Phi_{\delta(n)-1}(\bm x)
    \end{align}
    We obtain some second-order derivatives.
    \begin{equation}  
		\frac{\partial^2 f^{p,l}_n(\bm x)}{\partial\lambda_{n_1,c_1} \partial\lambda_{n_2,c_2}} = \left\{
		\begin{array}{ll}
		& \frac{p_{n_1,c_1}p_{n_2,c_2}}{\mu_{n_1}\mu_{n_2}}\Psi_n(\bm x), \quad \delta(n_2)<\delta(n)\\
		& -\frac{p_{n_1,c_1}p_{n_2,c_2}}{\mu_{n_1}\mu_{n_2}}\frac{\nu_{n_2}}{2\mu_{n_2}}\Phi_{\delta(n)-1}(\bm x),  \quad n_2 = n\\
		& 0, \quad n_2 \ne n\\
		& 0, \quad \delta(n_2)>\delta(n)\\
		\end{array}
		\right..
	\end{equation}
	
    \noindent{\bf Case IV)} If $\delta(n_1)>\delta(n)$, we obtain first and second-order derivative.
    \begin{align}
    &\frac{\partial f^{p,l}_n(\bm x)}{\partial \lambda_{n_1,c_1}} =0
    &\frac{\partial^2 f^{p,l}_n(\bm x)}{\partial\lambda_{n_1,c_1} \partial\lambda_{n_2,c_2}} = 0.
    \end{align}
    
    Above all, we have the first-order derivative and the second-order derivative information of $g^{p}_n(\bm x) = f^{p,u}_n(\bm x) - \theta^{*}_n f^{p,l}_n(\bm x)$.

\subsubsection{The Relationship between Hessian Matrix and Lipschitz Smooth}
	Obviously, for any $n$, $g^{p}_n(\bm x)$ is a smooth function.
	
	On the other hand, the twice differentiable function $g^{p}_n(\bm x)$ has a Lipschitz continuous gradient with modulus $\ell_n$ if and only if its Hessian satisfies $\ell_n I \succeq \nabla^{2} g^{p}_n(\bm x)$. We have $\ell_n$ is equal to the maximum eigenvalue of $\nabla^{2} g^{p}_n(\bm x)$,\cite{zhang2020fedpd}.
	
	In order to obtain the maximum eigenvalue, we ought to get the maximum value of the sum of the absolute values of the elements of the column of the matrix $\nabla^{2} g^{p}_n(\bm x)$, which is because that if $\xi$ is the eigenvector of matrix $\mathcal{A}$ with respect to eigenvalue $\omega$. 
	
	According to the basic property, we have $\sum_{j=1}^{n} A_{i j} \xi_{j}=\omega \xi_{i}$, \ie $\sum_{j=1}^{n}\left|A_{i j} \right|\left|\xi_{j}\right| \geq|\omega |\left|\xi_{i}\right|$.
	
	Summing over both sides, it is 
	\begin{align}
	    \sum_{i=1}^{n}\sum_{j=1}^{n}\left|A_{i j} \right|\left|\xi_{j}\right| \geq\sum_{i=1}^{n}|\omega |\left|\xi_{i}\right|
	\end{align}
	
	Let $M=\max_{1 \leq j \leq n} \sum_{i=1}^{n}\left|A_{i j}\right|$, we have
	\begin{align}
	    M \sum_{j=1}^{n}\left|\xi_{j}\right| \geq \sum_{j=1}^{n}\left|\xi_{j}\right| \sum_{i=1}^{n}\left|A_{i j}\right| \geq \sum_{i=1}^{n}|\omega |\left|\xi_{i}\right| \geq|\omega | \sum_{i=1}^{n}\left|\xi_{i}\right|
	\end{align}

	Since $\xi$ is not a zero vector, we obtain 
	\begin{align}
	    |\omega | \leq \max_{1 \leq j \leq n} \sum_{i=1}^{n}\left|A_{i j}\right|
	\end{align}

\subsubsection{Solve for the value of $\ell_n$}
    Upper bound on the sum of the absolute values of the elements in the columns of the matrix $\nabla^{2}_{\bm\eta^u} g^{p}_n(\bm x)$ is expressed as $\varphi(\bm x)\lambda^2_{max}$ with $\lambda_{max}=\mathop{\max}_n {\lambda_n}$:
	
	\begin{align}
        \label{eq:hessian}
	    \varphi(\bm x) &\leq \mathop{\max} \Big\{\inf |\varphi_1(\bm x)|, \inf |\varphi_2(\bm x)|, \inf |\varphi_3(\bm x)|, \inf |\varphi_4(\bm x)| \Big\} \nonumber\\
	    &\mathop{\leq}^{(a)} \mathop{\max} \Big\{\inf |\varphi_1(\bm x)|, \inf |\varphi_2(\bm x)|, \inf |\varphi_3(\bm x)| \Big\}\nonumber\\
	    &\mathop{\leq}^{(b)} \mathop{\max} \Big\{\inf |\varphi_1(\bm x)|, \inf |\varphi_3(\bm x)| \Big\}\nonumber\\
	    &\mathop{\leq}^{(c)} \mathop{\max} \Big\{\inf |\varphi_3(\bm x)| \Big\}\nonumber\\
	    &\mathop{\leq}^{(d)} \frac{1}{\mu^2_\mathrm{min}}\Big(
	        \sum_{n_1\in \mathcal{N}} |\frac{\nu_{n}}{2\mu_{n}\mu_{n_1}}|
	        + \sum_{n_1\in \{<\delta(n)\}\bigcup \{n\}}|1+\theta^{*}_n \frac{\nu_{n}}{2\mu_{n}}|\nonumber\\
	        &~~~~+ \sum_{n_1\in \{\leq\delta(n)\}}|\theta^{*}_n \frac{\nu_{n}}{2\mu_{n}}|\Big),
	\end{align}
    where $\varphi_1(\bm x)$, $\varphi_2(\bm x)$, $\varphi_3(\bm x)$, and $\varphi_4(\bm x)$ are
    
	\noindent{\bf Case I)} If $\delta(n_2)<\delta(n)$, we have
	\begin{align}
	    \varphi_1(\bm x)
	   &= \overbrace{\sum_{n_1\in \{<\delta(n)\}}\mid 2\kappa_{1}\kappa_{2}(1-\theta^{*}_n\Psi_n(\bm x))\mid}^{\varphi_{1a}(\bm x)}\nonumber\\
	    &+ \overbrace{\sum_{n_1\in \{n\}}\mid \kappa_{1}\kappa_{2}(1+\frac{\nu_{n}}{2\mu_{n}\mu_{n_2}}+\theta^{*}_n(\frac{\nu_{n}}{2\mu_{n}}\Phi_{\delta(n)}(\bm x)\mid }^{\varphi_{1b}(\bm x)} \nonumber\\
	    &+\frac{\nu_{n}}{2\mu_{n}}\Phi_{\delta(n)-1}(\bm x))-\Psi_n(\bm x))\nonumber\\
	    &+ \overbrace{\sum_{n_1\in \{=\delta(n)\}/n}\mid  \kappa_{1}\kappa_{2}(1-\theta^{*}_n\Psi_n(\bm x))\mid }^{\varphi_{1c}(\bm x)} \nonumber\\
	    &+ \overbrace{\sum_{n_1\in \{>\delta(n)\}} 0 }^{\varphi_{1d}(\bm x)} .
	\end{align}
	
	\noindent{\bf Case II)} If $\delta(n_2)=\delta(n)~\&~n_2\ne n$, we derive
	\begin{align}
	    \varphi_2(\bm x)
	   &= \overbrace{\sum_{n_1\in \{<\delta(n)\}}\mid \kappa_{1}\kappa_{2}(1-\theta^{*}_n\Psi_n(\bm x))\mid }^{\varphi_{2a}(\bm x)}\nonumber\\
	    &+ \overbrace{\sum_{n_1\in \{n\}}\mid \kappa_{1}\kappa_{2}(\frac{\nu_{n}}{2\mu_{n}\mu_{n_2}}+\theta^{*}_n \frac{\nu_{n}}{2\mu_{n}}\Phi_{\delta(n)}(\bm x))\mid }^{\varphi_{2b}(\bm x)} \nonumber\\
	    &+ \overbrace{\sum_{n_1\in \{=\delta(n)\}/n}\mid \kappa_{1}\kappa_{2}(1-\theta^{*}_n\Psi_n(\bm x))\mid }^{\varphi_{2c}(\bm x)}\nonumber\\
	    &+ \overbrace{\sum_{n_1\in \{>\delta(n)\}} 0 }^{\varphi_{2d}(\bm x)}.
	\end{align}
	
	\noindent{\bf Case III)} If $n_2 = n$, we obtain
	\begin{align}
	    \varphi_3(\bm x)
	   &= \overbrace{\sum_{n_1\in \{<\delta(n)\}}\mid \kappa_{1}\kappa_{2}(1+\frac{\nu_{n}}{2\mu_{n}\mu_{n_1}}+\theta^{*}_n(\frac{\nu_{n}}{2\mu_{n}}\Phi_{\delta(n)}(\bm x)}^{\varphi_{3a}}\nonumber\\
          &+\frac{\nu_{n}}{2\mu_{n}}\Phi_{\delta(n)-1}(\bm x))(\bm x)
	   -\Psi_n(\bm x))\mid \nonumber\\
	    &+ \overbrace{\sum_{n_1\in \{n\}}\mid 2\kappa_{1}\kappa_{2}(1+\theta^{*}_n \frac{\nu_{n}}{2\mu_{n}}\Phi_{\delta(n)-1}(\bm x))\mid }^{\varphi_{3b}(\bm x)}\nonumber\\
	    &+ \overbrace{\sum_{n_1\in \{=\delta(n)\}/n}\mid  \kappa_{1}\kappa_{2}(\frac{\nu_{n}}{2\mu_{n}\mu_{n_1}}+\theta^{*}_n \frac{\nu_{n}}{2\mu_{n}}\Phi_{\delta(n)}(\bm x))\mid }^{\varphi_{3c}(\bm x)}\nonumber\\
	    &+ \overbrace{\sum_{n_1\in \{>\delta(n)\}}\mid \kappa_{1}\kappa_{2}\frac{\nu_{n}}{2\mu_{n}\mu_{n_1}}\mid }^{\varphi_{3d}(\bm x)}.
	\end{align}
	
	\noindent{\bf Case IV)} If $\delta(n_2)>\delta(n)$, we gain
	\begin{align}
	    \varphi_4(\bm x)
	   &= \overbrace{\sum_{n_1\in \{<\delta(n)\}} 0 }^{\varphi_{4a}(\bm x)}+ \overbrace{\sum_{n_1\in \{n\}}\mid \kappa_{1}\kappa_{2}\frac{\nu_{n}}{2\mu_{n}\mu_{n_2}}\mid }^{\varphi_{4b}(\bm x)}\nonumber\\
	    &+ \overbrace{\sum_{n_1\in \{=\delta(n)\}/n} 0 }^{\varphi_{4c}(\bm x)}
	    + \overbrace{\sum_{n_1\in \{>\delta(n)\}} 0 }^{\varphi_{4d}(\bm x)}.
	\end{align}

    Eq.\eqref{eq:hessian} holds because:
	\begin{itemize}
	    \item  (a) holds because $\varphi_2(\bm x)>\varphi_4(\bm x)$.
	
    	\item  (b) holds because $\inf | \varphi_{2}(\bm x) | < \inf | \varphi_{1}(\bm x) |$
	    
	    \item  (c) holds because $\inf | \varphi_{1}(\bm x) | < \inf | \varphi_{3}(\bm x) |$
	    
	    \item  (d) holds because
	    
	    \begin{align}
         	\inf |\varphi_3(\bm x)| 
	        \leq |\kappa^2_\mathrm{max}|\Big(\sum_{n_1\in \{<\delta(n)\}}| 1+\frac{\nu_{n}}{2\mu_{n}\mu_{n_1}}+2\theta^{*}_n \frac{\nu_{n}}{2\mu_{n}}|\nonumber\\
	        + \sum_{n_1\in \{n\}}| 2(1+\theta^{*}_n \frac{\nu_{n}}{2\mu_{n}})|
	        + \sum_{n_1\in \{=\delta(n)\}/n}| \frac{\nu_{n}}{2\mu_{n}\mu_{n_1}}+\theta^{*}_n \frac{\nu_{n}}{2\mu_{n}}|\nonumber\\
	        + \sum_{n_1\in \{>\delta(n)\}}| \frac{\nu_{n}}{2\mu_{n}\mu_{n_1}}| \Big)\nonumber\\
	        = \frac{1}{\mu^2_\mathrm{min}}\Big(
	        \sum_{n_1\in \mathcal{N}} |\frac{\nu_{n}}{2\mu_{n}\mu_{n_1}}|
	        + \sum_{n_1\in \{<\delta(n)\}\bigcup \{n\}}|1+\theta^{*}_n \frac{\nu_{n}}{2\mu_{n}}|\nonumber\\
	        + \sum_{n_1\in \{\leq\delta(n)\}}|\theta^{*}_n \frac{\nu_{n}}{2\mu_{n}}|\Big)
	    \end{align}
	\end{itemize}
    
    Let $\omega^{G}$ denote the the eigenvalue of $\nabla^{2} g^{p}_n$, 
    \begin{align}
        &|\omega^{G}|\nonumber\\
        \leq& \frac{\lambda^2_\mathrm{max}}{\mu^2_\mathrm{min}}\Big(
	        \sum_{n_1\in \mathcal{N}} |\frac{\nu_{n}}{2\mu_{n}\mu_{n_1}}|
	        + \sum_{n_1\in \{<\delta(n)\}\bigcup \{n\}}|1+\theta^{*}_n \frac{\nu_{n}}{2\mu_{n}}|\nonumber\\
	        &+ \sum_{n_1\in \{\leq\delta(n)\}}|\theta^{*}_n \frac{\nu_{n}}{2\mu_{n}}|\Big)\nonumber\\
	=& K_n,
    \end{align}
    where $\mu_{min} = \mathop{\min}_n \mu_{n}$ and $\lambda_{max} = \mathop{\max}_n \lambda_{n}$.
    Thus, if $\ell_n = K_n$ are limited, $\ell_n I \succeq \nabla^{2} g^{p}_n(\bm x)$ and $g^{p}_n(\bm x)$ is Lipschiz smooth.
    

\section{A Proof of Theorem \ref{theorem:innerConvergence1}}\label{app:theorem:innerConvergence1}    
\subsection{Convergence Analysis}
    First, we prove the convergence of Algorithm \ref{alg:nonconvexadmm}.
    There are some properties in $g^{p}_n\left(\bm\lambda \right)$, as follows,
    \begin{enumerate}
        \item there exists a positive constant $\ell_n>0$, it has
        $\left\|\nabla g^{p}_n\left(\bm\lambda \right)-\nabla g^{p}_n\left(\bm x^{'} \right)\right\| \leq \ell_n\left\|\bm\lambda-\bm x^{'} \right\| \quad \forall \bm\lambda, \bm x^{'}$
        \item for any $n$, the $\rho_{n}$ chosen is large enough, 
        $\rho_{n}>\max \left\{\frac{2 \ell_n^{2}}{\varepsilon_{n}}, \ell_n\right\}$, 
        where $\varepsilon_{n}$ satisfies $\varepsilon_{n}I \preceq \nabla^2 L^{p}_n(\bm x)$
        \item $\mathop{\min}_{\bm x\in \Omega} g^{p}_n(\bm x)>-\infty$ 
    \end{enumerate}
    
    Proof of Property 1), Proposition \ref{theorem:NFP_Problem} has proven there exists a positive constant $\ell_n>0$ satisfies $\ell_n I \succeq \nabla^{2} g^{p}_n(\bm x)$, \ie $\left\|\nabla g^{p}_n\left(\bm\lambda \right)-\nabla g^{p}_n\left(\bm x^{'} \right)\right\| \leq \ell_n\left\|\bm\lambda-\bm x^{'} \right\| \quad \forall \bm\lambda, \bm x^{'}$.
    
    Proof of Property 2), obviously, since $\varepsilon_{n}I \preceq \nabla^2 L^{p}_n(\bm x)$, $\varepsilon_{n} = -\ell_n+\rho_n$. Due to $\rho_n>2\ell_n$ and $\rho_n>0$, $(\rho_n-2\ell_n)(\rho_n+\ell_n)>0$. Thus, $\rho_{n}>\max \left\{\frac{2 \ell_n^{2}}{\varepsilon_{n}}, \ell_n\right\}$.
    
    Proof of Property 3), when $\bm x\in \Omega$, we have,
    \begin{align}
        & g^{p}_n(\bm x) = f^{p,u}_n(\bm x) - \theta^{*}_n f^{p,l}_n(\bm x) \nonumber\\
        &= \Phi_{\delta(n)}(\bm x)\Phi_{\delta(n)-1}(\bm x) + \Upsilon(\bm x)\Psi_n(\bm x)\nonumber\\
        &- \theta^{*}_n\Psi_n(\bm x)\Phi_{\delta(n)}(\bm x)\Phi_{\delta(n)-1}(\bm x) \nonumber\\
        & \geq - \theta^{*}_n\lambda_n >-\infty
    \end{align}
    where $\theta^{*}_n$ and $\lambda_n$ are constant, $\Phi_{\delta(n)}(\bm x), \Phi_{\delta(n)-1}(\bm x) \in [0,1]$, $\Psi_n(\bm x) \in [0,\lambda_n]$ and $\Upsilon(\bm x) > 0$
    
    \begin{itemize}
        \item \textbf{Dual variable convergence} $L^{p}_n(x;y)$ will increase after each dual.\cite{boyd2011distributed}
        \item \textbf{Consensus convergence} The consensus constraint is satisfied eventually $\lim_{t \rightarrow \infty}\left\|\bm x_{n}^{t+1}-\bm x_{o}^{t+1}\right\|=0, \forall n$.
        \item \textbf{Objective Function convergence} Above the property 1) and 2), we have    
    \begin{align}
        &L^{p}_n\left(\left\{\bm x_{n}^{t+1}\right\}, \bm x_{o}^{t+1} ; y^{t+1}\right)-L^{p}_n\left(\left\{\bm x^t_{n}\right\}, \bm x^t_{o} ; \bm\sigma^t\right)\nonumber\\
        &\leq \sum_{n\in\mathcal{N}}\left(\frac{L_{n}^{2}}{\rho_{n}}-\frac{\varepsilon_{n}}{2}\right)\left\|\bm x_{n}^{t+1}-\bm x^t_{n}\right\|^{2}\nonumber\\
        &-\frac{\sum_{n\in\mathcal{N}}\rho_{n}}{2}\left\|\bm x_{o}^{t+1}-\bm x^t_{o}\right\|^{2}<0
    \end{align}
    according to~\cite{hong2016convergence}.
    Thus, $L^{p}_n(x;y)$ will decrease after each dual.
    Furthermore, according to property 3), we obtain $L^{p}_n\left(\left\{\bm x_{n}^{t+1}\right\}, \bm x_{o}^{t+1} ; y^{t+1}\right)$ is limited.
    \end{itemize}
    
    Thus, the Algorithm \ref{alg:nonconvexadmm} will converge to the set of stationary solutions.
    
    \subsection{Convergence Rate}
    We have proven some properties in $g^{p}_n\left(\bm\lambda \right)$ and the convergence of Algorithm \ref{alg:nonconvexadmm}.
    Based on \cite[Theorem 2.5]{hong2016convergence},
    we have
    \begin{equation}
        \epsilon^{\mathrm{ac}} <\frac{k^{\Gamma}(L^{p}\left(\left\{\bm x_{n}^{1}\right\}, \bm x_{o}^{1}, \bm \sigma^{1}\right)-\underline{G^{p}})}{p^\mathrm{syn}\Gamma^\mathrm{syn}},
    \end{equation}
    where $p^\mathrm{syn}\Gamma^\mathrm{syn}$ means the number of successful iterations.
    Therefore, we have
    \begin{equation}
        \Gamma^\mathrm{syn}<\frac{k^{\Gamma}(L^{p}\left(\left\{\bm x_{n}^{1}\right\}, \bm x_{o}^{1}, \bm \sigma^{1}\right)-\underline{G^{p}})}{\epsilon^{\mathrm{ac}} p^\mathrm{syn}},
    \end{equation}
    where  
    $\epsilon^{\mathrm{ac}}$ is a positive iteration factor, 
    $k^{\Gamma}$ is a constant, 
    $p^\mathrm{syn} = \Pi_{n\in\mathcal{N}}(\frac{1}{C} \sum_{c\in\mathcal{C}}p_{n,c})$ probability of successfully completing a synchronous update,
    $\underline{G^{p}}$ is the lower bound of $\sum_{n\in\mathcal{N}} g^{p}_n(\bm x_n)$,
    and $\Gamma^\mathrm{syn}$ is number of iterations, \ie 
    $\Gamma^\mathrm{syn} = \min \left\{t\mid \eta\left(\bm x^t, \bm\sigma^t\right) \leq \epsilon, t \geq 0\right\}$.

    Above all, we derive that Algorithm \ref{alg:nonconvexadmm} converge to an $\epsilon^{\mathrm{ac}}$-stationary point within $O(1/(p^\mathrm{syn}\epsilon^\mathrm{ac}))$.

\section{}\label{app:thm:single-or-multi}    
\subsection{A Proof of Theorem \ref{thm:single-or-multi}}\label{app:thm:single-or-multi-A}
If the amount of migration tasks from the local server to other servers is 0, \ie $\phi^{out}(\bm \eta^s) = \sum_{m'\in\{\mathcal{M}/m_n\}} \eta^s_{m_n,m'} \sum_{\delta \in \Delta} \lambda^{s}_{\delta,m_n}$, 
and the amount of migration tasks from other servers to the local server is 0, \ie $\phi^{in}(\bm \eta^s) = \sum_{m'\in\{\mathcal{M}/m_n\}} \eta^s_{m',m_n} \sum_{\delta \in \Delta} \lambda^{s}_{\delta,m'}$,
 the migration decision variableis $y_{m_n} = 0$.
Otherwise, vice versa.
Thus, we obtain
    \begin{equation}  
		y_{m_n} = \left\{
		\begin{array}{ll}
		 1, &\quad  \phi^{in}(\bm \eta^s)+\phi^{out}(\bm \eta^s)> 0\\
		 0 , &\quad otherwise,
		\end{array}
		\right..	
	\end{equation}

\subsection{An Alternative Problem for \textbf{(PoPeC)}}\label{app:thm:single-or-multi-B}
If  $\phi^{in}(\bm \eta^s)+\phi^{out}(\bm \eta^s) = 0$,
we derive $f^p_n(\hat t^\mathrm{tr}_n,\bm\eta^u) = f^s_n(\hat t^\mathrm{tr}_n,\bm\eta^u, \bm\eta^s)$.
In other words, if  $\phi^{in}(\bm \eta^s)+\phi^{out}(\bm \eta^s) = 0$, we can derive the obtained results can cover all the results of Server Collaboration, no matter what $y_{m_n}$ is.
That is $\frac{1}{N}\sum_{n\in\mathcal{N}} F_n(\bm x, \bm y, \bm z) \geq \frac{1}{N}\sum_{n\in\mathcal{N}} F_n(\bm x, \bm 1, \bm z)$, where $F_n(\bm x, \bm y, \bm z) = \mathbb{E}[A_n]$.

Combining Theorem \ref{thm:single-or-multi} and discussion of constraints, there is a comparable solution
    $\hat{\bm y^*} = \bm 1$.
    which holds $\sum_{n\in\mathcal{N}} F_n(\bm x, \bm y^{*}, \bm z)=\sum_{n\in\mathcal{N}} F_n(\bm x, \hat{\bm y^{*}}, \bm z)$.
Hence, we have 
    \begin{equation}
    \label{eq:fn=f1n}
        \sum_{n\in\mathcal{N}} F_n(\bm x, \hat{\bm y^{*}}, \bm z) \leq \sum_{n\in\mathcal{N}} F_n(\bm x, \bm y, \bm z)\leq \sum_{n\in\mathcal{N}} F_n(\bm x, \bm y, \bm z).
    \end{equation}


Similar to Problem \textbf{P1} and \textbf{P2}, we aim to minimize a highly accurate upper bound for the average expected PAoI of multi-priority users through the following approach.
Thus, one alternative for \textbf{(PoPeC)} is
\begin{equation}
\label{eq:P3-proof}
\begin{split}
    \textbf{(P3)}~&\mathop{\min}_{\bm x, \bm z}
    \frac{1}{N}\sum_{n\in\mathcal{N}} F^1_n(\bm x, \bm z) \\
    \text{s.t.}&~ 
    \eqref{eq:c1},\eqref{eq:c2},
    \eqref{eq:c3a},\eqref{eq:c5a},\eqref{eq:c4a},
    \eqref{eq:c6},
    \eqref{eq:c7b},\eqref{eq:c8b}, 
\end{split}
\end{equation}
where $F^1_n(\bm x, \bm z) = \hat t^\mathrm{tr}_n + \frac{1}{\sum_{c\in\mathcal{C}}p_{n,c}\eta^u_{n,c}\lambda_{n}} + \sum_{m\in\mathcal{M}} \pi_{n,m}(\bm z)$,
$\bm x = \{\hat t^\mathrm{tr}, \bm\eta^u\}$ and $\bm z = \bm\eta^s$.

\section{A Proof of Lemma \ref{lemma:Transform-DataMigration-ChannelAllocation}}\label{app:lemma:Transform-DataMigration-ChannelAllocation}  
\textbf{P3} is
\begin{equation}
\label{eq:P3-copy}
\begin{split}
    &\mathop{\min}_{\bm x, \bm z}
    \frac{1}{N}\sum_{n\in\mathcal{N}} F^1_n(\bm x, \bm z) \\
    =&\mathop{\min}_{\bm x, \bm z}
    \frac{1}{N}\sum_{n\in\mathcal{N}}( F^2_n(\bm x) +F^3_n(\bm z, \bm \lambda^s))\\
    \text{s.t.}&~ 
    \eqref{eq:c1},\eqref{eq:c2},
    \eqref{eq:c3a},\eqref{eq:c4a},
    \eqref{eq:c6},
    \eqref{eq:c7b},\eqref{eq:c8b},\\ 
    \sum_{m' \in \mathcal{M}} &  \eta^s_{m,m'} \lambda^{s}_{\delta,m}=\sum_{n \in \mathcal{N}^{\delta}_m}\sum_{c \in C} p_{n,c} \eta^u_{n,c}\lambda_{n},
\end{split}
\end{equation}
If you decompose the problem in terms of different variables,
$\sum_{m' \in \mathcal{M}}  \eta^s_{m,m'} \lambda^{s}_{\delta,m}$ and $\sum_{n \in \mathcal{N}^{\delta}_m}\sum_{c \in C} p_{n,c} \eta^u_{n,c}\lambda_{n}$ will always remain consistent due to the presence of the constraint \eqref{eq:c5a}
which is equivalent to
\begin{equation}
\label{eq:c-to&tm}
    \lambda^{s}_{\delta,m} \sum_{m' \in \mathcal{M}} \eta^s_{m,m'}  = \sum_{n \in \mathcal{N}^{\delta}_m}\sum_{c \in C} p_{n,c} \eta^u_{n,c}\lambda_{n},
    , \quad \forall m \in \mathcal{M}.
\end{equation}
This makes it simpler to solve for the best $\bm x$ and $\bm z$ because their values don't change when they are solved iteratively.
Therefore, in addition to decomposing the objective functions and constraints as
\begin{align}
    \textbf{(sp1)}~&\mathop{\min}_{\bm x}\frac{1}{N}\sum_{n\in\mathcal{N}} F^2_n(\bm x) \nonumber\\
    \text{s.t.}&~ \eqref{eq:c1},\eqref{eq:c2},\eqref{eq:c5a},\eqref{eq:c6},\eqref{eq:c7b},\eqref{eq:c8b},
\end{align}

\begin{equation}
\begin{split}
    \textbf{(sp2)}~&\mathop{\min}_{\bm z}\frac{1}{N}\sum_{n\in\mathcal{N}} F^3_n(\bm z, \bm \lambda^s) \\
    \text{s.t.}&~ \eqref{eq:c3a},\eqref{eq:c5a},\eqref{eq:c4a},
\end{split}
\end{equation}
constraint \eqref{eq:c5a} in \textbf{sp1} is replaced by the auxiliary inequality \eqref{eq:DataMigration-AuxiliaryInequality2} as shown in \textbf{P3-1},
    \begin{align}
        \label{eq:DataMigration-AuxiliaryInequality2}
             \sum_{n \in \mathcal{N}^{\delta}_m}\sum_{c \in C} p_{n,c} \eta^u_{n,c}\lambda_{n} &\leq \sum_{m' \in \mathcal{M}}  \eta^s_{m,m'} \lambda^{s}_{\delta,m}
    \end{align}
which leads to a better solution for $\bm x$, the reason is as follows
 
    Denote $\bm x_1$, $\bm x_2$, $\bm x_3$ as
    \begin{align}
        \bm x_1 = \mathop{\arg \min}_{\bm x} \Big\{ \sum_{n\in\mathcal{N}} F^2_n(\bm x)  ~ \text{s.t.} ~ \eqref{eq:c1},\eqref{eq:c2},\eqref{eq:c6},\eqref{eq:c7b},\eqref{eq:c8b},\nonumber\\
        \{\sum_{n \in \mathcal{N}^{\delta}_m}\sum_{c \in C} p_{n,c} \eta^u_{n,c}\lambda_{n} \leq \sum_{m' \in \mathcal{M}}  \eta^s_{m,m'} \lambda^{s}_{\delta,m}\} \Big\}\nonumber
    \end{align}
    \begin{align}
        \bm x_2 = \mathop{\arg \min}_{\bm x} \Big\{ \sum_{n\in\mathcal{N}} F^2_n(\bm x)  ~ \text{s.t.} ~ \eqref{eq:c1},\eqref{eq:c2},\eqref{eq:c6},\eqref{eq:c7b},\eqref{eq:c8b},\nonumber\\
        \{\sum_{n \in \mathcal{N}^{\delta}_m}\sum_{c \in C} p_{n,c} \eta^u_{n,c}\lambda_{n} = \sum_{m' \in \mathcal{M}}  \eta^s_{m,m'} \lambda^{s}_{\delta,m}\} \Big\}\nonumber
    \end{align}
    \begin{align}
        \bm x_3 = \mathop{\arg \min}_{\bm x} \Big\{ \sum_{n\in\mathcal{N}} F^2_n(\bm x)  ~ \text{s.t.} ~ \eqref{eq:c1},\eqref{eq:c2},\eqref{eq:c6},\eqref{eq:c7b},\eqref{eq:c8b},\nonumber\\
        \{\sum_{n \in \mathcal{N}^{\delta}_m}\sum_{c \in C} p_{n,c} \eta^u_{n,c}\lambda_{n} < \sum_{m' \in \mathcal{M}}  \eta^s_{m,m'} \lambda^{s}_{\delta,m}\} \Big\}\nonumber
    \end{align}
    
    Obviously, 
    \begin{equation}
        \sum_{n\in\mathcal{N}} F_n(\bm x_1) = \min \{\sum_{n\in\mathcal{N}} F_n(\bm x_2), \sum_{n\in\mathcal{N}} F_n(\bm x_3)\}.\nonumber
    \end{equation}
Meanwhile, the constraint \eqref{eq:c5a} in \textbf{sp2} is unchanged as shown in \textbf{P3-2}, in order to ensure that constraint \eqref{eq:c5a} of \textbf{P3} is satisfied.

\section{}\label{app:DataMigration-MigrationAllocation}
\subsubsection{Problem Transform}\label{app:proposition:DataMigration-MigrationAllocation1}
First, $\bm \lambda^s$ can be obtained by $\lambda^{s}_{\delta,m}=\sum_{n\in\mathcal{N}^{\delta}_m}\sum_{c\in\mathcal{C}} p_{n,c} \lambda_{n,c}$ according to system model.
With given $\bm \lambda^s$, transforming the problem from \textbf{P3-2} to \textbf{3-5}, a similar proof is already discussed in Proposition \ref{pro:npl-mp}.
$\theta^{*}_n$ is achieved if and only if
	\begin{align}
	\mathop{\min}_{\{\bm z_{n,m}\}} \{\pi^{u}_{n,m}(\bm z_{n,m}) - \vartheta^{*}_{n,m} \pi^{l}_{n,m}(\bm z_{n,m})\}
	\nonumber\\ 
       = \pi^{u}_{n,m}(\bm z^*_{n,m}) - \vartheta^{*}_{n,m} \pi^{l}_{n,m}(\bm z^*_{n,m}) = 0.
    \label{eq:NFP-optimal-theta2}
    \end{align}
This is a necessary and sufficient condition.

We can also use Algorithm \ref{alg:NFP} (NFPA) to transform this problem with a slight modification.

\subsubsection{Problem Solving}\label{app:lemma:DataMigration-MigrationAllocation2}
Problems \textbf{P3–5} are challenging to solve since they are still non-convex, but it is simple to identify that they are cubic function problems with a finite range of independent variables taking values. Alternatively put, this is a Lipschitz smooth function with constant $\ell_{n,m}$ satisfying
\begin{equation}
    \ell_{n,m} I \succeq \nabla^{2} \big(\pi^{u}_{n,m}(\bm z_{n,m}) - \vartheta^{*}_{n,m} \pi^{l}_{n,m}(\bm z_{n,m})\big),
\end{equation}
which is like $g^{p}_n$ in Proposition \ref{theorem:NFP_Problem}. 

As a result, we can likewise efficiently handle $\{\bm z_{n,m}\}$ using Algorithm 2 with a few slight adjustments.
\subsubsection{Convergence Analysis}\label{app:proof:DataMigration-MigrationAllocation3}
    A  similar instead of the same result of Convergence Analysis can be obtained by Theorem \ref{theorem:innerConvergence1}.
    If $\rho_n>2\ell_n$, Algorithm \ref{alg:nonconvexadmm} converge to an $\epsilon^{\mathrm{ac}}$-stationary point within $O(1/(p^\mathrm{syn}\epsilon^\mathrm{ac}))$.
    However, in a network with wired communication, the reliability rate of synchronous iterative communication is $p^\mathrm{syn}=1$
    Thus, we derive the NAC algorithm converges within $O(1/\epsilon^{\mathrm{ac}}) in this case$.

\section{A Proof of Theorem \ref{thm:DataMigration-Convergence}}\label{app:thm:DataMigration-Convergence}
    
    First, we see the equivalent optimal solution of the migration $\hat{\bm y^{*}}$ from Theorem \ref{thm:single-or-multi}. Hence, we have these two inequalities as
    \begin{align}
    \begin{split}
        &\frac{1}{N}\sum_{n\in\mathcal{N}} F_n(\bm x^t, \hat{\bm y^{*}}, \bm z^t) \\
        \mathop{=}^{(a)} &\frac{1}{N}\sum_{n\in\mathcal{N}} F^1_n(\bm x^t, \bm z^t) \\
        \mathop{=}^{(b)} &\frac{1}{N}\sum_{n\in\mathcal{N}} (F^2_n(\bm x^{t})+F^3_n(\bm z^{t}, \bm \lambda^{s,t}))\\
        \mathop{\ge}^{(c)} &\frac{1}{N}\sum_{n\in\mathcal{N}} (F^2_n(\bm x^{t+1})+F^3_n(\bm z^{t}, \bm \lambda^{s,t}))\\
        \mathop{\ge}^{(d)} &\frac{1}{N}\sum_{n\in\mathcal{N}} (F^2_n(\bm x^{t+1})+F^3_n(\bm z^{t}, \bm \lambda^{s,t+1}))\\
        \mathop{\ge}^{(e)} &\frac{1}{N}\sum_{n\in\mathcal{N}} (F^2_n(\bm x^{t+1})+F^3_n(\bm z^{t+1}, \bm \lambda^{s,t+1}))\\
        \mathop{=}^{(f)} &\frac{1}{N}\sum_{n\in\mathcal{N}} F^1_n(\bm x^{t+1},\bm z^{t+1})\\
        \mathop{=}^{(g)} &\frac{1}{N}\sum_{n\in\mathcal{N}} F_n(\bm x^{t+1}, \hat{\bm y^{*}}, \bm z^{t+1}).
    \end{split}
    \end{align}
    where 
    (a) and (g) hold according to \eqref{eq:fn=f1n},
    (b) and (f) hold according to Lemma \ref{lemma:Transform-DataMigration-ChannelAllocation},
    (c) holds according to Lemma \ref{lemma:DataMigration-ChannelAllocation},
    (e) holds according to Lemma \ref{lemma:DataMigration-MigrationAllocation2} and the convergence of the NAC algorithm,
    (d) holds and the reason is as follows.
    According to \eqref{eq:DataMigration-AuxiliaryInequality}, we obtain 
    \begin{equation}
    \lambda^{s,t}_{\delta,m}\ge\lambda^{s,t+1}_{\delta,m}, \quad, \forall \delta \in \Delta, \forall m \in \mathcal{M}.
    \end{equation}
    In addition, it is easy to get that 
    \begin{equation}
         \frac{\partial F^3_n(\bm z^{t}, \bm \lambda^{s,t})}{\partial \lambda^{s,t}_{\delta,m}}>0,\quad,\forall \delta \in \Delta, \forall m \in \mathcal{M},
    \end{equation}
    in the domain of the definition of $\bm \lambda^s$.
    Thus, we have $F^3_n(\bm z^{t}, \bm \lambda^{s,t})\ge F^3_n(\bm z^{t}, \bm \lambda^{s,t+1})$.

    Based on the inequalities, we obtain
    \begin{align}
        \frac{1}{N}\sum_{n\in\mathcal{N}} F_n(\bm x^t, \hat{\bm y^{*}}, \bm z^t) \ge \frac{1}{N}\sum_{n\in\mathcal{N}} F_n(\bm x^{t+1}, \hat{\bm y^{*}}, \bm z^{t+1})
    \end{align}
    $F_n(\bm x, \bm y, \bm z)$ is a function with positive values and a lower bound.
    Therefore, $F_n(\bm x, \bm y, \bm z)$ monotonically decreases and converges to a unique point.

{\section{Convergence and convergence rate proof}
\label{app:GapFunction-Convergence}
The analysis provided above diverges from the conventional examination of the ADMM (Alternating Direction Method of Multipliers) algorithm, which typically centers on constraining the distance between the current iteration and the optimal solution set. The preceding analysis draws inspiration in part from our prior study of the ADMM's convergence within the context of multi-block convex problems. In that context, the algorithm's advancement is gauged by the combined reduction in specific primal and dual gaps, as detailed in \cite[Theorem 3.1]{hong2017linear}.
However, the nonconvex nature of the problem introduces challenges in estimating either the primal or dual optimality gaps. Hence, in this context, we've opted to employ the reduction of the augmented Lagrangian as a metric to gauge the algorithm's progress.
Moving forward, we delve into the examination of the iteration complexity pertaining to the basic ADMM. In articulating our outcome, we establish the concept of the augmented Lagrangian function's "proximal gradient."
\begin{align*}
    &\tilde{\nabla} L^{p}\left(\left\{\bm x_{n}\right\}, \bm x_{o}, \bm\sigma^t\right)\nonumber\\
   &=\left[\begin{array}{c}
    \bm x_{o} - \prox_{g^p} \left[x_{o} - \nabla_{\bm x_{o}} \left( L^{p}\left(\left\{\bm x_{n}\right\}, \bm x_{o}, \bm\sigma^t\right) - g^p(\bm x_o) \right)\right]  \\
    \nabla_{\bm x_{1}} L^{p}\left(\left\{\bm x_{n}\right\}, \bm x_{o}, \bm\sigma^t\right) \\
    \vdots \\
    \nabla_{\bm x_{N}} L^{p}\left(\left\{\bm x_{n}\right\}, \bm x_{o}, \bm\sigma^t\right)
    \end{array}\right].
\end{align*}
where $\prox_{h}[z]:=\arg\min_{x} h(\bm x)+\frac{1}{2}\|\bm x-\bm z\|^2$ is the proximity operator. 
We will use the following quantity to measure the progress of the algorithm
$\eta(\bm x^t, \bm\sigma^t):=\|\tilde{\nabla}L^p(\{\bm x^t_n\},\bm x^t_0,\bm\sigma^t)\|^2+\sum_{n\in\mathcal{N}}\|\bm x^{t}_n-\bm x^{t}_0\|^2.$
It can be verified that if $\eta(\bm x^t, \bm\sigma^t)\to 0$, then a stationary solution to the problem is obtained. We have the following iteration complexity result:
\begin{theorem}
    Let $T(\epsilon)$ denote an iteration index in which the following inequality is achieved
    $T(\epsilon):=\min\left\{t\mid \eta(\bm x^t, \bm\sigma^t)\le \epsilon, t\ge 0\right\}$
    for some $\epsilon>0$. Then there exists some constant $k^{\Gamma}>0$ such that
    \begin{align}
    \epsilon\le \frac{k^{\Gamma} (L^p(\{\bm x^1_n\},\bm x^1_0,\bm\sigma^1)-\underline{G^p})}{T(\epsilon)}.
    \end{align}
\end{theorem}
\begin{proof}
    We first show that there exists a constant $\kappa_1>0$ such that
    \begin{align}
        \|\tilde{\nabla}L^p(\{\bm x^t_n\},\bm x^t_0,\bm\sigma^t)\|\le \kappa_1 & \left(\|\bm x_0^{t+1}-\bm x_0^t\|+\sum_{n\in\mathcal{N}}\|\bm x^{t+1}_n-\bm x^t_n\|\right), \nonumber\\ 
        &\forall~r\ge 1. \label{eq:sigma}
    \end{align}
    This proof follows similar steps of \cite[Lemma 2.5]{hong2017linear}. From the optimality condition of the $\bm x_0$, we have
    $\bm x^{t+1}_0=\prox_{g^p}\left[ \bm x^{t+1}_0-\sum_{n\in\mathcal{N}}\rho_n\left(\bm x^{t+1}_0-\bm x^t_n-\frac{\bm\sigma^t_n}{\rho_n}\right)  \right].$
    This implies that
    \begin{align}
        &\|\bm x^t_0-\prox_{h}\left[\bm x^t_0-\nabla_{\bm x_0}(L^p(\{\bm x^t_n\},\bm x^t_0,\bm\sigma^t)-g^p(\bm x^t_0))\right]\|\nonumber\\
        &=\left\|\bm x^{t}_0-\bm x^{t+1}_0+\bm x^{t+1}_0-\prox_{g^p}\left[\bm x^t_0-\sum_{n\in\mathcal{N}}\rho_n(\bm x^t_0-\bm x^{t}_n-\frac{\bm\sigma^t_n}{\rho_n})\right]\right\|\nonumber\\
        &\le \|\bm x^{t}_0-\bm x^{t+1}_0\|+\Bigg\|\prox_{g^p}\left[ \bm x^{t+1}_0-\sum_{n\in\mathcal{N}}\rho_n\left(\bm x^{t+1}_0-\bm x^t_n-\frac{\bm\sigma^t_n}{\rho_n}\right)  \right]\nonumber\\
        &\quad\quad\quad\quad-\prox_{g^p}\left[\bm x^t_0-\sum_{n\in\mathcal{N}}\rho_n(\bm x^t_0-\bm x^{t}_n-\frac{\bm\sigma^t_n}{\rho_n})\right]\Bigg\|\nonumber\\
        &\le 2\|\bm x^{t+1}_0-\bm x^t_0\|+\sum_{n\in\mathcal{N}}\rho_n\|\bm x^t_0-\bm x^{t+1}_0\|,
        \label{eq:prox1}
    \end{align}
    where in the last inequality we have used the nonexpansiveness of the proximity operator.
    {
    Similarly, the optimality condition of the $\bm x_n$ subproblem is given by
    $$\nabla g^p_n(\bm x^{t+1}_n)+\rho_n\left(\bm x^{t+1}_n-\bm x^{t+1}_0+\frac{\bm\sigma^{t}_n}{\rho_n}\right)=0.$$
    Therefore, we derive
    \begin{align}
    &\|\nabla_{\bm x_n}L^p(\{\bm x^t_n\},\bm x^t_0,\bm\sigma^t)\|\nonumber\\
    &=\|\nabla g^p_n(\bm x^{t}_n)+\rho_n(\bm x^{t}_n-\bm x^{t}_0+\frac{\bm\sigma^{t}_n}{\rho_n})\|\nonumber\\
    &= \|\left(\nabla g^p_n(\bm x^{t}_n)+\rho_n(\bm x^{t}_n-\bm x^{t}_0+\frac{\bm\sigma^{t}_n}{\rho_n})\right)\nonumber\\
    &- \left(\nabla g^p_n(\bm x^{t+1}_n)+\rho_n(\bm x^{t+1}_n-\bm x^{t+1}_0+\frac{\bm\sigma^{t}_n}{\rho_n})\right)\|\nonumber\\
    &\le (L_n+\rho_n)\|\bm x^t_n-\bm x^{t+1}_n\|+\rho_n\|\bm x^t_0-\bm x^{t+1}_0\|\label{eq:prox2}.
    \end{align}
    }
    Therefore, combining \eqref{eq:prox1} and \eqref{eq:prox2}, we have
    \begin{align}
      \|\tilde{\nabla}L^p(\{\bm x^t_n\},\bm x^t_0,\bm\sigma^t)\|\le \left(2+\sum_{n\in\mathcal{N}}2\rho_n\right)\|\bm x^t_0-\bm x^{t+1}_0\| \nonumber\\
      +\sum_{n\in\mathcal{N}}(L_n+\rho_n)\|\bm x^t_n-\bm x^{t+1}_n\|.\label{eq:estimate:nL}
    \end{align}
    By taking $\kappa_1=\max\left\{(2+ \sum_{n\in\mathcal{N}}2\rho_n), L_1+\rho_1, \cdots, L_n+\rho_n\right\}$, \eqref{eq:sigma} is proved.
    According to \cite[Lemma 2.1]{hong2016convergence}, we obtain
    \begin{align}
    \sum_{n\in\mathcal{N}}\|\bm x^t_n-\bm x^t_0\| = \sum_{n\in\mathcal{N}}\frac{1}{\rho_n}\|\bm\sigma^{t+1}_n-\bm\sigma^t_n\| \le \sum_{n\in\mathcal{N}}\frac{L_n}{\rho_n}\|\bm x_n^{t+1}-\bm x^t_n\|.\label{eq:estimate:x}
    \end{align}
    The inequalities \eqref{eq:estimate:nL} -- \eqref{eq:estimate:x} implies that for some $\kappa_3>0$
    \begin{align}
    &\sum_{n\in\mathcal{N}}\|\bm x^t_n-\bm x^t_0\|^2 + \|\tilde{\nabla}L^p(\{\bm x^t_n\},\bm x^t_0,\bm\sigma^t)\|^2\nonumber\\
    &\le \kappa_3\left(\|\bm x^t_0-\bm x^{t+1}_0\|^2+\sum_{n\in\mathcal{N}}\|\bm x^t_n-\bm x^{t+1}_n\|^2\right). \label{eq:total:square}
    \end{align}
    According to Lemma \cite[Lemma 2.2]{hong2016convergence}, there exists a constant $\kappa_2 = \min\left\{\{ \frac{\gamma_n(\rho_n)}{2}-\frac{L^2_n}{\rho_n}\}_{n\in\mathcal{N}}, \frac{\gamma}{2} \right\}$
    such that
    \begin{align}
    &L^p(\{\bm x^{t}_n\}, x_0^{t}; \bm\sigma^{t})-L^p(\{\bm x^{t+1}_n\}, x_0^{t+1}; \bm\sigma^{t+1})\nonumber\\
    &\ge\kappa_2\left(\sum_{n\in\mathcal{N}}\|\bm x^{t+1}_n-\bm x_{k}^{t}\|^2+\|\bm x_0^{t+1}-\bm x_0^t\|^2\right).\label{eq:estimate:L}
    \end{align}
    Combining \eqref{eq:total:square} and \eqref{eq:estimate:L} we get
    \begin{align}
      &\sum_{n\in\mathcal{N}}\|\bm x^t_n-\bm x^t_0\|^2 + \|\tilde{\nabla}L^p(\{\bm x^t_n\},\bm x^t_0,\bm\sigma^t)\|^2\nonumber\\
      &\le \frac{\kappa_3}{\kappa_2}\left(L^p(\{\bm x^{t}_n\}, x_0^{t}; \bm\sigma^{t})-L^p(\{\bm x^{t+1}_n\}, x_0^{t+1}; \bm\sigma^{t+1})\right)\nonumber.
    \end{align}
    Summing both sides of the inequality above for $t=1,\cdots, r$, we obtain
    \begin{align}
    &\sum_{t=1}^{r}\sum_{n\in\mathcal{N}}\|\bm x^t_n-\bm x^t_0\|^2 + \|\tilde{\nabla}L^p(\{\bm x^t_n\},\bm x^t_0,\bm\sigma^t)\|^2\nonumber\\
      &\le \frac{\kappa_3}{\kappa_2}\left(L^p(\{\bm x^{1}_n\}, x_0^{1}; \bm\sigma^{1})-L^p(\{\bm x^{r+1}_n\}, x_0^{r+1}; \bm\sigma^{r+1})\right)\nonumber\\
    &\le \frac{\kappa_3}{\kappa_2}\left(L^p(\{\bm x^{1}_n\}, x_0^{1}; \bm\sigma^{1})-\underline{G^p}\right).\nonumber
    \end{align}
    Rewriting the final inequality, it becomes evident that we leverage the property that $L^p({\bm x^{r+1}_n}, x_0^{r+1}; \bm\sigma^{r+1})$ exhibits a decreasing trend while remaining above the lower bound $\underline{G^p}$, as previously established in \cite[Lemmas 2.2–2.3]{hong2016convergence}.
    By utilizing the definitions of $T(\epsilon)$ and $\eta(\bm x^t, \bm\sigma^t)$, the above inequality becomes
    \begin{align}
    &T(\epsilon)\epsilon\le \frac{\kappa_3}{\kappa_2}\left(L^p(\{\bm x^{1}_n\}, x_0^{1}; \bm\sigma^{1})-\underline{G^p}\right)
    \end{align}
    By dividing both sides of the equation by $T(\epsilon)$ and choosing $C=\kappa_3/\kappa_2$, the intended outcome is achieved.
\end{proof}
}

{\section{A Proof of NP-hardness}\label{app:NP-hard}
In this case of multi-priority, we need to determine an appropriate offloading policy and service rules to accomplish task scheduling, considering the multi-objective optimization objectives of PAoI and priority task emphasis, which is a classical multi-objective single-machine scheduling problem with sequence-dependent setup times.  
Such a problem can be transformed into a Multi-objective Traveling Salesman Problem (MOTSP)~\cite{yalaoui2003efficient}.
MOTSP is an extended instance of a traveling salesman problem (TSP).
Thus, the case of multi-priority can be reduced to TSP.
\begin{theorem}
\label{thm:np-complete}
    The traveling salesman problem is NP-complete.
\end{theorem}
\begin{proof}
\textit{Verification of TSP Membership in NP:}
We first establish that the Traveling Salesman Problem (TSP) belongs to the class of decision problems that can be verified in polynomial time. The verification process uses a certificate, which is a sequence of n vertices representing a tour. The algorithm checks whether this sequence contains each vertex exactly once, computes the sum of edge costs, and verifies that this sum is at most k. This verification process can certainly be performed in polynomial time.\\
\textit{Proving TSP is NP-hard:}
To demonstrate that TSP is NP-hard, we establish a reduction from the Hamiltonian Cycle problem (HAM-CYCLE). 
Let $G = (V, E)$ be an instance of HAM-CYCLE. We construct an instance of TSP as follows: 
We create the complete graph $G' = (V, $E'$)$, 
where $E' = \{(i, j): i, j \in V \& i \neq j\}$, 
and we define the cost function c as follows:
\begin{equation}
    c(i, j) = \begin{cases}
    0 & \text{if }(i, j) \in E \\
    1 & \text{if }(i, j) \notin E
    \end{cases}
\end{equation}
Note that because $G$ is undirected, it has no self-loops, so $c(v, v) = 1$ for all vertices $v \in V$. The instance of TSP is then $(G', c, 0)$, which can be easily created in polynomial time.
Now, we demonstrate that graph $G$ has a Hamiltonian cycle if and only if graph $G'$ has a tour of cost at most 0.
Suppose that graph $G$ has a Hamiltonian cycle $h$. Each edge in $h$ belongs to $E$ and thus has cost 0 in $G'$. Thus, $h$ is a tour in $G'$ with cost 0.
Conversely, suppose that graph $G$' has a tour $h'$ of cost at most 0. Since the costs of the edges in $E'$ are 0 and 1, the cost of the tour $h'$ is exactly 0, and each edge on the tour must have cost 0. Therefore, $h'$ contains only edges in $E$. We conclude that $h'$ is a Hamiltonian cycle in graph $G$.
\end{proof}
According to Theorem \ref{thm:np-complete}, TSP is an NP-complete and NP-hard problem.
Thus, the case of multi-priority is an NP-hard problem.
In summary, unless P = NP, solving such a problem cannot be achieved within polynomial time~\cite{blondel1997np}.}

{\section{Why Multi-Class }\label{app:proof:sp-mp}  
\subsection{The Highest-class priority user vs. priority-free user}\label{app:proof:sp-m:a}
For the highest-class priority user $n^*$, we have
\begin{align}
    &\mathbb{E}[A_{n^*}] - \mathbb{E}[A^{p}_{n^*}] \nonumber\\
    =& (\mathbb{E}[T_{n^*}]+\mathbb{E}[I_{n^*}]+\mathbb{E}[W_{n^*}]+\mathbb{E}[Y_{n^*}])\nonumber\\
    &-(\mathbb{E}[T^{p}_{n^*}]+\mathbb{E}[I^{p}_{n^*}]+\mathbb{E}[W^{p}_{n^*}]+\mathbb{E}[Y^{p}_{n^*}])\nonumber\\
    \mathop{=}^{(a)}& \mathbb{E}[W_{n^*}]-\mathbb{E}[W^{p}_{n^*}]\nonumber\\
    =& \Upsilon(\bm\eta^u) \frac{\zeta_{\Delta}(\bm\eta^u)-\zeta_{\delta(n^*)}(\bm\eta^u)}{(1-\zeta_{\Delta}(\bm\eta^u))(1-\zeta_{\delta(n^*)}(\bm\eta^u))} \nonumber\\
    \mathop{\geq}^{(b)}& 0.
\end{align}
where 
$\zeta_{\delta(n^*)}(\bm\eta^u) =\sum_{\delta\in\Delta(\delta(n^*))}\sum_{n'\in\mathcal{N}^{\delta}}\sum_{c\in\mathcal{C}}p_{n',c}\frac{\eta^u_{n',c}\lambda_{n'}}{\mu_{n'}}$,  
$\zeta_{\Delta}(\bm\eta^u) =\sum_{\delta\in\Delta}\sum_{n'\in\mathcal{N}^{\delta}}\sum_{c\in\mathcal{C}}p_{n',c}\frac{\eta^u_{n',c}\lambda_{n'}}{\mu_{n'}}$,  and
$\Upsilon(\bm\eta^u) = \frac{1}{2}\sum_{\delta\in\Delta}\sum_{n'\in\mathcal{N}^{\delta}}\sum_{c\in\mathcal{C}}p_{n',c}\eta^u_{n',c}\lambda_{n'}\nu_{n'}$.
}

{(a) holds because we are contrasting in the context of the same strategy and at this point $\mathbb{E}[T_{n^*}]=\mathbb{E}[T^p_{n^*}]$, $\mathbb{E}[Y_{n^*}]=\mathbb{E}[Y^p_{n^*}]$ and $\mathbb{E}[I_{n^*}]=\mathbb{E}[I^p_{n^*}]$.
Inequality (b) holds because $\zeta_{\delta(n^*)}(\bm\eta^u)\leq \zeta_{\Delta}(\bm\eta^u)$ and equality sign achieves if and only if $\Delta=\Delta(\delta(n^*))$.
}

{\subsection{The Lowest-class priority user vs. priority-free user}\label{app:proof:sp-m:b}
For the lowest-class priority user $n_*$, we derive
\begin{align}
    &\mathbb{E}[A_{n_*}] - \mathbb{E}[A^{p}_{n_*}] \nonumber\\
    =& (\mathbb{E}[T_{n_*}]+\mathbb{E}[I_{n_*}]+\mathbb{E}[W_{n_*}]+\mathbb{E}[Y_{n_*}])\nonumber\\
    &-(\mathbb{E}[T^{p}_{n_*}]+\mathbb{E}[I^{p}_{n_*}]+\mathbb{E}[W^{p}_{n_*}]+\mathbb{E}[Y^{p}_{n_*}])\nonumber\\
    \mathop{=}^{}& \mathbb{E}[W_{n_*}]-\mathbb{E}[W^{p}_{n_*}]\nonumber\\
    =& \frac{\Upsilon(\bm\eta^u)}{1-\zeta_{\Delta}(\bm\eta^u)}(1-\frac{1}{1-\zeta_{\delta(n)-1}(\bm\eta^u)})\nonumber\\
    \mathop{\leq}^{(c)}& 0.
\end{align}
Inequality (c) holds because $\zeta_{\delta(n^*)}(\bm\eta^u), \zeta_{\Delta}(\bm\eta^u) \in (0,1)$ and equality sign achieves if and only if $\Delta=\Delta(\delta(n^*))$.
\subsection{Comparison between users with different priority levels}\label{app:proof:sp-m:c}
For user $n$ who belong to the priority level from $\delta^0$ to $\delta^0-\hat\delta$, we obtain
\begin{align}
    &    \mathbb{E}[A^{p}_{n}|\delta(n)=\delta^0] - \mathbb{E}[A^{p}_{n}|\delta(n)=\delta^0-\hat\delta] \nonumber\\
    =& \mathbb{E}[W^{p}_{n}|\delta(n)=\delta^0] - \mathbb{E}[W^{p}_{n}|\delta(n)=\delta^0-\hat\delta]\nonumber\\
    =& \frac{\Upsilon(\bm\eta^u)}{(1-\zeta_{\delta^0}(\bm\eta^u))(1-\zeta_{\delta^0 - 1}(\bm\eta^u))}\nonumber\\
    &- \frac{\Upsilon(\bm\eta^u)}{(1-\zeta_{\delta^0 - \hat\delta}(\bm\eta^u))(1-\zeta_{\delta^0 - \hat\delta - 1}(\bm\eta^u))} \nonumber\\
    \ge& \frac{\Upsilon(\bm\eta^u)(\zeta_{\delta^0}(\bm\eta^u)-\zeta_{\delta^0 - \hat\delta - 1}(\bm\eta^u))}{(1-\zeta_{\delta^0}(\bm\eta^u))(1-\zeta_{\delta^0 - 1}(\bm\eta^u))(1-\zeta_{\delta^0 - \hat\delta - 1}(\bm\eta^u))}\nonumber\\
    \mathop{\geq}^{(d)}& 0.
\end{align}
where holds because $\zeta_{\delta^0}(\bm\eta^u)\geq\zeta_{\delta^0 - \hat\delta}(\bm\eta^u)$ and $\zeta_{\delta^0 - 1}(\bm\eta^u)\geq\zeta_{\delta^0 - \hat\delta -1}(\bm\eta^u)$.
}

\end{appendices}

\end{document}